\definecolor{medium-blue}{rgb}{0,0,.8}
\newcommand{\Z}{\mathbb{Z}}
\newcommand{\R}{\mathbb{R}}
\newcommand{\C}{\mathbb{C}}
\newcommand{\cC}{\mathcal{C}}
\newcommand{\bbX}{\mathbb{X}}
\newcommand{\bbH}{\mathbb{H}}
\newcommand{\cM}{\mathcal{M}}
\newcommand{\cP}{\mathcal{P}}
\newcommand{\cD}{\mathcal{D}}
\newcommand{\cE}{\mathcal{E}}
\newcommand{\GL}{\operatorname{GL}}
\newcommand{\PSL}{\operatorname{PSL}}
\newcommand{\ch}{\operatorname{ch}}
\newcommand{\Span}{\operatorname{span}}
\newcommand{\Rep}{\operatorname{Rep}}
\newcommand{\rank}{\operatorname{rank}}
\newcommand{\rk}{\operatorname{rank}}
\newcommand{\Fib}{\operatorname{Fib}}
\newcommand{\Semion}{\operatorname{Semion}}
\newcommand{\LeeYang}{\operatorname{Yang-Lee}}
\newcommand{\DHaag}{\operatorname{DHaag}}
\newcommand{\oVec}{\operatorname{Vec}}
\newcommand{\abs}[1]{\left| #1 \right|}
\newenvironment{psmallmatrix}
  {\left(\begin{smallmatrix}}
  {\end{smallmatrix}\right)}
\newcommand{\sfr}[2]{\leavevmode\kern-.1em
  \raise.5ex\hbox{\the\scriptfont0 #1}\kern-.1em
  /\kern-.15em\lower.25ex\hbox{\the\scriptfont0 #2}}
\numberwithin{equation}{section}
\newtheorem*{maintheorem}{Main Theorem}
\newtheorem{theorem}{Theorem}[section]
\newtheorem{lemma}[theorem]{Lemma}
\newtheorem{lem}[theorem]{Lemma}
\theoremstyle{definition}
\theoremstyle{remark}
\newenvironment{tablenofloat}{\captionsetup{type=table}}{}
\begin{document}
\title[]{Classification of extremal vertex operator algebras with two simple modules}

\author{J. Connor Grady}
\address{Grady: Department of Mathematics, University of Illinois at Urbana-Champaign,
Urbana, IL 61801 USA}
\email{jcgrady2@illinois.edu}

\author{Ching Hung Lam}
\address{Lam: Institute of Mathematics, Academia Sinica, Taipei 10617, Taiwan}
\email{chlam@math.sinica.edu.tw}

\author{James E. Tener}
\address{Tener: Mathematical Sciences Institute, The Australian National University, Canberra, ACT 2600, Australia}
\email{james.tener@anu.edu.au}

\author{Hiroshi Yamauchi}
\address{Yamauchi: Department of Mathematics, Tokyo Woman's Christian University, 2-6-1 Zempukuji, Suginami-ku, Tokyo 167-8585, Japan}
\email{yamauchi@lab.twcu.ac.jp}

\thanks{
The authors would like to thank Kyoto RIMS for its hospitality which facilitated this collaboration.
The third author would also like to thank Terry Gannon for helpful discussions related to this article.
The second author was partially supported by research grants AS-IA-107-M02 from  Academia Sinica, Taiwan and 104-2115-M-001-004-MY3 from Ministry of Sciences and Technology of Taiwan.
The third author was supported in part by an AMS-Simons Travel Grant and by Australian Research Council Discovery Project DP200100067.
The fourth author was partially supported by JSPS KAKENHI grant No.19K03409.}


\begin{abstract}
In recent work, Wang and the third author defined a class of `extremal' vertex operator algebras (VOAs), consisting of those with at least two simple modules and conformal dimensions as large as possible for the central charge.
In this article we show that there are exactly 15 character vectors of extremal VOAs with two simple modules.
All but one of the 15 character vectors is realized by a previously known VOA.
The last character vector is realized by a new VOA with central charge 33.
\end{abstract}

\maketitle

\section{Introduction}
In 1988, a foundational article of Mathur, Mukhi, and Sen \cite{MathurMukhiSen88} pioneered an approach for the classification of rational chiral conformal field theories (CFTs).
They studied the graded dimension functions, or characters, of CFTs by showing that they satisfied what are now called modular linear differential equations (MLDEs).
The classification of CFTs by their characters, via MLDEs and other methods, has continued over the last two decades, and in recent years there has been significant activity surrounding classification of characters of chiral CFTs with two characters.

We study a mathematical version of this classification problem.
We take vertex operator algebras (VOAs) as a mathematical model for chiral CFTs, and for a sufficiently nice (`strongly rational') vertex operator algebra $V$ of central charge $c$, we consider its representation category $\cC = \Rep(V)$, which is a modular tensor category \cite{HuangModularity}.
One can recover the equivalence class of the central charge $c$ mod $8$ from $\cC$, and motivated by this we define an \emph{admissible genus} to be a pair $(\cC, c)$ consisting of a modular tensor category and a number $c$ in the appropriate class mod $8$ (c.f. \cite{Hoehn03}).
It is natural to approach the problem of classification of VOAs and their characters by restricting to genera where both $\cC$ and $c$ are sufficiently small in an appropriate sense.

We will take the rank of an MTC (i.e. the number of simple objects) as our measure of its size.
The smallest MTC is the trivial one, $\oVec$, and a genus $(\oVec, c)$ is admissible when $c \equiv 0$ mod $8$.
There are a total of three VOAs in the genera $(\oVec, 8)$ and $(\oVec, 16)$, arising from even unimodular lattices of rank $8$ and $16$.
The problem becomes interesting at $(\oVec, 24)$, where the classification of characters of VOAs can be read off from Schellekens' famous list \cite{Schellekens93} (see also \cite{vanEkerenMoellerScheithauer20}).
The classification of VOAs in $(\oVec, 24)$ is almost complete, except that the uniqueness of VOA(s) with the same character as the Moonshine VOA has not been established.
At $(\oVec, 32)$ the explicit classification problem of characters is already intractable, even just for the simplest examples coming from even unimodular lattices.

This difficulty propagates to the study of genera $(\cC, c)$ with $c$ large, as for any fixed $V \in (\cC, c)$ one obtains from every $W \in (\oVec, 32)$ a new VOA $V \otimes W \in (\cC, c+32)$.
Thus if one wishes to consider classification problems for higher central charge and $\rank(\cC) > 1$, it is necessary to restrict to a class of VOAs which excludes VOAs like $V \otimes W$.
There is a natural notion of `primeness' that one could consider in this context, but we will consider something slightly different.

The twist of a simple module $M \in \Rep(V)$ is given by $\theta_M = e^{2 \pi i h}$, where $h$ is the lowest conformal dimension of states in $M$.
Thus for any (hypothetical) VOA $V \in (\cC, c)$, we can recover the conformal dimensions of simple objects, mod 1.
Moreover, there is an a priori bound \cite{MathurMukhiSen88, Mason07} on the conformal dimensions:%
\footnote{%
We are assuming that the characters of $V$ are linearly independent, which will always hold in the situation under consideration in this article, when $V$ has exactly two simple modules}%
\begin{equation}\label{eqn: ell}
\ell :=  \binom{n}{2} + \frac{nc}{4} - 6\sum_{j=0}^{n-1} h_j \ge 0
\end{equation}
where $V=M_0, \ldots, M_{n-1}$ are a complete list of simple $V$ modules and $h_j$ is the lowest conformal dimension of $M_j$.
Moreover, $\ell$ is an integer.
A strongly rational VOA $V \in (\cC, c)$ with $\rank(\cC) > 1$ is called \emph{extremal} \cite{TenerWang17} if $\sum h_j$ is as large as possible for $c$ in light of \eqref{eqn: ell}.
This is analogous to the extremality condition introduced by H\"{o}hn for holomorphic VOAs (i.e. VOAs $V$ with $\Rep(V) = \oVec$) \cite{Hoehn95}.
Since the $h_j$ are determined mod $1$ by $\cC$, extremality is equivalent to the condition $\ell < 6$.

The classification of (characters of) extremal non-holomorphic VOAs appears to be a tractable piece of the unrestricted general classification problem.
In \cite{TenerWang17}, it was demonstrated that when $\rank(\cC)$ is 2 or 3, then the characters of a VOA are determined by its genus, and a list of potential character vectors was obtained up to central charge $48$.
In this article we give a complete classification of characters of extremal VOAs $V$ with $\rank(\Rep(V)) = 2$, with no restriction on the central charge.

\begin{maintheorem}
There are exactly 15 character vectors of strongly rational extremal (i.e. $\ell < 6$) VOAs with exactly two simple modules.
These characters are listed in Table \ref{tab: characters}.
\end{maintheorem}

This theorem appears in the main body of the text as Theorem \ref{thm: main theorem}.
Of the 15 character vectors in Table \ref{tab: characters}, all but one are realized by previously-known VOAs.
The remaining case, corresponding to the genus $(\Semion, 33)$, is realized by a new VOA constructed in Section \ref{sec: new VOA}.
 
This article fits into a cluster of activity regarding classification of VOAs with two simple modules (or, more generally, two characters).
Just recently, Mason, Nagatomo, and Sakai \cite{MasonNagatomoSakai18ax} used MLDEs to establish a classification result for VOAs with two simple modules satisfying certain additional properties, in the $\ell=0$ regime.
Our results extend the Mason-Nagatomo-Sakai classification to the case $\ell < 6$, although in contrast we only consider the problem of classifying characters of VOAs.
The exceptional $c=33$ character vector has $\ell = 4$, and thus did not appear in earlier classifications.

The Mason-Nagatomo-Sakai classification builds upon work of Franc and Mason  \cite{MasonVVMF08,FrancMason14} which describes solutions to MLDEs in rank 2 in terms of hypergeometric series.
In contrast, our approach is to use the general theory of vector-valued modular forms to derive an explicit recurrence between potential character vectors in the genus $(\cC, c)$ and those in $(\cC, c \pm 24)$.
By studying the long-term behavior of this recurrence, we are able to obtain effective bounds on the possible central charges of extremal VOAs.
While we only consider the case of VOAs with two simple modules in this article, our method does not rely on any special features of rank 2 MLDEs (such as a hypergeometric formula), and in future work we hope to apply the same techniques in higher rank.
For this reason we avoid any explicit use of the hypergeometric series formulas.

Section \ref{sec: 3 characters of VOAs with two simple modules} of this article is an adaptation of the undergraduate thesis \cite{GradyThesis} of the first author, which obtained a classification of characters for extremal VOAs with two simple modules, and which focused on the case $c,h \ge 0$.
Not long after the thesis was published online, an article \cite{ChandraMukhi2019} in the physics literature used MLDEs to obtain a classification similar to the one presented here, without having been aware of \cite{GradyThesis}.

The article is organized as follows.
In Section \ref{sec: 2 rank two categories}, we review the classification of rank two modular tensor categories and modular data from the perspective of VOAs.
In Section \ref{sec: 31 characters and vvmfs}, we review the tools from \cite{BG} which we will use to describe character vectors of VOAs.
In Section \ref{sec: 32 general recurrence}, we derive a recurrence relation which describes how characteristic matrices change under the transformation $c \mapsto c \pm 24$.
In Sections \ref{sec: 33 positive c} and \ref{sec: 34 negative c}, we study the long-term behavior of this recurrence in the positive $c$ and negative $c$ situations, respectively, and in Section \ref{sec: 35 main results} we put these tools together to obtain our main theorem.
Section \ref{sec: new VOA} provides a construction of an extremal VOA in the genus $(\Semion, 33)$.
Finally, in Appendix \ref{sec: A data} we give tables of numerical data used in the proof of the main theorem, as well as all 15 extremal characters in rank two.

\section{Rank two modular tensor categories}\label{sec: 2 rank two categories}

In this article we will consider VOAs which are simple, of CFT type, self-dual, and regular (or equivalently, rational and $C_2$-cofinite \cite{ABD04}).
For brevity, we will use the term \emph{strongly rational} to describe such VOAs.
We refer the reader to \cite{DLM97,ABD04} for background on the adjectives under consideration, but we will explain here the consequences which are relevant for our work.

A strongly rational VOA $V$ possesses finitely many simple modules $V = M_0, M_1, \ldots, M_n$.
We denote the category of $V$-modules by $\Rep(V)$, and write $\rk(\Rep(V))$ for the number of simple modules $n+1$.
We will assume throughout that every module $M_j$ is self-dual, as it simplifies the exposition and is satisfied in the rank two case.

We are primarily interested in the characters of $V$,
$$
\ch_j(\tau) = q^{-c/24} \sum_{n=0}^\infty \dim M_j(n+h_j) \, q^{n+h_j}
$$
where as usual $q=e^{2 \pi i \tau}$, $c$ is the central charge of $V$, $h_j$ is the smallest conformal dimension occurring in $M_j$, and $M_j(n+h_j)$ is the space of states of conformal dimension $n+h_i$.
The foundational work of Zhu \cite{Zhu96} demonstrated that the characters $\ch_j$ define holomorphic functions on the upper half-plane, and that their span is invariant under the action of the modular group.
Thus if we set
$$
\ch(\tau) = \begin{pmatrix} \ch_0 \\ \vdots \\ \ch_n \end{pmatrix},
$$
there exists a representation $\rho_V:\PSL(2,\Z) \to \GL(n+1,\C)$ such that
\begin{equation}\label{eqn: modular invariance}
\ch(\gamma \cdot \tau) = \rho_V(\gamma) \ch(\tau)
\end{equation}
for all $\gamma \in \PSL(2,\Z)$ (recall that we assumed each $M_i$ to be self-dual).
Here $\gamma \cdot \tau$ denotes the natural action of $\PSL(2,\Z)$ on the upper half-plane.

By the work of Huang (\cite{HuangModularity}, see also \cite{Huang05}), $\Rep(V)$ is naturally a modular tensor category, and based on Huang's work Dong-Lin-Ng \cite{DongLinNg15} showed that Zhu's modular invariance is encoded by the $S$ and $T$ matrices of $\Rep(V)$ (see \cite{EGNOBook} for more detail on the $S$ and $T$ matrices of a modular tensor category).
Recall that the normalization of $S$ is only canonical up to a sign, and that for each choice of $S$ the normalization of $T$ is only canonical up to a third root of unitary.
By \cite[Thm. 3.10]{DongLinNg15} (based on \cite{HuangModularity}), we have that $\rho_V\begin{psmallmatrix}0 & 	-1\\ 1 & 0\end{psmallmatrix}$ coincides with a normalization of the categorical $S$ matrix of $\Rep(V)$, and it is straightforward to check directly that 
\begin{equation}\label{eqn: rho T}
\rho_V\begin{psmallmatrix}1 & 1\\ 0 & 1 \end{psmallmatrix} = e^{-2 \pi i c/24}(\delta_{j,k} e^{2 \pi i h_j} )_{j,k}.
\end{equation}

We now consider strongly rational VOAs $V$ such that $\rank(\Rep(V)) = 2$.
We will sometimes write $h$ instead of $h_1$ for the non-trivial lowest conformal dimension, and similarly we will sometimes write $M$ instead of $M_1$.
The classification of modular tensor categories of rank 2 was obtained in \cite{RSW}, and the complete list of normalized $S$ matrices is:
\begin{equation}\label{eqn: possible S with signs}
\pm \frac{1}{\sqrt{2}} \begin{pmatrix} 1 & \varepsilon\\ \varepsilon & -1\end{pmatrix} \qquad \mbox{ and } \qquad \pm \frac{1}{\sqrt{2+\alpha}} \begin{pmatrix} 1 & \alpha\\ \alpha & -1\end{pmatrix}
\end{equation}
where $\varepsilon^2 = 1$ and $\alpha^2 = 1 + \alpha$.
This list coincides with the complete list of $S$ matrices for two-dimensional congruence representations of the modular group earlier obtained by Mason \cite[\S3]{Mason07}.

Observe that 
$$
e^{2\pi i c/24} \, \ch_j(i) =  \sum_{n=0}^\infty \dim M_j(n+h_j) \, e^{-2 \pi(n+h_j)} \,>\, 0
$$ 
and thus the phase of $\ch_j(i)$ is independent of $j$.
By \eqref{eqn: modular invariance}, $\rho\begin{psmallmatrix}0 & 	-1\\ 1 & 0\end{psmallmatrix} \ch(i) = \ch(i)$, and thus $\rho\begin{psmallmatrix}0 & 	-1\\ 1 & 0\end{psmallmatrix}$ fixes a vector all of whose entries have the same phase.
This observation allows us to refine \eqref{eqn: possible S with signs} and conclude that if $\rk(\Rep(V)) = 2$ then $\rho\begin{psmallmatrix}0 & 	-1\\ 1 & 0\end{psmallmatrix}$ must be one of:%
\footnote{%
The authors thank Terry Gannon for pointing out that such a refinement is possible.%
}%
\begin{equation}\label{eqn: possible S}
\frac{1}{\sqrt{2}} \begin{pmatrix} 1 & 1\\ 1 & -1\end{pmatrix},
\quad
\frac{1}{\sqrt{2}} \begin{pmatrix} -1 & 1\\ 1 & 1\end{pmatrix},
\quad
\frac{1}{\sqrt{2+\varphi}} \begin{pmatrix} 1 & \varphi\\ \varphi & -1\end{pmatrix},
\quad
\frac{1}{\sqrt{3-\varphi}} \begin{pmatrix} -1 & \varphi\!-\!1\\ \varphi\!-\!1 & 1\end{pmatrix}
\end{equation}
where we use positive square roots and $\varphi = \frac{1 + \sqrt{5}}{2}$ is the golden ratio.

By the classification of \cite{RSW}, there are exactly two modular tensor categories realizing each of \ref{eqn: possible S} as a normalization of its $S$ matrix, and these two are related by a reversal of the braiding.
Fix one of these 8 modular tensor categories $\cC$ and its normalized $S$-matrix from \eqref{eqn: rho T}.
We wish to see how much information about a hypothetical $V$ with $\Rep(V) = \cC$ we may recover.
By definition, the non-normalized $T$ matrix of $\cC$ is the diagonal matrix $e^{2 \pi i h_j} \delta_{j,k}$, and thus the equivalence class of $h$ mod $1$ is determined by $\cC$.
Observe that if $(S,T)$ are generators of a representation of $\PSL(2,\Z)$ then $(S,\zeta T)$ again generate a representation only if $\zeta^3 = 1$, and thus from \eqref{eqn: rho T} we can see that $c$ mod $8$ is determined by $\cC$ as well.

We summarize the 8 cases in Table \ref{tab: 8 cases}.
Each row corresponds to a modular tensor category, giving its normalized $S$ matrix from \eqref{eqn: possible S}, the equivalence classes of central charge and minimal conformal weight of a hypothetical VOA realization, as well as a familiar name for the category and a VOA realizing the category, where appropriate/known.

\begin{table}[h!]
\renewcommand*{\arraystretch}{1.2}
\begin{tabular}{|c|c|c|c|c|c|}
\hline
\# & $S$ & $c$ mod $8$ & $h$ mod $1$ & Name & Extremal realization\\
\hline
1 & $\frac{1}{\sqrt{2}} \begin{psmallmatrix} 1 & 1\\ 1 & -1\end{psmallmatrix}$& 1 & $\frac14$ & $\Semion$ & $A_{1,1}$ at $c=1$\\
\hline
2 & $\frac{1}{\sqrt{2}} \begin{psmallmatrix} 1 & 1\\ 1 & -1\end{psmallmatrix}$& 7 & $\frac34$ & $\overline{\Semion}$ & $E_{7,1}$ at $c=7$\\
\hline
3 & $\frac{1}{\sqrt{2}} \begin{psmallmatrix} -1 & 1\\ 1 & 1\end{psmallmatrix}$& -3 & $-\frac34$ & $\Semion^\dagger$ & None\\
\hline
4 & $\frac{1}{\sqrt{2}} \begin{psmallmatrix} -1 & 1\\ 1 & 1\end{psmallmatrix}$& -5 & $-\frac14$ & $\overline{\Semion}^\dagger$ & None\\
\hline
5 & $\frac{1}{\sqrt{2+\varphi}} \begin{psmallmatrix} 1 & \varphi\\ \varphi & -1\end{psmallmatrix}$& $\frac{14}{5}$ & $\frac25$ & $\Fib$ & $G_{2,1}$ at $c=\frac{14}{5}$\\
\hline
6 & $\frac{1}{\sqrt{2+\varphi}} \begin{psmallmatrix} 1 & \varphi\\ \varphi & -1\end{psmallmatrix}$& $\frac{26}{5}$ & $\frac35$ & $\overline{\Fib}$ & $F_{4,1}$ at $c=\frac{26}{5}$\\
\hline
7 & $\frac{1}{\sqrt{3-\varphi}} \begin{psmallmatrix} -1 & \varphi\!-\!1\\ \varphi\!-\!1 & 1\end{psmallmatrix} $& $-\frac{22}{5}$ & $-\frac15$ & $\LeeYang$ & $\LeeYang$ at $c=-\frac{22}{5}$\\
\hline
8 & $\frac{1}{\sqrt{3-\varphi}} \begin{psmallmatrix} -1 & \varphi\!-\!1\\ \varphi\!-\!1 & 1\end{psmallmatrix} $& $-\frac{18}{5}$ & $-\frac45$ & $\overline{\LeeYang}$ & None\\
\hline
\end{tabular}
\renewcommand*{\arraystretch}{1}
\smallskip
\caption{The $8$ rank two modular tensor categories from the perspective of VOAs, and an extremal realization where applicable}
  \label{tab: 8 cases}
\end{table}

The genus of a strongly rational VOA $V$ is the pair $(\Rep(V), c)$.
In \cite{TenerWang17}, the third author and Zhenghan Wang defined an \emph{extremal} (non-holomorphic) VOA to be one with $\rank(\Rep(V)) > 1$ and such that the minimal conformal weights $h_j$ were as large as possible in light of a certain a priori bound \cite{MathurMukhiSen88, Mason07}.
See \cite[\S2.2]{TenerWang17} for more detail.
When $\rank(\Rep(V)) = 2$, then $V$ is extremal when
\begin{equation}\label{eqn: extremality}
0 \le 1 + \frac{c}{2} - 6h < 6,
\end{equation}
The quantity $\ell := 1 + \frac{c}{2} - 6h$ is always a non-negative integer, and has been used frequently in the study of VOAs (e.g. \cite{MathurMukhiSen88, MasonNagatomoSakai18ax, GHM} among many others).

The purpose of this article is to provide a list of all possible characters of extremal VOAs with $\rank(\Rep(V)) = 2$.
Given a rank two modular tensor category $\cC$ and central charge $c$ in the appropriate class mod $8$ (as in Table \ref{tab: 8 cases}), there is a unique rational number $h_{ext}$ in the appropriate class mod $1$  satisfying \eqref{eqn: extremality}.
When $\cC$ is fixed we will write $h_{ext}(c)$ to emphasize the dependence on $c$.

The pair $(\cC,c)$ of a modular tensor category and appropriate choice of $c$ is called an \emph{admissible genus} \cite{Hoehn03}.
For every admissible genus $(\cC,c)$ described by Table \ref{tab: 8 cases} there is a representation $\rho_c:\PSL(2,\Z) \to U(2,\C)$ whose $S$ matrix is given by the entry of the table, and whose $T$ matrix is obtained by rescaling the categorical $T$ matrix by $e^{-2 \pi i c/24}$.
These representations are simply a choice of normalization of the categorical $S$ and $T$ matrices, and their existence do not depend in any way on vertex operator algebras.
However, they are defined in such a way that if there is a strongly rational VOA $V$ with central charge $c$ and $\Rep(V) = \cC$, then $\rho_V = \rho_c$.

\newpage

\section{Characters of VOAs with two simple modules} \label{sec: 3 characters of VOAs with two simple modules}

\subsection{Characters and vector-valued modular forms}
\label{sec: 31 characters and vvmfs}

We briefly recall the relevant theory of vector-valued modular forms, following \cite{BG,Gannon14}.
We refer the reader to these references, especially \cite[\S2]{BG}, for more detail.

Let $\rho:\PSL(2,\Z) \to \GL(d,\C)$ be an irreducible representation of the modular group, and assume that $\rho \begin{psmallmatrix} 1 & 1\\ 0 & 1 \end{psmallmatrix}$ is diagonal with finite order.
Let $\bbX:\bbH \to \C$ be a holomorphic function on the upper half-plane which satisfies 
\begin{equation}\label{eqn: modular invariance 2}
\bbX(\gamma \cdot \tau) = \rho(\gamma) \bbX(\tau)
\end{equation}
for all $\gamma \in \PSL(2,\Z)$ and $\tau \in \bbH$.
Choose a diagonal matrix $\Lambda$ such that $\rho \begin{psmallmatrix} 1 & 1\\ 0 & 1 \end{psmallmatrix} = e^{2 \pi i \Lambda}$, called an \emph{exponent matrix}.
For any choice of exponent matrix, we may Fourier expand
\begin{equation}\label{eqn: q expansion}
q^{-\Lambda} \bbX(q) = \sum_{n \in \Z} \bbX[n] q^n
\end{equation}
for coefficients $\bbX[n] \in \C^d$.
Let $\cM(\rho)$ denote the space of functions $\bbX$ satisfying \eqref{eqn: modular invariance 2} such that $\bbX[n] = 0$ for $n$ sufficiently negative (observe that this does not depend on the choice of $\Lambda$).

Given a choice of exponent $\Lambda$, we define the principal part map 
$$
\cP_\Lambda: \cM(\rho) \to \Span \{ v q^{-n} : n > 0, v \in \C^d\}
$$
by
$$
\cP_\Lambda \bbX = \sum_{n < 0} \bbX[n] q^n
$$
where $\bbX[n]$ are as in \eqref{eqn: q expansion}.

An exponent matrix is called \emph{bijective} if $\cP_\Lambda$ is an isomorphism.
For $\xi \in \{1, \ldots, d\}$ let $e_\xi \in \C^d$ be the corresponding standard basis vector.
Given a choice of bijective exponent matrix, let $\bbX^{(\xi)} \in \cM(\rho)$ be the function with $\cP_\Lambda \bbX^{(\xi)} = q^{-1} e_\xi$.
In this case, $\bbX^{(1)}, \ldots, \bbX^{(d)}$ form a basis for $\cM(\rho)$ as a free $\C[J]$-module, where
$$
J = q^{-1} + 196884q + \cdots
$$
is the $J$-invariant.
The \emph{fundamental matrix} $\Xi$ is given by 
$$
\Xi = [ \, \bbX^{(1)}\, | \, \cdots \, | \, \bbX^{(d)} \, ].
$$
The \emph{characteristic matrix} $\chi$ is given by the constant terms of $\Xi$ taken in the $q$-expansion (shifted by $\Lambda$ as in \eqref{eqn: q expansion}).
That is,
$$
\chi = [ \, \bbX^{(1)}[0] \, | \, \cdots \, | \, \bbX^{(d)}[0] \, ]. 
$$

Now fix as in Section \ref{sec: 2 rank two categories} a modular tensor category $\cC$ of rank two, and a choice of real number $c$ in the appropriate class mod $8$.
From this data we specified a representation $\rho_c$ of $\PSL(2,\Z)$ with the property that if there exists a VOA $V$ with central charge $c$ and $\Rep(V) = \cC$, then its character vector $(\ch_j)$ satisfies $(\ch_j) \in \cM(\rho_c)$.
The key observation of \cite{TenerWang17} is that 
$$
\Lambda(c) = \begin{pmatrix} 1-\frac{c}{24} & 0\\ 0 & h_{ext}(c) - \frac{c}{24}\end{pmatrix} =: \begin{pmatrix} \lambda_0(c) & 0 \\ 0 & \lambda_1(c) \end{pmatrix}
$$
is a bijective exponent for $\rho_c$, where $h_{ext}$ is the real number lying in the appropriate class mod $1$ which satisfies \eqref{eqn: extremality}.
Thus by the definition of fundamental matrix we have:

\begin{theorem}[{\cite[Thm. 3.1]{TenerWang17}}] \label{thm: tener wang}
Let $\cC$ be a modular tensor category of rank two, and let $c$ be a real number in the appropriate class mod $8$.
If $V$ is an extremal VOA with central charge $c$ and $\rank(\Rep(V)) = \cC$, then its character appears as the first column of the fundamental matrix corresponding to the bijective exponent $\Lambda(c)$.
\end{theorem}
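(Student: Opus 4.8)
The plan is to realize $\ch$ as the vector-valued modular form $\bbX^{(1)}$ attached to the exponent $\Lambda(c)$; since $\bbX^{(1)}$ is by definition the first column of the fundamental matrix $\Xi$, this is exactly the assertion. Two inputs drive the argument, and both are already in hand. First, modular invariance: by Zhu's theorem, Huang's work, and the Dong-Lin-Ng identification of $\rho_V$ with the modular data of $\Rep(V)$, together with \eqref{eqn: modular invariance} and \eqref{eqn: rho T} and the fact that $\rho_V = \rho_c$, the character vector $\ch$ is holomorphic on $\bbH$, transforms under $\rho_c$, and has a $\Lambda(c)$-shifted $q$-expansion that is bounded below; hence $\ch \in \cM(\rho_c)$. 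Second, $\rho_c$ is irreducible (it is one of the two-dimensional congruence representations of \cite{Mason07}) and, as recalled above following \cite{TenerWang17}, $\Lambda(c)$ is a \emph{bijective} exponent, so the principal part map $\cP_{\Lambda(c)} : \cM(\rho_c) \to \Span\{v q^{-n} : n>0,\ v \in \C^2\}$ is an isomorphism. It therefore suffices to show $\cP_{\Lambda(c)}\ch = q^{-1}e_1$.

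This is a short computation with the leading terms of the characters. Because $V$ is of CFT type and self-dual, $V(0) = \C\mathbf{1}$ and $\ch_0(\tau) = q^{-c/24}\bigl(1 + O(q)\bigr)$; as $\lambda_0(c) = 1 - \tfrac{c}{24}$, the shifted expansion is $q^{-\lambda_0(c)}\ch_0 = q^{-1} + O(1)$, so the first component of $\cP_{\Lambda(c)}\ch$ is $q^{-1}$. For the second component, this is the one place the extremality hypothesis is used: by \eqref{eqn: extremality} the integer $\ell = 1 + \tfrac{c}{2} - 6h_1$ lies in $[0,6)$, and since $\Rep(V) = \cC$ determines $h_1$ modulo $1$, this forces $h_1$ to be the unique representative $h_{ext}(c)$ of its class satisfying \eqref{eqn: extremality}. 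Hence $\lambda_1(c) = h_{ext}(c) - \tfrac{c}{24} = h_1 - \tfrac{c}{24}$, so $q^{-\lambda_1(c)}\ch_1 = \dim M_1(h_1) + O(q)$ has no terms of negative index and the second component of $\cP_{\Lambda(c)}\ch$ vanishes. Thus $\cP_{\Lambda(c)}\ch = q^{-1}e_1 = \cP_{\Lambda(c)}\bbX^{(1)}$, and injectivity of $\cP_{\Lambda(c)}$ yields $\ch = \bbX^{(1)}$, which is the first column of $\Xi$, as claimed.

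The principal-part computation is entirely routine; the real content --- and the only genuine obstacle --- is the bijectivity of $\Lambda(c)$, which is the substance of \cite[Thm. 3.1]{TenerWang17} and depends on the structure of rank-two congruence representations and on the theory of vector-valued modular forms from \cite{BG, Gannon14}. In writing this up I would therefore state the two inputs, cite the bijectivity, and present the leading-term bookkeeping above, being careful to flag two easily-overlooked points: that self-duality and CFT type are what make the leading coefficient of $\ch_0$ equal to $1$ (so the principal part is exactly $q^{-1}e_1$, not a scalar multiple), and that extremality is invoked only to pin $h_1$ to $h_{ext}(c)$ on the nose rather than merely modulo $1$.
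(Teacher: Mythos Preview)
Your proposal is correct and is precisely the argument the paper intends: the paper compresses the entire proof into the single phrase ``Thus by the definition of fundamental matrix we have,'' after citing from \cite{TenerWang17} that $\Lambda(c)$ is bijective, and you have faithfully unpacked that phrase into (i) $\ch\in\cM(\rho_c)$ via Zhu/Huang/Dong--Lin--Ng, (ii) the leading-term computation $\cP_{\Lambda(c)}\ch=q^{-1}e_1$ using CFT type for the first entry and extremality ($h_1=h_{ext}(c)$) for the second, and (iii) injectivity of $\cP_{\Lambda(c)}$. There is no substantive difference in approach.
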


Let $\chi(c) = (\chi(c)_{ij})_{i,j=0}^1$ be the characteristic matrix taken with respect to $\Lambda(c)$.
Thus if $V$ is an extremal VOA with central charge $c$ (and $\rank(\Rep(V))=2$) we have $\chi(c)_{00} = \dim V(1)$.
We will determine the possible values of $c$ for which there exists an extremal VOA by showing that for $\abs{c}$ sufficiently large, one of $\chi(c)_{00}$ or $\chi(c)_{10}$ is not a non-negative integer.

\subsection{General recurrence}\label{sec: 32 general recurrence}
They key idea \cite{GradyThesis} is to derive a recurrence relating the pair $(\chi(c+24), h_{ext}(c+24))$ to $(\chi(c), h_{ext}(c))$, and then study the long-term behavior of this recurrence.
In fact, to handle the case $c \to +\infty$, one may derive a simple recurrence involving only the diagonal entries of $\chi$ \cite[Lem. 6.4]{GradyThesis}.
To handle the case $c \to -\infty$ we will use all of the entries of $\chi$, and the relation will be slightly more complicated as a result.

Let $M^-_{2 \times 2}$ be the set of $2 \times 2$ complex matrices whose bottom-left entry is non-zero, and let $M^+_{2 \times 2}$ be the set of matrices whose top-right entry is non-zero.
Define functions
$$
f_{\pm}:M^{\mp}_{2\times2} \times (\R \setminus \Z) \to M^{\pm}_{2\times2} \times (\R \setminus \Z)
$$
by
$$
f_+\left[\begin{pmatrix}x & y\\ z & w\end{pmatrix}, \, h\right] = \left[
\renewcommand*{\arraystretch}{1.9}
\begin{pmatrix} \frac{w+h(x-240)}{h+1} & \frac{1}{z} \\
\frac{(h+1)^2(h-2)yz - (x-w + 120(h-1))^2 + 746496(h+1)^2}{(h+2)(h+1)^2} \, z &
\frac{x+h(w+240)}{h+1}
\end{pmatrix}
\renewcommand*{\arraystretch}{1}
, \,\,
h+2 \,\right]
$$
and
$$
f_-\left[\begin{pmatrix}x & y\\ z & w\end{pmatrix}, \, h\right] = \left[
\renewcommand*{\arraystretch}{1.9}
\begin{pmatrix} \frac{-w+(h-2)(x+240)}{h-3} & 
\frac{h(h-3)^2 y z + (x - w + 120(h-1))^2 - 746496(h-3)^2}{(h-4)(h-3)^2} y
\\
\frac{1}{y} &
\frac{-x+(h-2)(w-240)}{h-3}
\end{pmatrix}
\renewcommand*{\arraystretch}{1}
, \,\,
h-2 \,\right].
$$
By direct computation, one may check that these functions are invertible and $f_{\pm}^{-1} = f_{\mp}$.
We will show that $f_{\pm}$ take characteristic matrices to characteristic matrices, but first we must check:

\begin{lemma}\label{lem: characteristic off diagonal non-zero}
Let $\chi$ be the fundamental matrix corresponding to a $2 \times 2$ bijective exponent $\Lambda$.
Then $\chi \in M^+_{2 \times 2} \cap M^-_{2 \times 2}$.
\end{lemma}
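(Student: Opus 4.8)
The claim is that for any $2\times 2$ bijective exponent $\Lambda$, the associated fundamental matrix $\Xi$ (whose columns are $\bbX^{(0)}, \bbX^{(1)}$ with principal parts $q^{-1}e_0$, $q^{-1}e_1$) has nonzero top-right and bottom-left constant-term entries; that is, $\bbX^{(1)}[0]$ has nonzero first coordinate and $\bbX^{(0)}[0]$ has nonzero second coordinate. Note the statement as phrased says "fundamental matrix" but the conclusion $\chi \in M^+ \cap M^-$ is about the characteristic matrix $\chi$, so I read this as a claim about $\chi$. The plan is to argue by contradiction: suppose, say, the first coordinate of $\bbX^{(1)}[0]$ vanishes. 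Then I will show that the component $\bbX^{(1)}_0$ (the top entry of the vector-valued form $\bbX^{(1)}$) is a modular form for $\PSL(2,\Z)$ of a specific weight with a $q$-expansion (shifted by $\lambda_0$) that has no constant term and no principal part, forcing it to vanish identically — which contradicts the fact that $\bbX^{(1)}$ has principal part exactly $q^{-1}e_1 \ne 0$, hence is nonzero, but actually gives a sharper contradiction via the structure of $\rho$ being irreducible.

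More concretely, the key steps I would carry out, in order: (1) Recall that since $\rho = \rho_c$ is a $2$-dimensional \emph{irreducible} representation with $\rho\begin{psmallmatrix}1&1\\0&1\end{psmallmatrix} = e^{2\pi i \Lambda}$ diagonal, the two exponents $\lambda_0, \lambda_1$ are distinct mod $1$ (otherwise $\rho\begin{psmallmatrix}1&1\\0&1\end{psmallmatrix}$ is scalar and $\rho$ is reducible, or one appeals directly to the explicit list in \eqref{eqn: possible S}). (2) Extract a scalar consequence from modular invariance: if $v \in \C^2$ is a vector such that $\rho\begin{psmallmatrix}0&-1\\1&0\end{psmallmatrix} v = \pm v$ — here one uses that $S = \rho\begin{psmallmatrix}0&-1\\1&0\end{psmallmatrix}$ has eigenvalues $\pm 1$ and one-dimensional eigenspaces (true for all four matrices in \eqref{eqn: possible S}) — then $v^{T}\bbX$ transforms as a scalar modular form of weight $0$ under the subgroup of $\PSL(2,\Z)$ generated by $T$ and by $S$ with eigenvalue $+1$. (3) Suppose $\chi_{01} = \bbX^{(1)}_0[0] = 0$. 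Consider the top component $g := \bbX^{(1)}_0$. Using the explicit $S$-matrix and the $T$-eigenvalue structure, identify a scalar modular form (for $\PSL(2,\Z)$ or a congruence subgroup, possibly after multiplying by a fixed power of $\eta$ or $\Delta$ to clear the fractional exponent $\lambda_0 = 1 - c/24$) whose only possible pole is at the cusp and whose order of vanishing there, computed from $\lambda_0$ together with the assumed vanishing of the constant term, is too large for its weight — by the valence formula it must be zero. (4) Conclude $\bbX^{(1)}_0 \equiv 0$, so $\bbX^{(1)}$ takes values in the line $\C e_1$; but this line must then be $\rho$-invariant, contradicting irreducibility of $\rho$. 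The identical argument with $S \to S$, indices swapped, and $f_+ \to f_-$ handles $\chi_{10} \ne 0$.

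The main obstacle is Step (3): getting the bookkeeping of weights, exponents, and the valence formula exactly right so that the vanishing of the constant term genuinely forces the form to be zero. The subtlety is that the relevant scalar form lives with respect to a congruence subgroup (the kernel of $\rho$ has finite index but need not be $\PSL(2,\Z)$ itself), and the weight is $0$, so one cannot use the naive $\PSL(2,\Z)$ valence formula directly; one must either pass to $\eta$-quotients of the appropriate weight or invoke that a weight-zero holomorphic modular form for a congruence subgroup that is also holomorphic (i.e. bounded) at all cusps is constant, and then show the constant is zero using the principal-part normalization and the distinctness of $\lambda_0, \lambda_1$. An alternative, and perhaps cleaner, route avoiding congruence-subgroup analysis is to use bijectivity of $\Lambda$ directly: if $\chi_{01}=0$ then $\bbX^{(1)} - (\text{something})\cdot \bbX^{(0)}$ could be arranged to lie in $\cM(\rho)$ with principal part a multiple of $q^{-1}e_1$ but with \emph{zero} constant term in the $e_0$-slot, and one shows this over-determines the free $\C[J]$-module structure — specifically it would produce a nonzero element of $\cM(\rho)$ whose principal part is supported only on $e_1$ yet whose value does not span together with $\bbX^{(0)}$, contradicting that $\{\bbX^{(0)},\bbX^{(1)}\}$ is a $\C[J]$-basis with $\cP_\Lambda$ an isomorphism. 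I would try this second route first, as it stays entirely within the formalism already set up in the excerpt and sidesteps any explicit modular-forms valence computation.
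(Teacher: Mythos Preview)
Your proposal has a genuine gap: neither route reaches a contradiction.

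\textbf{Route 1 fails.} The component $\bbX^{(1)}_0$ is indeed a weight-$0$ modular form for the congruence subgroup $\ker\rho$, but it is \emph{not} holomorphic at all cusps. Under $\tau\mapsto -1/\tau$ we have $\bbX^{(1)}_0(-1/\tau)=S_{00}\,\bbX^{(1)}_0(\tau)+S_{01}\,\bbX^{(1)}_1(\tau)$, and since $S_{01}\ne 0$ for every matrix in \eqref{eqn: possible S} while $\bbX^{(1)}_1$ has a genuine pole (the $q^{-1}$ term), the function $\bbX^{(1)}_0$ blows up at the cusp $0$. So ``holomorphic at $\infty$ with zero constant term'' does not force it to vanish; the valence-formula bookkeeping cannot close because the pole you would need to kill lives at a different cusp and comes precisely from the principal part $q^{-1}e_1$ you are trying to keep.

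\textbf{Route 2 is not an argument.} If $\chi_{01}=0$ then $\bbX^{(1)}$ already has $\cP_\Lambda$-principal part $q^{-1}e_1$ and zero $e_0$-constant; there is nothing to subtract. This is entirely compatible with $\cP_\Lambda$ being a bijection and with $\{\bbX^{(0)},\bbX^{(1)}\}$ being a $\C[J]$-basis: bijectivity of $\cP_\Lambda$ says nothing about constant terms, and the two forms are $\C[J]$-independent simply because their principal parts are. No contradiction arises.

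\textbf{What the paper does.} The missing idea is to change the exponent. By \cite[Thm.~4.1]{Gannon14}, if $\Lambda=\operatorname{diag}(\lambda_0,\lambda_1)$ is bijective then so is $\Lambda'=\operatorname{diag}(\lambda_0-1,\lambda_1+1)$. Now suppose the bottom-left entry of $\chi$ vanishes, so the first column $\bbX$ of $\Xi$ has the shape
\[
\bbX=\begin{pmatrix} q^{\lambda_0}(q^{-1}+\cdots)\\ q^{\lambda_1}(z_1 q+\cdots)\end{pmatrix}.
\]
Rewriting in terms of $\Lambda'$, both components start at non-negative order, so $\cP_{\Lambda'}\bbX=0$; but $\bbX\ne 0$, contradicting injectivity of $\cP_{\Lambda'}$. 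The top-right case is symmetric with $\Lambda'=\operatorname{diag}(\lambda_0+1,\lambda_1-1)$. The whole proof is three lines once you know to shift $\Lambda$; the $\C[J]$-module structure and irreducibility of $\rho$ play no role.
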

\begin{proof}
We show $\chi \in M^-_{2 \times 2}$ and the other step is similar.
Let $\bbX = \bbX^{(1)}$ be the first column of the fundamental matrix $\Xi$ corresponding to $\Lambda = \begin{psmallmatrix} \lambda_{0} & 0\\ 0 & \lambda_{1} \end{psmallmatrix}$.
If the bottom-left entry of $\chi$ were $0$, this would imply that $\bbX$ was of the form
\begin{equation}\label{eqn: expansion off diagonal non-zero}
\bbX = \begin{pmatrix} q^{\lambda_0}(q^{-1} + \cdots) \\ q^{\lambda_1}( z_1 q + \cdots)\end{pmatrix}
\end{equation}
for some $z_1 \in \C$.
By \cite[Thm. 4.1]{Gannon14}, $\Lambda^\prime = \begin{psmallmatrix} \lambda_{0}-1 & 0\\ 0 & \lambda_{1}+1 \end{psmallmatrix}$ is again a bijective exponent.
But examining \eqref{eqn: expansion off diagonal non-zero} we see that $\cP_{\Lambda^\prime} \bbX = 0$, which is a contradiction.
\end{proof}

\begin{lemma}\label{lem: chi recurrence}
Let $(\cC, c)$ be an admissible genus from Table \ref{tab: 8 cases}, and let $\chi(c)$ denote the characteristic matrix of the representation $\rho_c$ taken with respect to the bijective exponent $\Lambda(c)$.
Then $[\chi(c\pm24), h_{ext}(c \pm 24)] = f_\pm[\chi(c),h_{ext}(c)]$.
\end{lemma}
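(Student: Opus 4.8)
The plan is to reduce the lemma to a statement about how the characteristic matrix of a \emph{fixed} representation changes under a shift of the exponent matrix, and then to pin down that change using the Fourier-coefficient recursion for vector-valued modular forms.

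\textbf{Step 1 (reduction to an exponent shift).} First I would observe that the representation itself is unchanged by $c\mapsto c\pm24$. By \eqref{eqn: rho T} the $T$-matrix of $\rho_{c}$ is $e^{-2\pi i c/24}(\delta_{j,k}e^{2\pi i h_{j}})$, and since the only integer translate of $h_{ext}(c)$ keeping $\ell=1+\tfrac c2-6h_{ext}$ in $[0,6)$ after $c\mapsto c\pm24$ is the one shifting $h_{ext}$ by $\pm2$, we have $h_{ext}(c\pm24)=h_{ext}(c)\pm2$; together with $e^{\mp2\pi i}=1$ this shows the $T$-matrix is unchanged, and the $S$-matrix is the fixed entry of Table~\ref{tab: 8 cases}. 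Hence $\rho_{c\pm24}=\rho_{c}$, while $\Lambda(c\pm24)=\Lambda(c)+\mathrm{diag}(\mp1,\pm1)$. This already proves the $h_{ext}$-component of the identity, so it remains to show that passing from the fundamental matrix of $\rho_{c}$ at $\Lambda(c)$ to the one at $\Lambda(c)+\mathrm{diag}(\mp1,\pm1)$ acts on characteristic matrices as $f_{\pm}$; since $f_{+}^{-1}=f_{-}$ it suffices to treat the $+$ case.

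\textbf{Step 2 (rewriting the new columns).} Let $\Xi,\chi$ be the fundamental and characteristic matrices of $\rho_{c}$ at $\Lambda(c)$, with columns $\bbX^{(0)},\bbX^{(1)}$, and put $\Lambda'=\Lambda(c)+\mathrm{diag}(-1,1)$. By \cite[Thm.~4.1]{Gannon14} (as used in Lemma~\ref{lem: characteristic off diagonal non-zero}) $\Lambda'$ is again bijective, so $\Xi(c+24)$ exists; its columns lie in $\cM(\rho_{c})$ and hence are $\C[J]$-combinations of $\bbX^{(0)},\bbX^{(1)}$. Matching principal parts in the $\Lambda'$-grading forces the second column of $\Xi(c+24)$ to be a scalar multiple of $\bbX^{(0)}$, and comparing the normalizations of the principal parts shows the scalar is $1/\chi_{10}$ --- this produces the entry $\tfrac1z$ of $f_{+}$, and is why $f_{\pm}$ is defined only on $M^{\mp}_{2\times2}$ (cf.\ Lemma~\ref{lem: characteristic off diagonal non-zero}, where $\chi_{10}\ne0$). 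Likewise the first column of $\Xi(c+24)$ must equal $(J+\alpha)\bbX^{(0)}-\chi_{10}\bbX^{(1)}$ for a single constant $\alpha$: the polynomial degrees are forced by requiring the first component to vanish to order exactly $q^{\lambda_{0}-2}$ and the second to order $q^{\lambda_{1}+1}$ in the $\Lambda(c)$-grading, and the coefficient $-\chi_{10}$ and the value of $\alpha$ are then fixed by cancelling the two lowest surviving coefficients of the second component. Reading off the constant terms of these columns in the $\Lambda'$-grading expresses every entry of $\chi(c+24)$ in terms of the entries of $\chi$, the constant $\alpha$, and the first one or two subleading Fourier coefficients $\bbX^{(0)}[1]$, $\bbX^{(1)}[1]$, $\bbX^{(0)}[2]$ of $\Xi$.

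\textbf{Step 3 (closing the recursion; the obstacle).} The remaining task --- eliminating these subleading coefficients --- is the real content. For this I would use the vector-valued modular forms machinery of \cite{BG}: the columns of $\Xi$ span the solution space of a second-order modular linear differential equation whose (meromorphic modular) coefficients are completely determined by $(\rho_{c},\Lambda(c))$, equivalently by $(\chi,\Lambda(c))$, and extracting the lowest Fourier coefficients of this equation applied to $\bbX^{(\xi)}$ expresses $\bbX^{(0)}[1]$, $\bbX^{(1)}[1]$ and $\bbX^{(0)}[2]$ as explicit combinations of $\bbX^{(\xi)}[0]$ --- a column of $\chi$ --- and $\bbX^{(\xi)}[-1]=e_{\xi}$, with coefficients built from the low-order Fourier coefficients of $E_{4}$, $E_{6}$ and $\Delta$. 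Substituting these into the formulas from Step 2 and simplifying produces exactly the matrix part of $f_{+}[\chi(c),h_{ext}(c)]$; in particular the constants $240$, $120$, $746496$ appearing in $f_{+}$ are $[q^{1}]E_{4}=240$, its half, and $864^{2}=746496$ (recall $E_{4}^{3}-E_{6}^{2}=1728\Delta$), and the $c-24$ statement follows from $f_{+}^{-1}=f_{-}$. The main obstacle is exactly this last stretch: making the Fourier recursion explicit in rank two and grinding the substitution all the way onto the somewhat elaborate closed forms defining $f_{\pm}$. A secondary point to watch throughout is that every exponent difference entering the argument must remain non-integral, so that \cite[Thm.~4.1]{Gannon14} and the free $\C[J]$-module structure apply; this is the role of the hypothesis $h_{ext}\in\R\setminus\Z$, and it also guarantees that denominators such as $h+1$, $h+2$, $h-3$, $h-4$ never vanish.
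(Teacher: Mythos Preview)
Your plan is correct and matches the paper's proof almost step for step: reduce to the $+$ case via $f_+^{-1}=f_-$, rewrite the columns of $\Xi(c+24)$ as $\C[J]$-combinations of those of $\Xi(c)$ by principal-part matching, and close the recursion using the differential equation from \cite{BG}. The only cosmetic difference is that the paper invokes the first-order matrix equation \cite[Eqn.~(2.14)]{BG} (with $\mathcal{D}$ built directly from $\Lambda$ and $\chi$) rather than the scalar second-order MLDE you describe, which makes extracting the needed subleading coefficients $z_1,w_1,z_2$ slightly more direct.
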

\begin{proof}
It is clear from \eqref{eqn: extremality}, which characterizes $h_{ext}$, that $h_{ext}(c \pm 24) = h_{ext}(c) \pm 2$, and by examining Table \ref{tab: 8 cases} we see that $h_{ext}(c)$ is never an integer.
Since $f_\pm^{-1} = f_{\mp}$ it suffices to show that $[\chi(c+24),h_{ext}(c+24)] = f_+[\chi(c),h_{ext}(c)]$.

Let $h = h_{ext}(c)$.
By definition we have 
$$
\Lambda:=\Lambda(c) = \begin{pmatrix} 1 - \frac{c}{24} &0 \\ 
0 & h - \frac{c}{24} \end{pmatrix}
$$
and
\begin{equation}\label{eqn: lambda plus relation}
\Lambda_+:=\Lambda(c+24) = \begin{pmatrix} - \frac{c}{24} &0 \\ 
0 & h + 1  - \frac{c}{24} \end{pmatrix} = \Lambda(c) + \begin{pmatrix}-1 & 0\\ 0 & 1\end{pmatrix}.
\end{equation}

Let $\Xi:=\Xi(c)$ and $\Xi^+ := \Xi(c+24)$ be the fundamental matrices corresponding to the bijective exponents $\Lambda$ and $\Lambda_+$, respectively, for the representation $\rho_c = \rho_{c+24}$.
We may expand
$$
\Xi = ( \, \bbX^{(1)} \, | \, \bbX^{(2)} \, ) = q^{\Lambda} \begin{pmatrix} q^{-1} + \sum_{n \ge 0} x_n q^n & \sum_{n \ge 0} y_n q^n\\
\sum_{n \ge 0} z_n q^n & q^{-1} + \sum_{n \ge 0} w_n q^n
\end{pmatrix}
$$
and
$$
\Xi_+ = ( \, \bbX_+^{(1)} \, | \, \bbX_+^{(2)} \, ) = q^{\Lambda_+} \begin{pmatrix} q^{-1} + \sum_{n \ge 0} x^+_n q^n & \sum_{n \ge 0} y^+_n q^n\\
\sum_{n \ge 0} z^+_n q^n & q^{-1} + \sum_{n \ge 0} w^+_n q^n
\end{pmatrix}
.
$$
Our goal is to show that 
\begin{equation}\label{eqn: f goal}
f_+\left[\begin{pmatrix} x_0 & y_0\\ z_0 & w_0 \end{pmatrix}, h\right] = \left[\begin{pmatrix} x_0^+ & y^+_0\\ z_0^+ & w_0^+ \end{pmatrix}, h+2\right].
\end{equation}
To do this, we must obtain formulas for $x_0^+$, $y_0^+$, $z_0^+$ and $w_0^+$ in terms of $x_0$, $y_0$, $z_0$, and $w_0$.

Using \eqref{eqn: lambda plus relation}, we have
\begin{equation}\label{eqn: Xi early terms}
\Xi_+ = q^{\Lambda} \begin{pmatrix}
q^{-2}+x_0^+q^{-1}+ \cdots & y_0^+q^{-1}+ \cdots\\
z_0^+ q + \cdots  & 1 + \cdots 
\end{pmatrix}.
\end{equation}
Thus 
$$
\cP_{\Lambda} \, \bbX^{(1)}_+ = \begin{pmatrix} q^{-2} + x_0^+ q^{-1}\\ 0 \end{pmatrix}.
\quad \mbox{ and } \quad
\cP_{\Lambda} \, \bbX^{(2)}_+ = \begin{pmatrix} y_0^+ q^{-1} \\ 0 \end{pmatrix}.
$$
Since $\cP_{\Lambda}$ is injective, we have 
\begin{equation}\label{eqn: X plus 2 formula}
\bbX_+^{(2)} = y_0^+ \bbX^{(1)}.
\end{equation}
By \eqref{eqn: Xi early terms}, this identity reads
$$
\begin{pmatrix}
y_0^+ q^{-1} + \cdots\\
1 + \cdots
\end{pmatrix}
=
\begin{pmatrix}
y_0^+ q^{-1} + \cdots\\
y_0^+z_0 + \cdots
\end{pmatrix}
$$
and thus 
\begin{equation}\label{eqn: y0p formula}
y_0^+ = \frac{1}{z_0}.
\end{equation} 
Note that $z_0 \ne 0$ by Lemma \ref{lem: characteristic off diagonal non-zero}.
This gives the formula for $y_0^+$ which corresponds to \eqref{eqn: f goal}.

Substituting \eqref{eqn: y0p formula} into \eqref{eqn: X plus 2 formula} yields $\bbX_+^{(2)} = \frac{1}{z_0} \bbX^{(1)}$, from which we conclude 
\begin{equation}\label{eqn: w0p formula}
w_0^+ = \frac{z_1}{z_0}
\end{equation}
by considering the second $q$-coefficient in the second entry of $\bbX_+^{(2)}$.
While this gives an expression for $w_0^+$ in terms of $z$ variables, it is not the expression we are looking for due to the presence of the higher order coefficient $z_1$.
We will derive an expression for $z_1$ in terms of lower order coefficients later in the proof \eqref{eqn: z1 formula}.

For now, we continue on and find expressions for $x_0^+$ and $z_0^+$.
By direct calculation,
$$
\cP_{\Lambda}\big((J + x_0^+ - x_0)\bbX^{(1)} - z_0 \bbX^{(2)}\big) = \begin{pmatrix} q^{-2} + x_0^+ q^{-1}\\ 0 \end{pmatrix} = \cP_{\Lambda}\bbX^{(1)}_+
$$
where $J(q) = q^{-1} + 196884q + \cdots$.
Since $\cP_{\Lambda}$ is injective, we have 
\begin{equation}\label{eqn: first column}
(J + x_0^+ - x_0)\bbX^{(1)} - z_0 \bbX^{(2)} = \bbX^{(1)}_+.
\end{equation}
We now multiply both sides of \eqref{eqn: first column} by $q^{-\Lambda}$, expanding out the left-hand side and substituting the expression from \eqref{eqn: Xi early terms} for the right-hand side, to obtain
$$
\begin{pmatrix}
q^{-2} + x_0^+ q^{-1} + \cdots\\
\gamma_0 + \gamma_1 q + \cdots
\end{pmatrix}
=
\begin{pmatrix}
q^{-2}+x_0^+q^{-1}+ \cdots \\
0 + z_0^+ q + \cdots 
\end{pmatrix}
$$
where
$$
\gamma_0 = (x_0^+-w_0-x_0) z_0 + z_1,
\quad \mbox{ and } \quad
\gamma_1 = 196884 z_0 - w_1 z_0 - x_0 z_1 + x_0^+ z_1 + z_2.
$$
Thus $\gamma_0 = 0$ and $\gamma_1 = z_0^+$.
The former yields
\begin{equation}\label{eqn: x0p formula}
x_0^+ = 
x_0 + w_0 -\frac{z_1}{z_0}.
\end{equation}
Substituting \eqref{eqn: x0p formula} into the equation $z_0^+ = \gamma_1$ and simplifying yields
\begin{equation}\label{eqn: z0p formula}
z_0^+ = -\frac{z_1^2}{z_0} + z_1 w_0 + z_0(196884 - w_1) + z_2.
\end{equation}

Our aim now is to replace the higher order coefficients $w_1$, $z_1$, and $z_2$ appearing in \eqref{eqn: w0p formula}, \eqref{eqn: x0p formula} and \eqref{eqn: z0p formula} with expressions in terms of $x_0$, $y_0$, $z_0$, and $w_0$.
To do this, we use the differential equation \cite[Eqn. (2.14)]{BG}
\begin{equation}\label{eqn: Xi differential equation}
\frac{1}{2\pi i}\frac{d \Xi}{d \tau} - \Xi(\tau) \cD(\tau) = 0
\end{equation}
where
$$
\cD(\tau) = \frac{1}{\cE(\tau)}\big[ (J(\tau) - 240)(\Lambda - 1) + \chi + [\Lambda, \chi]\big]
$$
for $\cE(\tau) = q^{-1} - 240 - 141444q - \cdots$.
Examining the coefficient of $q$ in the bottom-right entry of \eqref{eqn: Xi differential equation} yields
\begin{equation}\label{eqn: w1 formula}
w_1 = \frac12\big(w_0 (w_0 + 240) - (h - 2) y_0 z_0 + 338328 (h - 1 - \frac{c}{24})\big).
\end{equation}
Similarly, examining the coefficients of $q$ and $q^2$ in the bottom-left entry of \eqref{eqn: Xi differential equation} yields
\begin{equation}\label{eqn: z1 formula}
z_1 = \frac{x_0 + h(w_0 + 240)}{h+1} z_0
\end{equation}
and
\begin{equation}\label{eqn: z2 formula}
z_2 = \frac{x_0(z_1+240z_0) + z_0(hw_1 + 240hw_0 + 199044h - 338328\frac{c}{24})}{h+2}
\end{equation}
respectively.

We have now obtained expressions \eqref{eqn: w1 formula}, \eqref{eqn: z1 formula}, and \eqref{eqn: z2 formula} for $w_1$, $z_1$, and $z_2$ in terms of lower order coefficients.
We may substitute these formulas into our expressions \eqref{eqn: w0p formula}, \eqref{eqn: x0p formula}, and \eqref{eqn: z0p formula} for $w_0^+$, $x_0^+$, and $z_0^+$ to obtain the formulas of \eqref{eqn: f goal}.
Combining with our earlier expression \eqref{eqn: y0p formula} for $y_0^+$ completes the proof.
\end{proof}

\subsection{Recurrence for large positive $c$}\label{sec: 33 positive c}

We will show that for $n$ sufficiently large, $\chi(c+24n)_{00} < 0$, and moreover we will obtain an effective bound on such an $n$.
We will do this by iterating $f_+$, although in fact a simpler function will suffice.

\begin{lem}\label{lem: diagonal recurrence}
Let $g:\C^2 \times (\R \setminus \Z) \to \C^2 \times (\R \setminus \Z)$ be the function 
$$
g[x,w,h] = \left[\frac{w+h(x-240)}{h+1}, \, \frac{x+h(w+240)}{h+1}, h+2 \right],
$$
and let $g^n$ denote its $n$-fold iterate.
Then
$$
g^n[x,w,h] = \left[\frac{n w+(h+n-1)(x-240n)}{h+2n-1}, \, \frac{n x+(h+n-1)(w+240n)}{h+2n-1}, \, h+2n\right]
$$
\end{lem}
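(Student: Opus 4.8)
The plan is to verify the closed form by a straightforward induction on $n$, and the computation becomes transparent after diagonalizing the affine-linear action of $g$ on the first two coordinates. First I would pass to the coordinates $u = x+w$ and $v = x-w$. A direct substitution into the definition of $g$ shows that, in these variables, $g$ acts by
\[
[u,v,h] \longmapsto \Bigl[\, u,\ \tfrac{(h-1)v - 480h}{h+1},\ h+2 \,\Bigr].
\]
In particular $u$ is a conserved quantity, and the remaining dynamics is the scalar affine recurrence $v \mapsto \frac{(h-1)v - 480h}{h+1}$ in which $h$ is simultaneously advanced by $2$; the multiplier $\frac{h-1}{h+1}$ then telescopes cleanly along the orbit. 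Since $h \in \R \setminus \Z$, every quantity $h+2k\pm 1$ occurring along the iteration is nonzero, so all the iterates are well-defined and lie in $\C^2 \times (\R\setminus\Z)$.

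Next I would prove by induction on $n$ that, writing $(u_n,v_n,h_n)$ for the $n$-fold iterate in these coordinates, one has $u_n = u$, $\ h_n = h+2n$, and
\[
v_n = \frac{(h-1)v - 480\,n\,(h+n-1)}{h+2n-1}.
\]
The base case $n=1$ is immediate from the displayed action of $g$. For the inductive step, since $h_n = h+2n$ the next update reads $v_{n+1} = \frac{(h+2n-1)v_n - 480(h+2n)}{h+2n+1}$, and the inductive hypothesis gives $(h+2n-1)v_n = (h-1)v - 480\,n\,(h+n-1)$, so
\[
v_{n+1} = \frac{(h-1)v - 480\,n\,(h+n-1) - 480(h+2n)}{h+2n+1}.
\]
Comparing with the target, the inductive step reduces to the polynomial identity $n(h+n-1) + (h+2n) = (n+1)(h+n)$, which is immediate.

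Finally I would translate back to the original coordinates using $x_n = \tfrac12(u_n + v_n)$ and $w_n = \tfrac12(u_n - v_n)$. Substituting $u_n = x+w$ and the formula for $v_n$, clearing the common denominator $h+2n-1$ and collecting the coefficients of $x$ and of $w$ (each collapses via the elementary identities $(h+2n-1)+(h-1) = 2(h+n-1)$ and $(h+2n-1)-(h-1) = 2n$) yields exactly the claimed expressions for the first two components of $g^n[x,w,h]$, while the third component is $h_n = h+2n$. There is no genuine obstacle here — the statement is a finite algebraic verification — but the change of variables is what keeps the bookkeeping under control; one can alternatively induct directly on the matrix-style formula in the statement, at the cost of a messier inductive step that splits into separate $x$-coefficient, $w$-coefficient, and constant identities.
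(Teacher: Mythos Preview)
Your proof is correct. The paper's own proof is a single sentence (``This follows by a straightforward induction''), so your argument is the same in spirit but with the added observation that passing to the sum/difference coordinates $u=x+w$, $v=x-w$ decouples the recursion: $u$ is conserved and only the scalar affine update on $v$ needs to be tracked. This diagonalization is a genuine simplification over inducting directly on the two-component formula, and it also explains structurally why the closed form has the shape it does (the telescoping of $\tfrac{h-1}{h+1}\cdot\tfrac{h+1}{h+3}\cdots$), whereas a head-on induction just verifies the identity without illuminating it.
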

\begin{proof}
This follows by a straightforward induction.
\end{proof}

\begin{lem}\label{lem: positive c recurrence}
Let $(\cC, c)$ be an admissible genus from Table \ref{tab: 8 cases}, and let $\chi(c)$ denote the characteristic matrix of the representation $\rho_c$ taken with respect to the bijective exponent $\Lambda(c)$.
Suppose that $h_{ext}(c) > 0$.
Then $\chi(c+24n)_{00} < 0$ when 
$$
n > \frac{\abs{M} + \sqrt{M^2+960\abs{(h_{ext}(c)-1) \chi(c)_{00}}}}{480}
$$
where $M = \chi(c)_{00} + \chi(c)_{11} - 240(h_{ext}(c)-1)$.
\end{lem}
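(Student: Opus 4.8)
The plan is to iterate the recurrence of Lemma~\ref{lem: chi recurrence} and to exploit that the diagonal entries of the characteristic matrix evolve autonomously. Iterating Lemma~\ref{lem: chi recurrence} gives $[\chi(c+24n), h_{ext}(c+24n)] = f_+^n[\chi(c), h_{ext}(c)]$ for every positive integer $n$; this makes sense because each intermediate $\chi(c+24k)$ is again the characteristic matrix of a bijective exponent, hence lies in $M^+_{2\times2}\cap M^-_{2\times2}$ by Lemma~\ref{lem: characteristic off diagonal non-zero}. Inspecting the formula for $f_+$, the top-left entry, the bottom-right entry, and the $h$-shift of $f_+[\,\cdot\,]$ depend only on the top-left entry, the bottom-right entry, and $h$ of the input — and these three coordinates transform under precisely the map $g$ of Lemma~\ref{lem: diagonal recurrence}. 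Therefore the diagonal-plus-$h$ part of $f_+^n$ is $g^n$, so $\bigl(\chi(c+24n)_{00},\,\chi(c+24n)_{11},\,h_{ext}(c+24n)\bigr) = g^n[\chi(c)_{00},\chi(c)_{11},h_{ext}(c)]$, and Lemma~\ref{lem: diagonal recurrence} yields
$$
\chi(c+24n)_{00} = \frac{n\,\chi(c)_{11} + (h + n - 1)\bigl(\chi(c)_{00} - 240 n\bigr)}{h + 2n - 1}, \qquad h := h_{ext}(c).
$$

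Next I would extract the sign. The hypothesis $h_{ext}(c) > 0$ forces the denominator $h + 2n - 1 > 0$ for all $n\ge 1$, and the bound displayed in the statement is non-negative, so any integer $n$ exceeding it satisfies $n\ge 1$; hence the sign of $\chi(c+24n)_{00}$ equals that of the numerator. Expanding the numerator as a polynomial in $n$ produces the downward-opening parabola
$$
N(n) = -240\,n^2 + M\,n + (h-1)\chi(c)_{00}, \qquad M = \chi(c)_{00} + \chi(c)_{11} - 240(h-1),
$$
whose larger real root — when the discriminant $M^2 + 960(h-1)\chi(c)_{00}$ is non-negative — equals $\frac{M + \sqrt{M^2 + 960(h-1)\chi(c)_{00}}}{480}$, and $N(n) < 0$ for every $n$ strictly larger than this root (and for all $n$ if the discriminant is negative, since the leading coefficient is negative).

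Finally I would majorize that root by the quantity in the statement. Replacing $M$ by $\abs{M}$ and $(h-1)\chi(c)_{00}$ by $\abs{(h-1)\chi(c)_{00}}$ can only increase it: indeed $\abs{M}\ge M$, and $\sqrt{M^2 + 960\abs{(h-1)\chi(c)_{00}}} \ge \sqrt{M^2 + 960(h-1)\chi(c)_{00}}$ when $(h-1)\chi(c)_{00}\ge 0$, while $\sqrt{M^2 + 960\abs{(h-1)\chi(c)_{00}}} \ge \abs{M} \ge \sqrt{M^2 + 960(h-1)\chi(c)_{00}}$ when $(h-1)\chi(c)_{00} < 0$. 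Hence any positive integer $n$ with $n > \frac{\abs{M} + \sqrt{M^2 + 960\abs{(h-1)\chi(c)_{00}}}}{480}$ lies beyond the larger root of $N$, so $N(n) < 0$ and therefore $\chi(c+24n)_{00} < 0$.

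There is no real obstacle here: the substance is already contained in Lemmas~\ref{lem: chi recurrence}, \ref{lem: diagonal recurrence}, and~\ref{lem: characteristic off diagonal non-zero}, and the remainder is a closed-form quadratic estimate. The only point requiring care is the sign bookkeeping in the last step — producing a single explicit bound that is valid regardless of the signs of $M$ and of $(h-1)\chi(c)_{00}$, and regardless of whether $N$ has real roots — which is precisely why the absolute values appear in the stated estimate.
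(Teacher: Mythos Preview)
Your proof is correct and follows essentially the same approach as the paper's: reduce to the closed form for $\chi(c+24n)_{00}$ via Lemmas~\ref{lem: chi recurrence} and~\ref{lem: diagonal recurrence}, then analyze the sign of the resulting concave-down quadratic numerator. You spell out a few points (the domain check for iterating $f_+$, the guarantee $n\ge 1$, and the majorization with absolute values) that the paper leaves implicit, but there is no substantive difference.
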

\begin{proof}
Set $a = \chi(c)_{00}$, $d = \chi(c)_{11}$, and $h = h_{ext}(c)$.
By Lemma \ref{lem: chi recurrence} and Lemma \ref{lem: diagonal recurrence}, we have
$$
\chi(c+24n)_{00} = \frac{n d+(h+n-1)(a-240n)}{h+2n-1}.
$$
Since we assume $h > 0$, when $n \ge 0$ we have $h + 2n - 1 > 0$.
Thus $\chi(c+24n)_{00} < 0$ if and only if 
\begin{equation}\label{eqn: plus quadratic}
0 > n d+(h+n-1)(a-240n) = -240n^2 + M n + (h-1) \chi_{00}.
\end{equation}
The right-hand side of \eqref{eqn: plus quadratic} is a quadratic polynomial in $n$ which is concave down.
Thus \eqref{eqn: plus quadratic} holds when $n$ exceeds the largest real root of that quadratic (and it holds trivially if the quadratic has no real roots).
The conclusion of the lemma now follows immediately from the quadratic formula.
\end{proof}

The purpose of Lemma \ref{lem: positive c recurrence} is to reduce the question of classifying extremal VOAs to a finite one.
We apply it 24 times to obtain the following.

\begin{samepage}
\begin{theorem}\label{thm: cmax}
For every rank two modular tensor category $\cC$, there is an explicitly computable $c_{max}$ such that there are no extremal VOAs in the genus $(\cC,c)$ when $c > c_{max}$.
The values are given in Table \ref{tab: table cmax}.
\begin{table}[!htbp]
\begin{equation*}
\renewcommand*{\arraystretch}{1.1}
\begin{array}{|c|c|c|}
\hline
\# & \cC & c_{max}\\
\hline
 \,\,1 \,\, & \,\,\Semion \,\, & \,\, 57 \,\,\\
\hline
 2  & \overline{\Semion}  &  39\\
\hline
 3  & \Semion^\dagger  &  67\\
\hline
 4  & \overline{\Semion}^\dagger  &  37\\
\hline
 5  & \Fib  &  \frac{174}{5}\\
\hline
 6  & \overline{\Fib}  &  \frac{186}{5}\\
\hline
 7  & \LeeYang  &  \frac{338}{5}\\
\hline
 8  & \overline{\LeeYang}  &  \frac{222}{5}\\
\hline
\end{array}
\renewcommand*{\arraystretch}{1}
\end{equation*}
\caption{$c_{max}$ values for rank two modular tensor categories}
\label{tab: table cmax}
\end{table}
\FloatBarrier
\end{theorem}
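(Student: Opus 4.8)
The plan is to convert Lemma \ref{lem: positive c recurrence} into an effective finiteness bound, using the fact recorded after Theorem \ref{thm: tener wang} that an extremal VOA $V$ of central charge $c$ with $\Rep(V) = \cC$ has $\chi(c)_{00} = \dim V(1) \ge 0$; hence $\chi(c)_{00} < 0$ forces the non-existence of an extremal VOA in the genus $(\cC, c)$. It therefore suffices to produce, for each of the eight categories $\cC$ of Table \ref{tab: 8 cases}, an explicit $c_{max}$ beyond which $\chi(c)_{00} < 0$ for every $c$ in the appropriate class mod $8$.

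Fix such a $\cC$. The central charges in the appropriate class mod $8$ form a coset $c_0 + 8\Z$, and the shift $c \mapsto c + 24$ splits this coset into exactly three arithmetic progressions $c_0 + r + 24\Z$ with $r \in \{0, 8, 16\}$. Along each progression $h_{ext}$ is strictly increasing, since $h_{ext}(c \pm 24) = h_{ext}(c) \pm 2$ as noted in the proof of Lemma \ref{lem: chi recurrence}, so each progression has a smallest member $c^\ast$ with $h_{ext}(c^\ast) > 0$, and I would take this as a base point. For each of the resulting $8 \times 3 = 24$ base points I would record the characteristic matrix $\chi(c^\ast)$. This is a purely modular object attached to $\rho_{c^\ast}$ and the bijective exponent $\Lambda(c^\ast)$, so it exists (Theorem \ref{thm: tener wang}) and can be computed regardless of whether a VOA is present: for instance by solving the differential equation \eqref{eqn: Xi differential equation} for the fundamental matrix, by identifying $\ch_0$ and $\ch_1$ with the characters of a known VOA when one is available, or by starting at a value of $\abs{c}$ small enough to handle directly and climbing up via the recurrence $f_\pm$ of Lemma \ref{lem: chi recurrence}. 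The resulting base-case data are collected in Appendix \ref{sec: A data}.

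Next, for each of the three progressions I would apply Lemma \ref{lem: positive c recurrence} at $c = c^\ast$. Because $h_{ext}(c^\ast) > 0$, the lemma yields the explicit quantity
$$
N(c^\ast) = \frac{\abs{M} + \sqrt{M^2 + 960\,\abs{\big(h_{ext}(c^\ast) - 1\big)\,\chi(c^\ast)_{00}}}}{480}, \qquad M = \chi(c^\ast)_{00} + \chi(c^\ast)_{11} - 240\big(h_{ext}(c^\ast) - 1\big),
$$
such that $\chi(c^\ast + 24n)_{00} < 0$ for every integer $n > N(c^\ast)$; note $N(c^\ast) \ge 0$ automatically, as the discriminant is at least $M^2$. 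Hence no extremal VOA lies in this progression once $c > c^\ast + 24 \lfloor N(c^\ast) \rfloor$. Setting $c_{max}(\cC)$ equal to the maximum of $c^\ast + 24 \lfloor N(c^\ast) \rfloor$ over the three progressions, any $c$ in the appropriate class mod $8$ with $c > c_{max}(\cC)$ lies in one of them and equals $c^\ast + 24n$ with $n > \lfloor N(c^\ast) \rfloor \ge 0$, hence $n \ge \lfloor N(c^\ast) \rfloor + 1 > N(c^\ast)$, so $\chi(c)_{00} < 0$ and $(\cC, c)$ contains no extremal VOA. Substituting the tabulated base-point data produces the values of $c_{max}$ in Table \ref{tab: table cmax}.

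The substantive part, and the step I expect to be the main obstacle, is the finite set of base cases: computing the characteristic matrices $\chi(c^\ast)$, and verifying $h_{ext}(c^\ast) > 0$, which requires identifying the integer $\ell \in \{0, \dots, 5\}$ that pins down $h_{ext}(c^\ast)$ within its class mod $1$ through \eqref{eqn: extremality}. Everything afterward is the explicit arithmetic of Lemma \ref{lem: positive c recurrence}, carried out $24$ times, which is routine.
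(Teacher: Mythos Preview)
Your proposal is correct and follows essentially the same route as the paper: split each class $c_0 + 8\Z$ into three progressions mod $24$, choose a base point $c^\ast$ with $h_{ext}(c^\ast)>0$ in each, compute $\chi(c^\ast)$, apply Lemma~\ref{lem: positive c recurrence} to obtain an explicit threshold, and set $c_{max}$ to be the maximum of $c^\ast + 24\lfloor N(c^\ast)\rfloor$ over the three progressions. The paper carries this out with the $24$ base points recorded in Table~\ref{tab: nmax}, which are exactly the smallest admissible $c$ with $h_{ext}>0$ in each progression, just as you propose.
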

\end{samepage}
\begin{proof}
Let us first take $\cC$ to be the Semion MTC.
In this case, $c \equiv 1$ mod $8$.
We consider first the case $c \equiv 1$ mod $24$.
For $c=1$, we can compute the characteristic matrix $\chi(1) = \begin{pmatrix} 3 & 26752\\ 2 & -247 \end{pmatrix}$, using for example using the method of \cite{TenerWang17} (based on \cite{BG}), or the method of hypergeometric series \cite{FrancMason14}.
We can compute $h_{ext}(1) = \frac14$ from the definition of $h_{ext}$ and the fact that $h \equiv \frac14$ mod $1$.
Applying Lemma \ref{lem: positive c recurrence} with this data, we see that $\chi(1+24n) < 0$ when $n > 0.298\ldots$.
Thus if $n_{max} = 0$, we have $\chi(1+24n) < 0$ when $n > n_{max}$.
By Theorem \ref{thm: tener wang}, there are no extremal VOAs in the genera $(\cC, 1+24n)$ when $n > n_{max}$.

We can repeat the above exercise for the values $c=9$ and $c=17$, and three times again for each row of Table \ref{tab: 8 cases}.
The resulting characteristic matrices, $h_{ext}$, and $n_{max}$ are given in Table \ref{tab: nmax}.
For each category $\cC$, the value $c_{max}$ in Table \ref{tab: table cmax} is the maximum of the three values of $c + 24 n_{max}$, corresponding to the three possible classes of $c$ mod $24$.
\end{proof}

\subsection{Recurrence for very negative $c$}\label{sec: 34 negative c}

We will show that for $n$ sufficiently large, we have $\abs{\chi(c-24n)_{10}} < 1$.
Since $\chi(c-24n)_{10} \ne 0$ by Lemma \ref{lem: characteristic off diagonal non-zero}, this will guarantee that $\chi(c-24n)_{10}$ is not an integer.
As with the case of very positive $c$, we will avoid finding an explicit expression for $f_{\pm}^n[\chi(c),h_{ext}(c)]$.
Instead, we extract the following pieces of the data which will be easier to work with.
Let $\alpha(c) = \chi(c)_{00} - \chi(c)_{11}$ and $\beta(c) = \chi(c)_{10} \chi(c)_{01}$.

The utility of studying $\beta(c)$ is the following.
\begin{lemma}\label{lem: beta controls chi}
Let $(\cC,c)$ be an admissible genus from Table \ref{tab: 8 cases}, and suppose that $\abs{\chi(c)_{10}} \le 1$ and $\abs{\beta(c-24n)} > 1$ for all $n \ge 1$.
Then $\abs{\chi(c-24n)_{10}} < 1$ for all $n \ge 0$.
\end{lemma}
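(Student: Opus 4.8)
The plan is to reduce the statement to a scalar recurrence for the single entry $\chi(c-24n)_{10}$ and then iterate it, using the hypothesis on $\beta$ to make the recurrence contract.

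First I would invoke Lemma \ref{lem: chi recurrence} and simply read off the bottom-left entry of $f_-$: that entry is $1/y$, the reciprocal of the $(0,1)$-entry of the input matrix, so (using $\chi(c)_{01}\neq 0$, which holds by Lemma \ref{lem: characteristic off diagonal non-zero}) we get $\chi(c-24)_{10} = 1/\chi(c)_{01}$. Since $\beta(c) = \chi(c)_{10}\,\chi(c)_{01}$, this rearranges to $\chi(c-24)_{10} = \chi(c)_{10}/\beta(c)$, and applying the same identity with $c$ replaced by $c-24n$ (each $(\cC,c-24n)$ is again an admissible genus from Table \ref{tab: 8 cases}) gives
\[
\chi\bigl(c-24(n+1)\bigr)_{10} = \frac{\chi(c-24n)_{10}}{\beta(c-24n)}\qquad(n\ge 0),
\]
so that by telescoping $\chi(c-24n)_{10} = \chi(c)_{10}\prod_{k=0}^{n-1}\beta(c-24k)^{-1}$.

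Next I would feed in the hypotheses. Taking absolute values and peeling off the $k=0$ factor, for $n\ge1$ we have $\abs{\chi(c-24n)_{10}} = \abs{\chi(c)_{10}}\,\abs{\beta(c)}^{-1}\prod_{k=1}^{n-1}\abs{\beta(c-24k)}^{-1}$; since $\abs{\beta(c-24k)}>1$ for every $k\ge 1$, the trailing product is at most $1$, and strictly less than $1$ once $n\ge 2$. Hence $\abs{\chi(c-24n)_{10}}$ is nonincreasing in $n$ for $n\ge 1$ (and eventually strictly decreasing), and the lemma comes down to bounding its first two values: $\abs{\chi(c)_{10}}$ at $n=0$, which is the hypothesis, and $\abs{\chi(c-24)_{10}} = 1/\abs{\chi(c)_{01}}$ at $n=1$.

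The step I expect to be the real obstacle is precisely this base case: the hypothesis bounds $\abs{\chi(c)_{10}}$ only by $1$ and says nothing about $\beta(c)$ itself, so one must still see that $\abs{\chi(c)_{10}}<1$ and that $\abs{\chi(c)_{01}}>1$. I would obtain these from the additional structure of the characteristic matrix for the eight admissible genera — a bound $\le 1$ on the small off-diagonal entry, combined with the constraints that the modular linear differential equation \eqref{eqn: Xi differential equation} places on $\chi$ (in particular on $\operatorname{tr}\chi$ and $\det\chi$, equivalently on $\det\Xi$ as a power of $\eta$), forces the companion entry $\chi(c)_{01}$ to be large, so that $1/\abs{\chi(c)_{01}}<1$ and $\abs{\beta(c)}=\abs{\chi(c)_{10}}\abs{\chi(c)_{01}}>1$ as well. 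With those two facts in hand the telescoped identity immediately yields $\abs{\chi(c-24n)_{10}}<1$ for all $n\ge 0$.
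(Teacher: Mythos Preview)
Your derivation of the scalar recurrence $\chi(c-24(n+1))_{10} = \chi(c-24n)_{10}/\beta(c-24n)$ and its telescoped form is exactly the paper's argument. You have also correctly spotted that the stated hypotheses do not by themselves handle the first two cases: nothing forces $\abs{\chi(c)_{10}} < 1$ strictly (only $\le 1$ is assumed), and to get $\abs{\chi(c-24)_{10}} < 1$ from the recurrence one needs $\abs{\beta(c)} > 1$, which the hypothesis on $\beta(c-24n)$ for $n \ge 1$ does not supply.

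Where your proposal goes wrong is the proposed repair. Appealing to MLDE constraints on $\operatorname{tr}\chi$ and $\det\Xi$ cannot close this gap: those constraints are not sharp enough to force $\abs{\chi(c)_{01}} > 1$ from $\abs{\chi(c)_{10}} \le 1$ alone, and the $n=0$ assertion of the conclusion is in fact false when $\abs{\chi(c)_{10}} = 1$ --- a case that actually occurs (see Table~\ref{tab: nmax negative}, the $\LeeYang$ rows at $c=\tfrac{18}{5}$ and $c=-\tfrac{22}{5}$, where $\chi_{10}(c)=1$). No structural argument can rescue the literal statement there.

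The honest resolution is that the lemma is very slightly mis-stated, and the paper's own proof has the same off-by-one slip (it invokes $\abs{\beta(c-24)}>1$ at the step where the displayed identity $\beta(c)=\chi(c)_{10}/\chi(c-24)_{10}$ actually requires $\abs{\beta(c)}>1$). What makes everything work in practice is that $\beta(c) > 1$ is verified separately: it is an explicit hypothesis of Lemma~\ref{lem: negative c bound} and is checked numerically in every row of Table~\ref{tab: nmax negative}. With $\abs{\beta(c-24n)} > 1$ for all $n \ge 0$, your telescoped identity immediately yields $\abs{\chi(c-24n)_{10}} < 1$ for all $n \ge 1$ (and $\le 1$ at $n=0$), which is exactly what Theorem~\ref{thm: cmin} needs. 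So drop the MLDE detour and simply note that the missing input $\abs{\beta(c)}>1$ is supplied in the application.
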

\begin{proof}
By Lemma \ref{lem: chi recurrence} we have $\chi(c-24)_{10} = \chi(c)_{01}^{-1}$.
Thus 
$$
\beta(c) = \chi(c)_{10}\chi(c)_{01} = \frac{\chi(c)_{10}}{\chi(c-24)_{10}}.
$$
Thus if we know that $\abs{\chi(c)_{10}} \le 1$ and $\abs{\beta(c-24)} > 1$,  we can conclude that $\abs{\chi(c-24)}_{10} < 1$.
We repeat this argument $n$ times to complete the proof.
\end{proof}

\begin{samepage}
The see how $\beta(c)$ depends on $c$, we introduce the function
$$
k:\C^2 \times (\R \setminus \Z) \to \C^2 \times (\R \setminus \Z)
$$
given by
$$
k[\alpha, \beta, h] = \left[\frac{\alpha(h-1) + 480(h-2)}{h-3} ,\,
\frac{ (h-3)^2 (h\beta - 746496) + (\alpha + 120(h-1))^2 }{(h-4)(h-3)^2}  , \, 
  h-2\right].
$$
\end{samepage}
This function was chosen so that:
\begin{lemma}\label{lem: beta evolution}
Let $(\cC, c)$ be an admissible genus from Table \ref{tab: 8 cases}.
Then 
$$
[\alpha(c-24), \beta(c-24), h_{ext}(c-24)] = k[\alpha(c), \beta(c), h_{ext}(c)].
$$
\end{lemma}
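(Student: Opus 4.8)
The plan is to feed the output of Lemma~\ref{lem: chi recurrence} directly into the definitions of $\alpha$ and $\beta$ and watch the dependence on the off-diagonal entries collapse. Write $\chi := \chi(c)$ with entries $x = \chi_{00}$, $y = \chi_{01}$, $z = \chi_{10}$, $w = \chi_{11}$, and set $h := h_{ext}(c)$; note that $y, z \ne 0$ by Lemma~\ref{lem: characteristic off diagonal non-zero}, so $f_-[\chi,h]$ is defined. By Lemma~\ref{lem: chi recurrence} the entries of $\chi(c-24)$ are exactly the four rational expressions appearing in the definition of $f_-$, and $h_{ext}(c-24) = h-2$, which already accounts for the third coordinate of $k$.

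For the first coordinate, I would compute $\alpha(c-24) = \chi(c-24)_{00} - \chi(c-24)_{11}$. The two diagonal entries of $f_-[\chi,h]$ share the common denominator $h-3$; in the difference of numerators the $\pm 240(h-2)$ terms add up to $480(h-2)$, while the $x$ and $-w$ terms regroup so that the numerator becomes $(x-w)(h-1) + 480(h-2)$. Since $\alpha(c) = x - w$, this is exactly $\tfrac{\alpha(c)(h-1) + 480(h-2)}{h-3}$, the first coordinate of $k[\alpha(c),\beta(c),h]$.

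For the second coordinate, I would compute $\beta(c-24) = \chi(c-24)_{10}\,\chi(c-24)_{01}$. Here $\chi(c-24)_{10} = 1/y$ while $\chi(c-24)_{01}$ is a rational expression multiplied by $y$, so the factors of $y$ cancel and $\beta(c-24) = \tfrac{h(h-3)^2 yz + (x-w+120(h-1))^2 - 746496(h-3)^2}{(h-4)(h-3)^2}$. The only remaining occurrences of the off-diagonal entries are in the product $yz = \chi_{01}\chi_{10} = \beta(c)$, and the only occurrences of $x,w$ are via $x-w = \alpha(c)$; substituting these and rewriting $h(h-3)^2\beta(c) - 746496(h-3)^2 = (h-3)^2(h\beta(c) - 746496)$ puts this in precisely the form of the second coordinate of $k$.

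There is no genuine obstacle: the content of the lemma is the observation — built into the very definitions of $\alpha$ and $\beta$ — that $f_-$ acts on the pair $(x-w,\ yz)$ in a closed form, the spurious factor $y$ in the $(0,1)$ entry of $f_-$ being exactly cancelled by the $1/y$ in the $(1,0)$ entry. The only point requiring care is that the denominators $h-3$ and $h-4$ are nonzero, which holds because $h_{ext}(c)$ is never an integer (as noted in the proof of Lemma~\ref{lem: chi recurrence}).
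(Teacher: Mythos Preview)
Your proposal is correct and follows exactly the approach the paper indicates: the paper's proof says only that the identity ``follows by direct algebraic manipulation applied to Lemma~\ref{lem: chi recurrence} and the formula for $f_-$'', and you have simply carried out that manipulation explicitly, including the relevant well-definedness checks.
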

\begin{proof}
This follows by direct algebraic manipulation applied to Lemma \ref{lem: chi recurrence} and the formula for $f_-$.
\end{proof}

It is now an algebra exercise to determine the long-term behavior of $\alpha(c-24n)$ and $\beta(c-24n)$.

\begin{lemma}\label{lem: k induction}
The $n$-fold iterate of $k$ is given by
\begin{align*}
k^n&[\alpha,\beta,h] =\\
 & \left[ \frac{ \alpha(h-1) + 480n(h-n-1)}{h-2n-1} , \right.\\
& \quad \frac{n(h-n-1)(\alpha+120(h-1))^2}{(h-2n)(h-2n-1)^2(h-2n-2)} + \frac{h(h-2) \beta - 746496n(h-n-1)}{(h-2n)(h-2n-2)},\\
& \quad h-2n 
\mathlarger{\mathlarger{\mathlarger{\mathlarger{]}}}}
\end{align*}
\end{lemma}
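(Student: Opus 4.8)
The plan is to prove the closed form by induction on $n$, using $k^{n+1} = k \circ k^n$. For the base case I would note that $n = 0$ gives the triple $[\alpha, \beta, h]$ (the first summand of the $\beta$-entry carries a factor $n$, and the second collapses to $\beta$), which is $k^0$; as a sanity check, for $n = 1$, placing the two summands of the $\beta$-entry over the common denominator $(h-3)^2(h-4)$ recovers the defining formula of $k$. For the inductive step I would assume $k^n[\alpha,\beta,h] = [\alpha_n, \beta_n, h_n]$ with $h_n = h - 2n$ and $\alpha_n$, $\beta_n$ the asserted expressions, and evaluate the definition of $k$ at $[\alpha_n, \beta_n, h_n]$. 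Since $h \notin \Z$, none of the denominators appearing ($h_n - 3$, $h_n - 4$, and $h-2n$, $h-2n-1$, $h-2n-2$) can vanish, so this is legitimate.

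The $h$-entry of $k^{n+1}$ is $h_n - 2 = h - 2(n+1)$, as wanted. For the $\alpha$-entry, $\alpha_{n+1} = \bigl(\alpha_n(h_n-1) + 480(h_n-2)\bigr)/(h_n-3)$; substituting the formula for $\alpha_n$ and simplifying, the claim reduces to the elementary telescoping identity $n(h-n-1) + (h-2n-2) = (n+1)(h-n-2)$. In the same computation one extracts the identity that drives the $\beta$-step: since $480n(h-n-1) + 120(h-2n-1)^2 = 120(h-1)^2$, one has
\[
\alpha_n + 120(h_n - 1) = \frac{(h-1)\bigl(\alpha + 120(h-1)\bigr)}{h-2n-1},
\qquad\text{hence}\qquad
\bigl(\alpha_n + 120(h_n-1)\bigr)^2 = \frac{(h-1)^2\bigl(\alpha+120(h-1)\bigr)^2}{(h-2n-1)^2}.
\]

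The $\beta$-entry is where the real work lies, but it is organized by a structural observation: the $\beta$-rule of $k$ is affine in $\beta$, and all of its $\alpha$-dependence is funneled through the single square $\bigl(\alpha + 120(h-1)\bigr)^2$ — and the claimed closed form preserves exactly this shape, writing $\beta_n = P_n + Q_n$ with $Q_n$ linear in $\beta$ (and $\alpha$-free) and $P_n$ a scalar multiple of $\bigl(\alpha + 120(h-1)\bigr)^2$. Substituting $\beta_n = P_n + Q_n$ and the square identity above into
\[
\beta_{n+1} = \frac{h_n\beta_n - 746496}{h_n-4} + \frac{\bigl(\alpha_n + 120(h_n-1)\bigr)^2}{(h_n-4)(h_n-3)^2},
\]
the pure $\beta$-linear part matches the target immediately, the remaining $746496$-terms match after the same telescoping identity used for $\alpha$, and matching the coefficients of $\bigl(\alpha + 120(h-1)\bigr)^2$ boils down to the single polynomial identity
\[
n(h-n-1)(h-2n-3)^2 + (h-1)^2(h-2n-2) = (n+1)(h-n-2)(h-2n-1)^2,
\]
a cubic in $h$ which one checks by direct expansion (after the substitution $u = h-2n$ both sides equal $(n+1)u^3 + (n^2-3n-4)u^2 + (5+3n-2n^2)u + (n^2-n-2)$).

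I expect the main obstacle to be purely organizational: keeping the $P_n$- and $Q_n$-parts of $\beta_n$ separated and tracking denominators carefully when clearing them in the $\beta$-step, which makes the computation long though entirely mechanical. There is no conceptual difficulty — once the displayed square identity and cubic identity are in hand, the induction closes.
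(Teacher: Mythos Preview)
Your proposal is correct and follows the same approach as the paper, which simply states that ``the formula may be verified by a straightforward induction using the definition of $k$.'' Your identification of the key simplifying identity $\alpha_n + 120(h_n-1) = (h-1)\bigl(\alpha + 120(h-1)\bigr)/(h-2n-1)$ and the cubic identity governing the $(\alpha+120(h-1))^2$-coefficient makes the induction transparent, and all three displayed identities check out.
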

\begin{proof}
The formula may be verified by a straightforward induction using the definition of $k$.
\end{proof}

We carefully examine the expression obtained in Lemma \ref{lem: k induction} to obtain a criterion to bound $\abs{\beta(c+24n)} > 1$.

\begin{lemma}\label{lem: negative c bound}
Let $(\cC, c)$ be an admissible genus from Table \ref{tab: 8 cases}, and suppose that $h_{ext}(c) < 0$ and $\beta(c) > 1$.
Then $\abs{\beta(c-24n)} > 1$ whenever
\begin{equation}\label{eqn: ugly assumption}
n > \frac{\abs{\alpha(c) - 120(1-h_{ext}(c))}(1-h_{ext}(c))}{860}.
\end{equation}
\end{lemma}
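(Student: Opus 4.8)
The plan is to feed the explicit formula for $\beta(c-24n)$ furnished by Lemma~\ref{lem: k induction} (applied through Lemma~\ref{lem: beta evolution}, with $n$-fold iteration) into a single rational expression chosen so that its denominator becomes a difference of squares, and then use the hypothesis $\beta(c)>1$ to collapse the leading terms. Throughout I would write $h=h_{ext}(c)$, $\alpha=\alpha(c)$, $\beta=\beta(c)$, and set $H=1-h$; since $h<0$ we have $H>1$, and since $h$ is never an integer (as noted in the proof of Lemma~\ref{lem: chi recurrence}) none of the denominators occurring below vanish. One may assume $n\ge 1$, as the case $n=0$ is the hypothesis. The main device is the substitution $m:=2n+1-h=2n+H>0$, which turns the three relevant quadratic quantities into differences of squares:
$$(h-2n)(h-2n-2)=m^2-1,\qquad 4n(n+1-h)=m^2-H^2,\qquad h(h-2)=H^2-1,$$
together with the bookkeeping facts $\alpha+120(h-1)=\alpha-120H$ and $746496=4\cdot 186624$.

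Using these identities to combine the two summands of $\beta(c-24n)$ from Lemma~\ref{lem: k induction} over the common denominator $(h-2n)(h-2n-1)^2(h-2n-2)=m^2(m^2-1)$, I would obtain
$$\beta(c-24n)=\frac{186624\,m^2(m^2-H^2)+m^2(H^2-1)\beta-\tfrac14(m^2-H^2)(\alpha-120H)^2}{m^2(m^2-1)}.$$
The denominator $m^2(m^2-1)=(h-2n)(h-2n-1)^2(h-2n-2)$ is strictly positive: it is the product of the square $(h-2n-1)^2$ with $(h-2n)(h-2n-2)$, a product of two negative numbers. In particular $\beta(c-24n)>0$, so it suffices to show that the numerator exceeds the denominator.

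This is where $\beta>1$ enters: the coefficient $m^2(H^2-1)$ is positive (here $H>1$ is essential), so replacing $\beta$ by $1$ only decreases the numerator, and a short computation then shows that the numerator minus the denominator equals exactly $(m^2-H^2)\bigl(186623\,m^2-\tfrac14(\alpha-120H)^2\bigr)$. For $n\ge1$ we have $m^2-H^2=4n(n+1-h)>0$, so it remains only to verify $186623\,m^2\ge\tfrac14(\alpha-120H)^2$, i.e. $m\ge|\alpha-120H|/\sqrt{746492}$ (note $4\cdot186623=746492$). Since $H>1$, the hypothesis $n>|\alpha-120H|\,H/860$ gives $m=2n+H>2n>|\alpha-120H|/430$, and $430<\sqrt{746492}$, so $m>|\alpha-120H|/\sqrt{746492}$, as needed. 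Hence the numerator strictly exceeds the denominator, so $\beta(c-24n)>1$, and in particular $|\beta(c-24n)|>1$.

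The only real obstacle is organizational: spotting the substitution $m=2n+H$ that makes the denominator factor as $m^2(m^2-1)$ and carefully tracking the several near-cancellations when combining the fractions. Once the numerator is written in terms of $m^2$, $H^2$, $\beta$ and $(\alpha-120H)^2$, the structural observation that feeding in $\beta\ge1$ converts the leading bracket into the clean factor $186623(m^2-H^2)$ is what makes the effective bound almost immediate and explains the appearance of the constant $\sqrt{746492}$ (hence the round number $860$) in the statement.
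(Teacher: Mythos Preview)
Your argument is correct, and in fact cleaner than the paper's. One small expository slip: the sentence ``In particular $\beta(c-24n)>0$, so it suffices to show that the numerator exceeds the denominator'' is a non sequitur as written --- positivity of the denominator does not by itself give positivity of $\beta_n$. What you really use is that once the numerator exceeds the (positive) denominator, $\beta_n>1$ follows directly; positivity of $\beta_n$ is then a consequence, not an input. The mathematics is unaffected.

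The substantive difference from the paper is organizational. The paper writes $\beta_n=p(n)/q(n)$, expands $r(n)=p(n)-q(n)$ as a degree-four polynomial in $n$ with messy coefficients, splits it into a manifestly positive part $r_1$ and a negative part $r_2$, and then invokes an auxiliary elementary estimate (Lemma~\ref{lem: silly estimate}) to show that the leading $2985968\,n^4$ term of $r_1$ dominates $r_2$. Your substitution $m=2n+H$ (with $H=1-h$) replaces this brute-force bounding by an exact factorization: after using $\beta>1$ to lower the numerator, the difference numerator $-$ denominator becomes precisely $(m^2-H^2)\bigl(186623\,m^2-\tfrac14(\alpha-120H)^2\bigr)$, from which the bound is immediate. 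This explains the constant $860$ more transparently (as a round-down of $\sqrt{746492}\approx 864$), avoids the auxiliary lemma entirely, and makes clear that the same argument yields the sharper conclusion $\beta(c-24n)>1$ rather than merely $|\beta(c-24n)|>1$. The paper's approach has the mild advantage of being mechanical --- no clever change of variables is needed --- but yours is both shorter and more illuminating.
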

\begin{proof}
Let $\beta_n = \beta(c-24n)$, which by Lemma \ref{lem: beta evolution} and Lemma \ref{lem: k induction} is given by the formula
$$
\beta_n = \frac{n(h-n-1)(\alpha+120(h-1))^2}{(h-2n)(h-2n-1)^2(h-2n-2)} + \frac{h(h-2) \beta - 746496n(h-n-1)}{(h-2n)(h-2n-2)}.
$$
where $\alpha = \alpha(c)$ and $h = h_{ext}(c) < 0$.

We can write $\beta_n = \frac{p(n)}{q(n)}$ for $q(n) = (h-2n)(h-2n-1)^2(h-2n-2)$ and $p$ a certain polynomial of $n$.
To show $\abs{\beta_n} > 1$, it suffices to show that $\abs{p(n)} > \abs{q(n)}$.
Since $q(n) > 0$ by inspection when $n \ge 1$, it suffices to show that $p(n) > q(n)$, or that $r(n):= p(n) - q(n) > 0$.
Through straightforward manipulation of the formula for $\beta_n$, we have $r(n) = r_1(n) + r_2(n)$ where
\begin{align*}
r_1(n) = &2985968 n^4 + 5971936 (1 - h) n^3 +\\
&+ (3732456 (1 - h)^2 + 
    4) n^2 + (746488 (1 - h)^3 + 4 (1 - h)) n +\\
&+ 4 (-h) n (2 - h) (1 - h + n)+ (-h) (2 - h) (\beta - 1) (1 - h + 2 n)^2\\
r_2(n) = &-(1 - h + n) n (\alpha - 120 (1 - h))^2
\end{align*}
Since $h < 0$ and $\beta > 1$, every term of $r_1(n)$ is positive.
Thus to show $r(n) > 0$ it suffices to find a term of $r_1(n)$ which controls $r_2(n)$.

To show 
$$
2985968n^4 + r_2(n) > 0,
$$ 
it suffices to show 
$$
2985968n^4 > (1-h+n)^2 (\alpha - 120(1-h))^2.
$$
This will follow from the simple estimate Lemma \ref{lem: silly estimate} below with $A = 2985968$, $B = (\alpha - 120(1-h))^2$, and $C = 1-h$, provided
$$
2985968n^2 > 2(\alpha - 120(1-h))^2(1+(1-h)^2).
$$
This would follow from 
$$
2985968n^2 > 4(\alpha - 120(1-h))^2(1-h)^2,
$$
or equivalently
$$
(2985968)^{\tfrac12} n > 2\abs{\alpha - 120(1-h)}(1-h).
$$
This is an immediate consequence of our assumption \eqref{eqn: ugly assumption}.
\end{proof}

We used the following simple observation in the proof of Lemma \ref{lem: negative c bound}.

\begin{lemma}\label{lem: silly estimate}
Let $A$,$B$, $C$, and $n$ be positive real numbers with $n \ge 1$.
Then if $$An^2 > 2B(1+C^2)$$ it follows that $$An^{4} > B(n+C)^2.$$
\end{lemma}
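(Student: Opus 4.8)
The plan is to clear the hypothesis by multiplication and reduce everything to a single elementary polynomial inequality. First I would multiply both sides of the assumed inequality $An^2 > 2B(1+C^2)$ by $n^2$; this is legitimate since $n > 0$, and the inequality stays strict, giving $An^4 > 2B(1+C^2)n^2$. Since our goal is $An^4 > B(n+C)^2$, it now suffices to prove $2B(1+C^2)n^2 \ge B(n+C)^2$, and as $B > 0$ this is equivalent to the clean inequality $2(1+C^2)n^2 \ge (n+C)^2$, which no longer involves $A$ or $B$.

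The second step is to verify $2(1+C^2)n^2 \ge (n+C)^2$ directly by expanding and regrouping. Computing the difference,
$$
2(1+C^2)n^2 - (n+C)^2 = n^2 - 2nC + C^2 + 2C^2(n^2-1) = (n-C)^2 + 2C^2(n^2-1).
$$
Both terms on the right are non-negative: $(n-C)^2 \ge 0$ automatically, and $2C^2(n^2-1) \ge 0$ precisely because $n \ge 1$. This is the only place the hypothesis $n \ge 1$ enters.

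Chaining the two steps yields $An^4 > 2B(1+C^2)n^2 \ge B(n+C)^2$, which is the claim. There is no real obstacle here; the only point requiring a moment's care is choosing the grouping in the expansion so that the role of $n \ge 1$ is transparent, and checking that the first inequality in the chain is strict (it is, since $n^2 > 0$ and the input inequality is strict) so that the conclusion is strict as stated.
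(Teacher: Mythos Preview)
Your proof is correct and follows essentially the same elementary approach as the paper: reduce $An^4 > B(n+C)^2$ to the hypothesis via a simple polynomial inequality that uses $n \ge 1$. Your decomposition $2(1+C^2)n^2 - (n+C)^2 = (n-C)^2 + 2C^2(n^2-1)$ is in fact a bit more direct than the paper's route, which first passes through $(n+C)^2 \le 2(n^2+C^2)$ and then rearranges $(An^2 - 2B)n^2 > 2BC^2$.
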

\begin{proof}
It suffices to show $An^{4} > 2B(n^2 + C^2)$, or equivalently $(An^2-2B)n^2 > 2BC^2$.
Instead, we may show $An^2-2B > 2BC^2$ since $n \ge 1$ and $An^2 > 2B$.
This follows immediately from our hypothesis.
\end{proof}

We now apply Lemma \ref{lem: negative c bound} in 24 cases to obtain a lower bound on the central charge of extremal VOAs.

\begin{samepage}

\begin{theorem}\label{thm: cmin}
For every rank two modular tensor category $\cC$, there is an explicitly computable $c_{min}$ such that there are no extremal VOAs in the genus $(\cC,c)$ when $c < c_{min}$.
The values are given in Table \ref{tab: table cmin}.
\FloatBarrier
\begin{table}[!htbp]
\begin{equation*}
\renewcommand*{\arraystretch}{1.1}
\begin{array}{|c|c|c|}
\hline
\# & \cC & c_{min}\\
\hline
 \,\,1 \,\, & \,\,\Semion \,\, & \,\, -23 \,\,\\
\hline
 2  & \overline{\Semion}  &  -17\\
\hline
 3  & \Semion^\dagger  &  -13\\
\hline
 4  & \overline{\Semion}^\dagger  &  -19\\
\hline
 5  & \Fib  &  -\frac{106}{5}\\
\hline
 6  & \overline{\Fib}  &  -\frac{94}{5}\\
\hline
 7  & \LeeYang  &  -\frac{62}{5}\\
\hline
 8  & \overline{\LeeYang}  &  -\frac{98}{5}\\
\hline
\end{array}
\renewcommand*{\arraystretch}{1}
\end{equation*}
\caption{$c_{min}$ values for rank two modular tensor categories}
\label{tab: table cmin}
\end{table}
\end{theorem}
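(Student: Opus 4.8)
The proof will follow the template of the proof of Theorem \ref{thm: cmax}, using the machinery of Section \ref{sec: 34 negative c} in place of the diagonal recurrence of Section \ref{sec: 33 positive c}. Fix a rank two modular tensor category $\cC$ and one of the three residue classes of $c$ modulo $24$ compatible with the class of $c$ modulo $8$ attached to $\cC$ in Table \ref{tab: 8 cases}. The plan is to produce a charge $c^{(0)}$ in this class at which all four of the following hold: (i) $h_{ext}(c^{(0)}) < 0$; (ii) $\beta(c^{(0)}) > 1$; (iii) $\abs{\chi(c^{(0)})_{10}} \le 1$; and (iv) the right-hand side of \eqref{eqn: ugly assumption}, evaluated at $c^{(0)}$, is strictly less than $1$. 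Given such a $c^{(0)}$: conditions (i) and (ii) let us invoke Lemma \ref{lem: negative c bound}, whose conclusion, in view of (iv), reads $\abs{\beta(c^{(0)} - 24n)} > 1$ for every $n \ge 1$; combined with (iii), Lemma \ref{lem: beta controls chi} upgrades this to $\abs{\chi(c^{(0)} - 24n)_{10}} < 1$ for all $n \ge 0$. Since $\chi(c^{(0)} - 24n)_{10} \ne 0$ by Lemma \ref{lem: characteristic off diagonal non-zero}, this entry is a nonzero number of modulus less than $1$, hence not a nonnegative integer, so by Theorem \ref{thm: tener wang} there is no extremal VOA in the genus $(\cC, c^{(0)} - 24n)$ for any $n \ge 0$.

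To produce $c^{(0)}$ I would argue exactly as in Theorem \ref{thm: cmax}: compute one characteristic matrix $\chi(c_\star)$ at a convenient charge $c_\star$ in the chosen class (for instance by the method of \cite{TenerWang17}, based on \cite{BG}, or by hypergeometric series \cite{FrancMason14}), extract $\alpha(c_\star)$, $\beta(c_\star)$ and $h_{ext}(c_\star)$, and then step downward using Lemma \ref{lem: chi recurrence}, or equivalently propagate the triple $(\alpha,\beta,h_{ext})$ by the map $k$ of Lemma \ref{lem: beta evolution}, until a charge satisfying (i)--(iv) is reached. Two observations ensure this happens after finitely many steps. First, the numerator $\abs{\alpha(c) - 120(1 - h_{ext}(c))}(1 - h_{ext}(c))$ of \eqref{eqn: ugly assumption} is constant along each residue class once $h_{ext}(c) < 1$: this is immediate from the first component of $k^n$ in Lemma \ref{lem: k induction} together with $h_{ext}(c - 24n) = h_{ext}(c) - 2n$, so (iv) holds at every sufficiently negative charge in the class precisely when this class invariant is less than $860$. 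Second, inspecting the second component of $k^n$ in Lemma \ref{lem: k induction} shows $\beta(c - 24n) \to 746496/4 > 1$ as $n \to \infty$, while $\abs{\chi(c - 24n)_{10}}$ drops strictly at each step for which $\abs{\beta} > 1$ (using the identity $\chi(c-24)_{10} = \chi(c)_{10}/\beta(c)$ from the proof of Lemma \ref{lem: beta controls chi}); hence (ii) and (iii) hold for all sufficiently negative $c$ in the class, and (i) holds once $c$ drops below an explicit constant by \eqref{eqn: extremality}. Running this for the $24$ cases (eight categories, three residue classes modulo $24$) and assembling the three per-class thresholds just as $c_{max}$ was assembled from the three values $c + 24 n_{max}$ in Theorem \ref{thm: cmax} yields the values of $c_{min}$ in Table \ref{tab: table cmin}, with the intermediate numerical data collected in Appendix \ref{sec: A data} (the analogue of Table \ref{tab: nmax}).

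I do not expect a conceptual obstacle here: the genuinely delicate step, the polynomial estimate at the heart of Lemma \ref{lem: negative c bound} comparing the quartic growth of $r_1(n)$ with $r_2(n)$, has already been carried out, so what remains is a finite computation together with its bookkeeping. The point requiring care is the verification, in each of the $24$ cases, that the downward iteration really does reach a charge at which all of (i)--(iv) hold simultaneously; this is exactly what the limiting analysis of the $k$-iteration above is for. In the few cases where the class invariant of \eqref{eqn: ugly assumption} is at least $860$ (so that (iv) cannot hold), Lemma \ref{lem: negative c bound} still gives $\abs{\beta(c^{(0)} - 24n)} > 1$ for all $n$ past an explicit bound, and one verifies condition (ii) directly at the bounded number of remaining charges before applying Lemma \ref{lem: beta controls chi}.
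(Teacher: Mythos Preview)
Your proposal is correct and follows essentially the same approach as the paper: pick, for each of the 24 residue classes, a base charge $c^{(0)}$ with $h_{ext}(c^{(0)})<0$, verify the hypotheses of Lemma \ref{lem: negative c bound} and Lemma \ref{lem: beta controls chi} there, and conclude via Theorem \ref{thm: tener wang} that $\chi_{10}$ fails to be a non-negative integer for all smaller $c$ in the class; the paper does exactly this, with the per-class data recorded in Table \ref{tab: nmax negative} (where in fact $n_{max}=0$ in every case). Your elaboration that $(\alpha(c)-120(1-h_{ext}(c)))(1-h_{ext}(c))$ is a class invariant (so that condition (iv) is stable under $c\mapsto c-24$) and that $\beta(c-24n)\to 746496/4$ are nice observations explaining in advance why the search terminates, but they go beyond what the paper actually uses, since the direct computation already yields $n_{max}=0$ everywhere.
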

\end{samepage}
\begin{proof}
As in the proof of Theorem \ref{thm: cmax}, we will work through the necessary computation when $\cC = \Semion$ and obtain a bound which holds for $c \equiv 1$ mod $24$.
Since $h_{ext}(1) > 0$, we must instead consider $c=-23$ in order to apply Lemma \ref{lem: negative c bound}.
We compute $h_{ext}(-23) = -\frac{7}{4}$ using \eqref{eqn: extremality}, and we compute 
$$
\chi(-23) = \begin{pmatrix}
\frac{713}{11} & \frac{57264144384}{11}\\
\frac{1}{26752} & -\frac{3397}{11}
\end{pmatrix}
$$
and from there $\alpha(-23) = \frac{4110}{11}$ and $\beta(-23) = \frac{23546112}{121}$.
Thus by Lemma \ref{lem: negative c bound} we have $\abs{\beta(-23-24n)} > 1$ when $n > 0.13\ldots$.
Taking $n_{max} = 0$, we have $\abs{\beta(-23-24n)} > 1$ when $n > n_{max}$.
As $\abs{\chi(-23-24n_{max})_{10}} < 1$, we conclude that $\abs{\chi(-23-24n)_{10}} < 1$ for all $n > n_{max}$, and thus by Theorem \ref{thm: tener wang} there cannot be an extremal VOA in the genus $(\cC, c)$ when $c < -23$ and $c \equiv 1$ mod $24$. 
We repeat this argument for the other two equivalence classes of $c$ mod $24$, and the value $c_{min}$ from Table \ref{tab: table cmin} is the minimum of the allowed values.

We apply the above procedure to each of the $8$ modular categories appearing in Table \ref{tab: 8 cases}.
The data from each of the cases is given in Table \ref{tab: nmax negative}.
\end{proof}

\subsection{Main result}\label{sec: 35 main results}

Combining Theorem \ref{thm: cmax} and Theorem \ref{thm: cmin}, we obtain for every rank two modular tensor category $\cC$ a pair of numbers $c_{min}$ and $c_{max}$ such that if $V$ is an extremal VOA in the genus $(\cC,c)$, then $c_{min} \le c \le c_{max}$.
We can now compute the characteristic matrix of every remaining pair $(\cC,c)$ (e.g. by Lemma \ref{lem: chi recurrence}), and throw away any for which the first column does not consist of positive integers.
We are left with 15 possibilities, all but one of which are realized by VOAs which have previously been studied.
The remaining character vector is realized by a VOA constructed in Section \ref{sec: new VOA}.
We summarize the result here:

\newpage

\begin{samepage}
\begin{theorem}\label{thm: main theorem}
Let $V$ be a strongly rational extremal VOA with two simple modules.
Then it lies in one of the genera listed in Table \ref{tab: genera of extremal} (and its character vector is given in Table \ref{tab: characters}).

\FloatBarrier
\begin{table}[!htbp]
\begin{minipage}[t]{0.45\textwidth}
$$
\renewcommand*{\arraystretch}{1.3}
\begin{array}{|c|c|c|c|c|}
\hline
\cC & c & \mbox{Realization} & h_{ext} & \ell\\
\hline
\Semion & 1 & A_{1,1} &
\frac14 & 0
\\
\hline
\Semion & 9 & A_{1,1} \otimes E_{8,1} &
\frac14 & 4\\
\hline
\Semion & 17 & \mbox{\cite{GHM}} &
\frac54 & 2\\
\hline
\Semion & 33 & \S \ref{sec: new VOA} &
\frac94 & 4\\
\hline
\overline{\Semion} & 7 & E_{7,1} &
\frac34 & 0\\
\hline
\overline{\Semion} & 15 & E_{7,1} \otimes E_{8,1} &
\frac34 & 4\\
\hline
\overline{\Semion} & 23 & \mbox{\cite{GHM}} &
\frac74 & 2\\
\hline
\end{array}
$$
\end{minipage}
\quad
\begin{minipage}[t]{0.45\textwidth}
$$
\renewcommand*{\arraystretch}{1.3}
\begin{array}{|c|c|c|c|c|}
\hline
\cC & c & \mbox{Realization} & h_{ext} & \ell\\
\hline
\Fib & \frac{14}{5} & G_{2,1} &
\frac25 & 0\\
\hline
\Fib & \frac{54}{5} & G_{2,1} \otimes E_{8,1} &
\frac25 & 4\\
\hline
\Fib & \frac{94}{5} & \mbox{\cite{GHM}} &
\frac75 & 2\\
\hline
\overline{\Fib} & \frac{26}{5} & F_{4,1} &
\frac35 & 0\\
\hline
\overline{\Fib} & \frac{66}{5} & F_{4,1} \otimes E_{8,1} &
\frac35 & 4\\
\hline
\overline{\Fib} & \frac{106}{5} & \mbox{\cite{GHM}} &
\frac85 & 2\\
\hline
\LeeYang & -\frac{22}{5} & \LeeYang &
-\frac15 & 0\\ 
\hline
\LeeYang & \frac{18}{5} & \operatorname{Y-L} \otimes E_{8,1}&
-\frac15 & 4\\ 
\hline
\end{array}
\renewcommand*{\arraystretch}{1}
$$
\end{minipage}
\caption{Genera of extremal VOAs corresponding to rank two modular tensor categories}
\label{tab: genera of extremal}
\end{table}
\renewcommand*{\arraystretch}{1}
\end{theorem}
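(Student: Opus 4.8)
The plan is to reduce the classification to a finite computation and then carry it out. By Theorem~\ref{thm: cmax} and Theorem~\ref{thm: cmin}, if $V$ is an extremal VOA with $\Rep(V) = \cC$ a rank two modular tensor category, then its central charge $c$ lies in the interval $[c_{min}, c_{max}]$ with $c_{min}, c_{max}$ as in Tables~\ref{tab: table cmin} and~\ref{tab: table cmax}, and $c$ is constrained to a single class mod $8$ by $\cC$ (Table~\ref{tab: 8 cases}). Hence for each of the eight rank two MTCs there are only finitely many admissible genera $(\cC, c)$ to examine — at most eleven apiece.

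For each such genus I would compute the characteristic matrix $\chi(c)$: fix a base value of $c$ in each of the three relevant classes mod $24$ (as in the proofs of Theorems~\ref{thm: cmax} and~\ref{thm: cmin}, where $\chi$ was computed directly, e.g.\ by the method of \cite{TenerWang17}), and reach any other $c$ in the allowed range by iterating the recurrence $f_\pm$ of Lemma~\ref{lem: chi recurrence}. By Theorem~\ref{thm: tener wang}, the character vector of a hypothetical extremal VOA in $(\cC,c)$ is the first column of the fundamental matrix $\Xi(c)$, so $\chi(c)_{00} = \dim V(1)$ must be a non-negative integer and $\chi(c)_{10} = \dim M_1(h_{ext})$ must be a positive integer. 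Discarding every $(\cC,c)$ in the finite range that fails one of these two necessary conditions should leave exactly the fifteen genera of Table~\ref{tab: genera of extremal}; inspection of that table shows the fifteen have fifteen distinct central charges, hence give fifteen distinct character vectors. This bounds the number of extremal character vectors above by fifteen.

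It then remains to exhibit a VOA realizing each of the fifteen, which simultaneously verifies that the fifteen character vectors are genuine (all Fourier coefficients non-negative integers — a property not guaranteed by the vector-valued modular form alone) and pins the count at exactly fifteen. Fourteen are realized by known VOAs: the level-one affine VOAs $A_{1,1}$, $E_{7,1}$, $G_{2,1}$, $F_{4,1}$ and the $\LeeYang$ model, their tensor products with $E_{8,1}$, and the $\ell = 2$ examples of \cite{GHM}. For each one checks that the character vector agrees with the first column of $\Xi(c)$ by comparing finitely many Fourier coefficients, the rest being forced by the $\C[J]$-module structure of $\cM(\rho_c)$. The last genus, $(\Semion, 33)$ with $\ell = 4$, is realized by the VOA constructed in Section~\ref{sec: new VOA}. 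The explicit character vectors are recorded in Table~\ref{tab: characters}.

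The main obstacle is the $(\Semion, 33)$ case: unlike the other fourteen it is not realized by any previously studied VOA, so the finite search only certifies that its character survives the elementary positivity and integrality tests — producing an actual VOA is genuine work, carried out in Section~\ref{sec: new VOA}. Everything else is bookkeeping: organizing the roughly two dozen iterations of $f_\pm$ (numerical data collected in Appendix~\ref{sec: A data}), checking that the conditions imposed on $\chi(c)_{00}$ and $\chi(c)_{10}$ are genuinely necessary for the existence of a VOA rather than merely of a vector-valued modular form, and confirming that the enumeration of $c$ in $[c_{min}, c_{max}]$ is exhaustive.
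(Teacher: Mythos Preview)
Your proposal is correct and follows essentially the same approach as the paper: use Theorems~\ref{thm: cmax} and~\ref{thm: cmin} to bound $c$ to a finite range, compute $\chi(c)$ for each remaining genus via the recurrence of Lemma~\ref{lem: chi recurrence}, discard those whose first column fails the necessary integrality/positivity constraints, and observe that the fifteen survivors are all realized (the new $c=33$ case by Section~\ref{sec: new VOA}). Your write-up is in fact slightly more careful than the paper's in distinguishing that $\chi(c)_{00}=\dim V(1)$ need only be a \emph{non-negative} integer (witness Yang--Lee at $c=-\tfrac{22}{5}$) while $\chi(c)_{10}=\dim M(h_{ext})$ must be strictly positive.
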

\end{samepage}
\bigskip

One of the main purposes of establishing classification results such as Theorem \ref{thm: main theorem} is to find interesting new examples, such as the VOA in the genus $(\Semion, 33)$ constructed in Section \ref{sec: new VOA}.
The most interesting genera for which no VOA realizations are known are $(\DHaag,8n)$, where $\DHaag$ is the double of the Haagerup fusion category.
Evans and Gannon computed possible character vectors for potential ``Haagerup VOAs'' in the cases $n=1,2,3$ \cite{EvansGannon11}, and used these characters to suggest strategies for constructing them.
Subsequently Gannon analyzed the possible Lie algebra structures on the weight one vectors of Haagerup VOAs, as well as the structure of their cosets.
Despite all of the circumstantial evidence, however, no construction has been found for a Haagerup VOA.

The success in constructing a $(\Semion,33)$ VOA may be regarded as further evidence of the fruitfulness of the Evans-Gannon approach to the Haagerup VOA.
While the $\Semion$ category is quite a bit simpler than $\DHaag$, the central charge $c=33$ is in practice quite large compared to $c=8,16$, or $24$, which is a source of added difficulty.

In the case of the $(\Semion,33)$ VOA, the subVOA generated by weight 1 vectors is of type $A_{1,1}$.
The coset of this affine VOA is of independent interest, and we record here its character vector
$$
q^{-32/24}
\begin{pmatrix}
1 + 0q + 69616q^2 + 34668544q^3 + \cdots\\
q^{9/4}(426192 + 121366368q + \cdots)\\
q^{7/4}(10245 + 11330970q + \cdots)\\
q^{2}(69888 + 34664448q + \cdots)\\
\end{pmatrix}
$$
as well as the character vector of its holomorphic extension (the twisted orbifold of the rank 32 Barnes-Wall lattice):
$$
q^{-32/24}(1 + 0q + 139504q^2 + 69332992q^3 + \cdots).
$$

\newpage

\section{Construction of the extremal $c=33$ example}\label{sec: new VOA}

\medskip

\baselineskip 5.2mm

The goal of this section is to prove the following theorem.

\begin{theorem}\label{thm:0}
There exists an extremal VOA in the genus $(\Semion, 33)$.
\end{theorem}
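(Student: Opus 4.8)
The plan is to realize $V$ as a $\Z/2$ simple-current extension of $A_{1,1}\otimes C$, with $C$ destined to be the coset subVOA $\mathrm{Com}(A_{1,1},V)$. The target genus dictates what $C$ must be. If $V\in(\Semion,33)$ is a strongly rational extremal VOA, then by Table \ref{tab: genera of extremal} its nontrivial simple module has lowest weight $h_{ext}(33)=\tfrac94$, while its weight-one vectors span a copy of $\mathfrak{sl}_2$ and hence generate a subVOA of type $A_{1,1}$; the coset $C$ is then a strongly rational VOA of central charge $32$ with no weight-one states, whose four simple modules $C=C^{(0)},C^{(a)},C^{(b)},C^{(ab)}$ have lowest conformal weights $0,\tfrac94,\tfrac74,2$ and fusion ring that of $\Semion\boxtimes\overline{\Semion}$ (so $C^{(a)}$ generates a copy of $\Semion$, $C^{(b)}$ generates a copy of $\overline{\Semion}$, and $C^{(ab)}=C^{(a)}\boxtimes C^{(b)}$ is a \emph{bosonic} order-two simple current). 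The crux is to construct such a $C$; the remaining steps are assembly.

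To build $C$ I would first produce its holomorphic extension. Let $U:=V_{\mathrm{BW}_{32}}^{\mathrm{orb}}$ be the $\Z/2$-orbifold by $-1$ of the lattice VOA of the rank-$32$ Barnes--Wall lattice (the ``twisted orbifold'', constructed in the standard way, analogous to the Monster VOA). Since $\mathrm{BW}_{32}$ has minimal norm $4$ and its $(-1)$-twisted module has lowest weight $\tfrac{32}{16}=2$, the VOA $U$ is strongly rational and holomorphic of central charge $32$ with no weight-one states. I would then exhibit an order-two automorphism $\hat\sigma$ of $U$ for which $C:=U^{\hat\sigma}$ has the module structure described above: the untwisted sector furnishes $C^{(0)}=C$ (weight $0$) and the $(-1)$-eigenspace of $\hat\sigma$ acting on $U$, which is $C^{(ab)}$ (weight $2$, so that $U=C^{(0)}\oplus C^{(ab)}$ is precisely the holomorphic extension of $C$ by the bosonic simple current $C^{(ab)}$), while the $\hat\sigma$-twisted $U$-module decomposes under $\hat\sigma$ into the two $C$-modules $C^{(b)}$ and $C^{(a)}$, of lowest weights $\tfrac74$ and $\tfrac94$. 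The general theory of cyclic holomorphic orbifolds then shows that $C$ is strongly rational with exactly these four simple modules and the asserted fusion ring. (Alternatively, $C$ might be assembled as an iterated abelian orbifold of a rank-$32$ lattice VOA with roots.)

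With $C$ in hand, write $N$ for the nontrivial simple $A_{1,1}$-module (lowest weight $\tfrac14$) and set
\begin{equation*}
V:=(A_{1,1}\otimes C)\oplus(N\otimes C^{(b)}).
\end{equation*}
The module $N\otimes C^{(b)}$ is an order-two simple current of $A_{1,1}\otimes C$ of conformal weight $\tfrac14+\tfrac74=2\in\Z$, and its self-braiding is trivial because $\Semion$ and $\overline{\Semion}$ are complex-conjugate categories, so the self-braidings of their nontrivial objects are inverse to one another. Hence $V$ carries the structure of a strongly rational VOA of central charge $1+32=33$. The category $\Rep(V)$ is the condensation of $\Rep(A_{1,1}\otimes C)=\Semion\boxtimes(\Semion\boxtimes\overline{\Semion})$ along this boson; a short computation with the induced discriminant form on $(\Z/2)^3$ identifies the result as the rank-two category $\Semion$, the nontrivial simple $V$-module being $(A_{1,1}\otimes C^{(a)})\oplus(N\otimes C^{(ab)})$, of lowest conformal weight $\min(0+\tfrac94,\tfrac14+2)=\tfrac94$. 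Thus $V$ is a strongly rational VOA in the admissible genus $(\Semion,33)$ with $h=\tfrac94$, so $\ell=1+\tfrac{33}{2}-6\cdot\tfrac94=4$ satisfies $0\le\ell<6$ and $V$ is extremal; its full character vector is then forced by Theorem \ref{thm: tener wang}, and in particular $\dim V(1)=3$, consistent with the weight-one space being $\mathfrak{sl}_2$.

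The main obstacle is the middle step: producing the automorphism $\hat\sigma$ of $V_{\mathrm{BW}_{32}}^{\mathrm{orb}}$ and verifying that $C=U^{\hat\sigma}$ has precisely the four predicted simple modules with conformal weights $0,\tfrac74,2,\tfrac94$. This requires concrete control of $\mathrm{Aut}(V_{\mathrm{BW}_{32}}^{\mathrm{orb}})$, of the lowest weights of $\hat\sigma$-twisted modules (which is what forces the twisted sector to contribute weights $\tfrac74$ and $\tfrac94$, rather than, say, $\tfrac34$ and $\tfrac54$, which would not yield an extremal $V$), and of the resulting orbifold fusion rules. The other ingredients are, by contrast, applications of standard machinery — strong rationality of solvable orbifolds and of simple-current extensions, together with the categorical bookkeeping of condensation.
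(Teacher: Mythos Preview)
Your strategy is exactly the paper's: it takes $U=\tilde{V}_{\mathrm{BW}_{32}}$, exhibits an involution $\theta\in\mathrm{Aut}(U)$ whose irreducible twisted module has top weight $7/4$, sets $C=U^{\langle\theta\rangle}$ (with the four simple modules of weights $0,2,7/4,9/4$ you predict), and then forms the $\Z/2$ simple-current extension $C\otimes L_{\hat{\mathfrak{sl}}_2}(1,0)\oplus C^{(b)}\otimes L_{\hat{\mathfrak{sl}}_2}(1,1)$ just as you propose.

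The obstacle you flag is real and is precisely where the paper's content lies; it is resolved not by abstract control of $\mathrm{Aut}(U)$ but via the \emph{framed} structure of $\tilde{V}_{\mathrm{BW}_{32}}$. Concretely, $\tilde{V}_{\mathrm{BW}_{32}}$ carries a Virasoro frame $L(\sfr12,0)^{\otimes 64}$ with structure codes $(\mathrm{RM}(4,6),\mathrm{RM}(1,6))$, and $\theta$ is chosen inside the pointwise frame stabilizer, specified by a weight-$28$ codeword $\xi=(\alpha,\alpha,\alpha,\alpha^c)\in\Z_2^{64}$ with $\alpha\in\mathrm{RM}(2,4)$ of weight $6$. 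The Lam--Yamauchi theory of \cite{LamYamauchi08} then gives: (i) $\theta^2=1$ because $\xi\cdot\mathrm{RM}(1,6)$ is a doubly even subcode of $\mathrm{RM}(4,6)$ (this is a direct coding-theoretic check); and (ii) the $\theta$-twisted module contains the $F$-submodule $L(\sfr12,\sfr{1}{16})^{\otimes 28}\otimes L(\sfr12,0)^{\otimes 36}$ and has top weight $\min\{\mathrm{wt}(\xi+\gamma)/16:\gamma\in\mathrm{RM}(1,6)\}=28/16=7/4$, since the coset $\xi+\mathrm{RM}(1,6)$ has weight enumerator $64x^{28}+64x^{36}$. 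Thus the ``concrete control'' you ask for is supplied by Reed--Muller code combinatorics rather than by any direct analysis of the ${2^{27}}.E_6(2)$ automorphism group.
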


The key step in the construction will be the following.

\begin{theorem}\label{thm:1}
There exists a $c=32$ holomorphic framed VOA $V$ and its involution $\theta\in \mathrm{Aut}(V)$ 
satisfying the following conditions.
\\
\textup{(1)}~ $V(1)=0$.
\\
\textup{(2)}~ The unique irreducible $\theta$-twisted $V$-module $W$ has top weight $7/4$.
\end{theorem}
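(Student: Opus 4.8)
The plan is to realize $V$ as a $\mathbb Z_2$-orbifold of a lattice VOA, to take $\theta$ to be a Miyamoto involution of the resulting framed VOA, and to read the top weight of its twisted module off the structure codes. Let $L$ be the rank-$32$ Barnes--Wall lattice, an even unimodular lattice of minimal norm $4$ with a large isometry group. As $L$ has norm-$4$ vectors, $V_L$ is framed: it contains $64$ mutually orthogonal Ising vectors $\tfrac1{16}\alpha(-1)^2\mathbf 1\pm\tfrac14(e^{\alpha}+e^{-\alpha})$, $\langle\alpha,\alpha\rangle=4$. Let $\theta_0\in\operatorname{Aut}(V_L)$ be an order-$2$ lift of $-1\in O(L)$ that fixes this frame and put $V:=V_L^{\langle\theta_0\rangle}\oplus(V_L^{T})^{\langle\theta_0\rangle}$, the $\mathbb Z_2$-orbifold, $V_L^T$ being the unique irreducible $\theta_0$-twisted module. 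By holomorphic orbifold theory (the $\mathbb Z_2$ case is due to Dong--Griess--H\"ohn and Miyamoto; cf.\ van Ekeren--M\"oller--Scheithauer in general), $V$ is a strongly rational holomorphic VOA of central charge $32$; it inherits the $\theta_0$-fixed frame, so it is framed. This is the holomorphic VOA mentioned in Section~\ref{sec: 35 main results}.

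Condition (1) is quick: $V_L(1)=\mathfrak h(-1)\mathbf 1\oplus\operatorname{Span}\{e^{\alpha}:\langle\alpha,\alpha\rangle=2\}$ equals $\mathfrak h(-1)\mathbf 1$ since $L$ has minimal norm $4$, and $\theta_0$ acts by $-1$ there, so $V_L^{\langle\theta_0\rangle}(1)=0$; the ground states of $V_L^{T}$ all have weight $\rk(L)/16=2$, so $(V_L^{T})^{\langle\theta_0\rangle}$ is also zero in weight $1$. Hence $V(1)=0$. (Consistency check: the lowest nonzero degree of $V$ is $2$, with $\dim V(2)=\tfrac12\#\{v\in L:\langle v,v\rangle=4\}+\binom{33}{2}+2^{16}=73440+528+65536=139504$.)

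For (2), write $V=\bigoplus_{\gamma\in D}V_\gamma$ for the $1/16$-type decomposition relative to the frame, with structure codes $(C,D)$, $C=D^{\perp}\subseteq\mathbb F_2^{64}$, and for $\beta\in\mathbb F_2^{64}$ let $\tau_\beta=\prod_{i\in\operatorname{supp}\beta}\tau_{e_i}$ be the associated Miyamoto involution, so $\tau_\beta$ acts on $V_\gamma$ by $(-1)^{\langle\beta,\gamma\rangle}$; thus $\tau_\beta$ is a nontrivial involution exactly when $\beta\notin C$ and depends only on $\beta+C$. Since $V$ is holomorphic there is a unique irreducible $\tau_\beta$-twisted $V$-module $W$, and by Miyamoto's construction $W$ is again a sum of irreducible modules for $L(1/2,0)^{\otimes 64}$, carrying $L(1/2,1/16)$ in the slots of $\operatorname{supp}\beta$ and $L(1/2,0)$ or $L(1/2,1/2)$ elsewhere; its top weight is therefore a combinatorial quantity of the form $\tfrac1{16}(\text{minimal weight in a coset determined by }\beta\text{ and }C)+\tfrac12(\text{a correction determined by }D)$. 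It remains to locate a coset $\beta+C$, with $\beta\notin C$, for which this equals $\tfrac{28}{16}=\tfrac74$ --- i.e.\ $\beta$ of minimal weight $28$ in its $C$-coset with vanishing $D$-correction --- and to set $\theta:=\tau_\beta$; then $W$ has top weight $7/4$.

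The hard part is that last step, together with making the structure codes $(C,D)$ of $V$ explicit. Determining $(C,D)$ means unwinding the framed structure of $V_{BW_{32}}$ and tracking the effect of the $\theta_0$-orbifold; one then searches the finitely many cosets of $C$ for one realizing the value $28/16$ with no $1/2$-correction, and checks that no lighter module occurs in $W$. This is a finite but delicate calculation in which the combinatorics of $BW_{32}$ (equivalently, of the Reed--Muller-type codes from which $BW_{32}$ is assembled) is essential. An alternative route to $\theta$ is to take the descent of an isometry $g\in O(L)$ with a $28$-dimensional $(-1)$-eigenspace and compute the top weight of $W$ from the $\mathbb Z_2\times\mathbb Z_2$-orbifold data of $\langle\theta_0,\hat g\rangle$: the $\hat g$-twisted $V_L$-sector supplies ground states in weight $28/16$, while the competing $(\hat g\theta_0)$-twisted sector --- whose a priori lowest weight is only $4/16$ --- survives the $\theta_0$-eigenspace projection solely in weights $\ge 7/4$, since the relevant $\theta_0$-eigenvalue on those light states is a primitive fourth root of unity. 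Either way, the crux is the careful bookkeeping of conformal weights and eigenvalue obstructions in the twisted sectors.
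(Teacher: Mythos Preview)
Your construction of $V$ as the $\mathbb{Z}_2$-orbifold $\tilde{V}_{\mathrm{BW}_{32}}$ and your verification of $V(1)=0$ are correct and coincide with the paper's. The gap is in part~(2), where you have conflated the two kinds of frame involution.

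The automorphism you write down, $\tau_\beta$ acting as $(-1)^{\langle\beta,\gamma\rangle}$ on $V_\gamma$, is the \emph{central} $\tau$-type Miyamoto involution: it is trivial on $V^0$. But the twisted-module description you then invoke --- $L(\sfr12,\sfr1{16})$ sitting in exactly the slots of $\operatorname{supp}\beta$ --- is the description attached to a $\sigma$-type lift, i.e.\ to an involution $\theta$ with $\theta|_{V^0}=\sigma_\xi\ne 1$. These are genuinely different elements of $\mathrm{Aut}_F(V)$ (see Theorem~\ref{thm:2}). Since $\tau_\beta$ depends only on $\beta$ modulo $C=D^\perp$, the set $\operatorname{supp}\beta$ is not even well-defined; and your stated search criterion, a coset $\beta+C$ of minimum weight $28$, is impossible to satisfy: here $C=\mathrm{RM}(4,6)$ has dimension $57$ in $\mathbb{F}_2^{64}$, so by the redundancy bound its covering radius is at most $7$.

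What the paper actually uses is the other type. One chooses $\xi\in\mathbb{F}_2^{64}$ with $\xi\cdot D\subset C$ and $\xi\cdot D$ doubly even, so that $\sigma_\xi\in\mathrm{Aut}(V^0)$ lifts to an \emph{involution} $\theta\in\mathrm{Aut}(V)$ (Theorem~\ref{thm:2}); the $\theta$-twisted module then contains the $F$-module with $1/16$-word $\xi$, and all its $1/16$-words lie in the coset $\xi+D$ of the \emph{small} code $D=\mathrm{RM}(1,6)$ (Theorem~\ref{thm:3}). The ``finite but delicate calculation'' you defer is thus to exhibit such a $\xi$ of weight $28$ which is minimal in $\xi+D$, and to verify the doubly-even condition. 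The paper carries this out with $\xi=(\alpha,\alpha,\alpha,\alpha^c)$ for a weight-$6$ word $\alpha\in\mathrm{RM}(2,4)$, computing the weight enumerator of $\xi+\mathrm{RM}(1,6)$ to be $64x^{28}+64x^{36}$ (Lemmas~\ref{lem:5}--\ref{lem:6}). Your alternative route via an isometry $g\in O(L)$ with a $28$-dimensional $(-1)$-eigenspace is plausible in outline, but the assertion that the competing $(\hat g\theta_0)$-sector (ground weight $4/16$) is eliminated because ``the relevant $\theta_0$-eigenvalue is a primitive fourth root of unity'' is unsupported and would itself require substantial work.
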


We first give a proof of Theorem \ref{thm:0} using Theorem \ref{thm:1}.\\

\paragraph{\textbf{Proof of Theorem \ref{thm:0}:}}

Suppose we have $V$, $\theta$ and $W$ as in Theorem \ref{thm:1}. 
Let $V^\pm=\{ a\in V \mid \theta a=\pm a\}$ be the eigenspace decompositions and 
$W=W^+\oplus W^-$ the irreducible decomposition as a $V^+$-module.
Note that $V^+$ is a strongly rational VOA by \cite{DongGriessHoehn98} (see also \cite{CarnahanMiyamoto16}).
We assign the labeling $W^\pm$ such that the conformal weight of $W^+$ is 7/4. 
It turns out that the conformal weight of $W^-$ is equal to $9/4$ 
so that the conformal weights of $V^+$, $V^-$, $W^+$ and $W^-$ are 
respectively 0, 2, 7/4 and 9/4.
Since $V$ is holomorphic, $V^+$ has exactly four irreducible modules $V^\pm$ and $W^\pm$, 
all of them are simple currents.
The fusion algebra of $V^+$ is isomorphic to the group algebra associated with $(\Z/2\Z)^2$.

We consider a tensor product $V^+\otimes L_{\hat{\mathfrak{sl}}_2}(1,0)$ and its 
$\Z/2\Z$-graded simple current extension 
\[
  U=V^+\otimes L_{\hat{\mathfrak{sl}}_2}(1,0)\oplus W^+\otimes L_{\hat{\mathfrak{sl}}_2}(1,1),
\]
which is strongly rational by \cite{Lam01,Yamauchi04} (see also \cite[Thm. 4.13]{McRae19}).
Then the weight one subspace of $U$ is 3-dimensional.
It is easy to see that $U$ has exactly two irreducible untwisted modules, $U$ and
\[
  M=V^-\otimes L_{\hat{\mathfrak{sl}}_2}(1,1)\oplus W^-\otimes L_{\hat{\mathfrak{sl}}_2}(1,0), 
\]
whose conformal weight is $2+1/4=9/4+0=9/4$.
Thus $U$ is an extremal VOA with central charge $33$ and two simple modules, as desired.
\qed

We will now prove Theorem \ref{thm:1} based on the theory of framed VOAs.
We will use the same notation and terminology for framed VOAs as in \cite{LamYamauchi08}.
In particular,  the product of codewords is defined by
\[
  \alpha \cdot \beta=(\alpha_1\beta_1,\dots,\alpha_n\beta_n) \in \Z_2^n
\]
for $\alpha=(\alpha_1,\dots,\alpha_n)$ and $\beta =(\beta_1,\dots,\beta_n)\in \Z_2^n$.
We denote by $\mathbf{1}_n$ the codeword $(1^n)\in \Z_2^n$.

Let $V$ be a framed VOA with a Virasoro frame 
$F=\langle e^1,\dots,e^n\rangle\cong L(\sfr{1}{2},0)^{\otimes n}$.
Let $(C,D)$ be the structure codes with respect to $F$ and $V=\oplus_{\alpha\in D}V^\alpha$ 
the corresponding 1/16-word decomposition.
For $\beta=(\beta_1,\dots,\beta_n)\in \Z_2^n$, we define
\begin{equation}\label{eq:1}
  \sigma_{\beta}:=\prod_{i\in \mathrm{supp}(\beta)} {(-1)^{2\mathrm{o}(e^i)}},~~~~
  \tau_{\beta}:=\prod_{i\in \mathrm{supp}(\beta)} {(-1)^{16\mathrm{o}(e^i)}},
\end{equation}
where $\mathrm{o}(a)$ denotes the grade preserving operator of $a\in V$.
It follows from the fusion rules of $L(\sfr{1}{2},0)$-modules that 
$\sigma_\beta\in \mathrm{Aut}(V^0)$ and $\tau_\beta\in \mathrm{Aut}(V)$ (cf.~\cite{Miyamoto96Griess}).
The maps $\sigma :\Z_2^n \to \mathrm{Aut}(V^0)$ and $\tau: \Z_2^n \to \mathrm{Aut}(V)$ are group 
homomorphisms such that $\ker \sigma=C^\perp$ and $\ker \tau=D^\perp$.
Therefore we have the following exact sequences
\begin{equation}\label{eq:2}
\begin{array}{l}
  0\longrightarrow C^\perp \longrightarrow \Z_2^n \xrightarrow{~\sigma~} \mathrm{Im}\, \sigma
  \longrightarrow 0,
  \medskip\\
  0\longrightarrow D^\perp \longrightarrow \Z_2^n \xrightarrow{~\tau~} \mathrm{Im}\, \tau 
  \longrightarrow 0.
\end{array}
\end{equation}
We define the point-wise frame stabilizer\footnote{%
This group was denoted by $\mathrm{Stab}_V^{\mathrm{pt}}(F)$ in \cite{LamYamauchi08}.}
by
\begin{equation}\label{eq:23}
  \mathrm{Aut}_F(V):=\{ g\in \mathrm{Aut}(V) \mid g(e^i)=e^i~\mbox{ for } 1\leq i\leq n\} .
\end{equation}
The structure of the point-wise frame stabilizer is determined in \cite{LamYamauchi08} as follows.

\begin{theorem}[\cite{LamYamauchi08}]\label{thm:2}
Let $V$ be a framed VOA with structure codes $(C,D)$.
\\
\textup{(1)}
$\mathrm{Im}\,\tau \cong \Z_2^n/D^\perp$ is a central subgroup of $\mathrm{Aut}_F(V)$.
\\
\textup{(2)} 
For $\theta\in \mathrm{Aut}_F(V)$, there exists $\xi \in \Z_2^n$ and $\eta\in \Z_2^n$ such that
$\theta|_{V^0}=\sigma_\xi$ and $\theta^2=\tau_\eta$.
In particular, $\theta^4=1$.
\\
\textup{(3)}  
For $\xi\in \Z_2^n$, there exists $\theta\in \mathrm{Aut}_F(V)$ such that 
$\theta|_{V^0}=\sigma_\xi$ if and only if $\xi \cdot D\subset C$, and in this case, 
the order $|\theta|=2$ if and only if $\xi \cdot D$ is a doubly even subcode of $C$, 
and otherwise $|\theta|=4$.
\\
\textup{(4)} 
Let $P=\{ \xi\in \Z_2^n \mid \xi \cdot D\subset C\}$.
Then $C^\perp \subset P$ and we have an exact sequence
\[
  1\longrightarrow \Z_2^n/D^\perp \longrightarrow \mathrm{Aut}_F(V) \longrightarrow P/C^\perp
  \longrightarrow 1.
\]
\end{theorem}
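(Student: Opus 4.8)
The plan is to read off all four statements from the action of $\mathrm{Aut}_F(V)$ on the $1/16$-word decomposition $V=\bigoplus_{\alpha\in D}V^\alpha$ and on the code VOA $V^0=M_C$. Beyond what is recorded above (that $\sigma$ and $\tau$ are homomorphisms with $\ker\sigma=C^\perp$, $\ker\tau=D^\perp$), I would use two external facts: that the pointwise frame stabilizer of a code VOA $M_C$ is exactly $\mathrm{Im}\,\sigma$, and that each $V^\alpha$ with $\alpha\in D$ is an irreducible $V^0$-module. Note that $\ker\tau=D^\perp$ already gives $\mathrm{Im}\,\tau\cong\Z_2^n/D^\perp$.

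First I would prove \textup{(1)}. An element $g\in\mathrm{Aut}_F(V)$ fixes every $e^i$, hence commutes with the action of $F\cong L(1/2,0)^{\otimes n}$ and carries each $F$-isotypic component of $V$ to another one of the same type; since $\tau_\beta$ acts on $V^\alpha$ by the scalar $(-1)^{\langle\beta,\alpha\rangle}$, which depends only on the $1/16$-pattern $\alpha$, it follows that $g$ preserves every $V^\alpha$, and a linear map preserving the eigenspaces of $\tau_\beta$ commutes with it. Thus $\mathrm{Im}\,\tau$ is central, which together with $\ker\tau=D^\perp$ proves \textup{(1)}. For \textup{(2)}, $\theta\in\mathrm{Aut}_F(V)$ preserves $V^0$ and restricts there to an element of $\mathrm{Aut}_F(V^0)=\mathrm{Im}\,\sigma$, so $\theta|_{V^0}=\sigma_\xi$ for some $\xi\in\Z_2^n$; then $\theta^2|_{V^0}=\sigma_\xi^2=\mathrm{id}$ because $\sigma$ is a homomorphism from the $2$-torsion group $\Z_2^n$. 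Since $\theta^2$ preserves each $V^\alpha$ and acts $V^0$-linearly there, irreducibility and Schur's lemma make it a scalar $c_\alpha$ on $V^\alpha$; as $\theta^2$ is an automorphism of the simple VOA $V$, the function $\alpha\mapsto c_\alpha$ is a character of $D$, and since $D$ is an elementary abelian $2$-group this character is valued in $\{\pm1\}$ and hence has the form $\alpha\mapsto(-1)^{\langle\eta,\alpha\rangle}$ for some $\eta\in\Z_2^n$. Therefore $\theta^2=\tau_\eta$ and $\theta^4=\tau_\eta^2=1$.

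The substantive point is \textup{(3)}, after which \textup{(4)} is formal. For the ``only if'' direction, if $\theta$ restricts to $\sigma_\xi$ on $V^0$ then, as in \textup{(2)}, $\theta$ preserves each $V^\alpha$ and gives a $V^0$-module isomorphism $V^\alpha\to(V^\alpha)^{\sigma_\xi}$; unwinding the $F$-isotypic decomposition of $V^\alpha$ one checks that twisting by $\sigma_\xi$ shifts the binary coset label of $V^\alpha$ by $\xi\cdot\alpha$, so $V^\alpha\cong(V^\alpha)^{\sigma_\xi}$ precisely when $\xi\cdot\alpha\in C$, and over all $\alpha\in D$ this is $\xi\cdot D\subseteq C$. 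Conversely, assuming $\xi\cdot D\subseteq C$, I would build $\theta$ by picking, for each $\alpha\in D$, a $V^0$-module isomorphism $\phi_\alpha\colon V^\alpha\to(V^\alpha)^{\sigma_\xi}\cong V^\alpha$; standard simple-current-extension arguments then let one rescale the $\phi_\alpha$ to be compatible with all the products $V^\alpha\cdot V^\gamma\subseteq V^{\alpha+\gamma}$, producing a genuine automorphism. Its order is detected by $\theta^2$: recovering $\theta^2$ as in \textup{(2)}, one finds it acts on $V^\alpha$ by $(-1)^{\mathrm{wt}(\xi\cdot\alpha)/2}$ (the exponent is an integer since $\xi\cdot\alpha\in C$ is even), so $\theta^2=\tau_\eta$ for the corresponding $\eta$; thus $|\theta|=2$ exactly when $\mathrm{wt}(\xi\cdot\alpha)\equiv0\pmod{4}$ for all $\alpha\in D$, i.e.\ when $\xi\cdot D$ is a doubly even subcode of $C$, and $|\theta|=4$ otherwise, since by \textup{(2)} it cannot be larger. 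For \textup{(4)}, the assignment $\theta\mapsto\xi+C^\perp$, with $\theta|_{V^0}=\sigma_\xi$, is well defined because $\ker\sigma=C^\perp$, takes values in $P/C^\perp$ by \textup{(3)} (which also forces $C^\perp\subseteq P$, as $\xi\in C^\perp$ makes $\sigma_\xi=\mathrm{id}_{V^0}$ extend via $\mathrm{id}_V$), is a homomorphism, and is surjective by the constructive half of \textup{(3)}; its kernel is $\{\theta\in\mathrm{Aut}_F(V):\theta|_{V^0}=\mathrm{id}\}$, which the argument of \textup{(2)} identifies with $\mathrm{Im}\,\tau\cong\Z_2^n/D^\perp$, giving the exact sequence.

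I expect the main obstacle to be the first part of \textup{(3)}: identifying the $\sigma_\xi$-twist of the $V^0$-module $V^\alpha$ at the level of its $F$-isotypic decomposition, carrying out the simple-current gluing that upgrades the $\phi_\alpha$ to an automorphism, and nailing down the sign $(-1)^{\mathrm{wt}(\xi\cdot\alpha)/2}$ that controls $\theta^2$. This requires working inside the internal structure of code VOAs, all of whose irreducible modules are simple currents, and the modulo-$4$ bookkeeping separating $|\theta|=2$ from $|\theta|=4$ is the most delicate step.
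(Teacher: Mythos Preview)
The paper does not prove this theorem; it is quoted verbatim from \cite{LamYamauchi08} as background for the framed VOA construction in Section~\ref{sec: new VOA}, and no argument is given beyond the citation. So there is no proof in the present paper to compare your attempt against.

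That said, your sketch follows the same architecture as the original argument in \cite{LamYamauchi08}. Parts (1), (2), and (4) are essentially correct and match the standard route: preservation of the $1/16$-word decomposition, Schur's lemma on the irreducible $V^0$-modules $V^\alpha$, and the induced character of $D$. For part (3) you have correctly located both the structure of the proof and its genuine difficulty. The identification of $(V^\alpha)^{\sigma_\xi}$ with the $V^0$-module whose coset label is shifted by $\xi\cdot\alpha$, and the assertion that $\theta^2$ acts on $V^\alpha$ by $(-1)^{\mathrm{wt}(\xi\cdot\alpha)/2}$, are the right statements, but establishing them is not immediate: in \cite{LamYamauchi08} this occupies several lemmas and relies on the explicit description of irreducible $M_C$-modules in terms of cosets of $C$ together with Ising fusion bookkeeping. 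Likewise, the phrase ``standard simple-current-extension arguments'' for the converse direction conceals the need to check that the cocycle obstruction to gluing the $\phi_\alpha$ into a global automorphism vanishes. None of this is wrong, but a reader should be aware that these two points are where the actual work of \cite{LamYamauchi08} lies, and your sketch correctly flags them as the expected obstacles.
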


We review a construction of twisted modules from \cite{LamYamauchi08}.
Suppose we have a codeword $\xi\in P$ such that $\xi\cdot D$ is a doubly even 
subcode of $C$.
Let $\theta\in \mathrm{Aut}_F(V)$ be an involution such that $\theta|_{V^0}=\sigma_{\xi}$.
Note that such a $\theta_\xi$ is not unique and only determined modulo $\mathrm{Im}\,\tau$ by 
(4) of Theorem \ref{thm:2}.
The fixed point subalgebra $V^+=V^{\langle \theta\rangle}$ is a framed VOA with structure codes 
$(C^0,D)$ where $C^0=C\cap \langle \xi\rangle^\perp$.
We denote its $1/16$-word decomposition by $V^+=\oplus_{\alpha\in D}V^{+,\alpha}$.
Let $X$ be an irreducible $F$-module isomorphic to 
$L(\sfr{1}{2},\sfr{1}{16})^{\otimes \mathrm{wt}(\xi)}\otimes L(\sfr{1}{2},0)^{\otimes n-\mathrm{wt}(\xi)}$
whose $1/16$-word is $\xi$.
There exists an irreducible $V^{+,0}$-module $Y$ which contains $X$ as an $F$-submodule 
(cf.~\cite{LamYamauchi08,Miyamoto98}).
Since $V=\oplus_{\alpha \in D}V^\alpha$ is a $\Z_2\oplus D$-graded simple current extension of 
$V^{+,0}$, there exists $\tau_\eta\in \mathrm{Im}\,\tau$ such that 
the fusion product
\begin{equation}\label{eq:4}
  W = V\boxtimes_{V^{+,0}} Y
  =\bigoplus_{\alpha\in D} V^\alpha \boxtimes_{V^{+,0}} Y
\end{equation}
has a unique structure of an irreducible $\theta\tau_\eta$-twisted $V$-module.
Since $\theta$ and $\theta\tau_\eta$ define the same automorphism $\sigma_\xi$ on the subalgebra 
$V^0$ of $V$, by replacing $\theta$ by $\theta\tau_\eta$ if necessary, 
we may regard $W$ as an irreducible $\theta$-twisted $V$-module. 
Each summand $V^\alpha \boxtimes_{V^{+,0}} Y$, $\alpha \in D$, of $W$ has the 1/16-word 
$\alpha+\xi$ so that its top weight is at least $\mathrm{wt}(\xi+\alpha)/16$.
Summarizing, we have the following.

\begin{theorem}\label{thm:3}
Let $V=\oplus_{\alpha \in D}V^\alpha$ be a framed VOA with structure codes $(C,D)$ 
with respect to a frame $F\cong L(\sfr{1}{2},0)^{\otimes n}$.
Let $\xi\in P$ be a codeword such that $\xi\cdot D$ is a doubly even subcode of $C$.
Let $X$ be the irreducible $F$-module isomorphic to 
$L(\sfr{1}{2},\sfr{1}{16})^{\otimes \mathrm{wt}(\xi)}\otimes L(\sfr{1}{2},0)^{\otimes n-\mathrm{wt}(\xi)}$ 
such that its 1/16-word is $\xi$.
Then there exists an involutive automorphism $\theta\in \mathrm{Aut}(V)$ and 
an irreducible $\theta$-twisted $V$-module $W$ such that 
$\theta|_{V^0}=\sigma_\xi$ and $W$ contains $X$ as an irreducible $F$-submodule.
The top weight of $W$ is at least $\min\{\mathrm{wt}(\xi+\alpha)/16\mid \alpha \in D\}$.
\end{theorem}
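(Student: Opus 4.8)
The plan is to assemble the statement from the structure theory of framed VOAs (Theorem~\ref{thm:2}) together with the machinery of simple current extensions, following the twisted-module construction reviewed just above. First I would produce the involution: since $\xi\in P$ we have $\xi\cdot D\subset C$, and because $\xi\cdot D$ is doubly even by hypothesis, part~(3) of Theorem~\ref{thm:2} furnishes $\theta\in\mathrm{Aut}_F(V)$ with $\theta|_{V^0}=\sigma_\xi$ and $|\theta|=2$. The one point to bear in mind is that $\theta$ is not unique: by part~(4) it is determined only modulo $\mathrm{Im}\,\tau$, and I would deliberately retain this freedom, since it is needed at the very end to correct the twisting automorphism.

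Next I would analyze the fixed-point subalgebra $V^+=V^{\langle\theta\rangle}$. It is again a framed VOA, with structure codes $(C^0,D)$ where $C^0=C\cap\langle\xi\rangle^\perp$, and I would record its $1/16$-word decomposition $V^+=\bigoplus_{\alpha\in D}V^{+,\alpha}$; the degree-zero summand $V^{+,0}$ is the subalgebra that carries the induction. Using the representation theory of $L(\sfr{1}{2},0)^{\otimes n}$ developed in \cite{LamYamauchi08,Miyamoto98}, I would realize the irreducible $F$-module $X\cong L(\sfr{1}{2},\sfr{1}{16})^{\otimes\mathrm{wt}(\xi)}\otimes L(\sfr{1}{2},0)^{\otimes n-\mathrm{wt}(\xi)}$ with $1/16$-word $\xi$, and exhibit an irreducible $V^{+,0}$-module $Y$ containing $X$ as an $F$-submodule.

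The heart of the argument is the induction step. Since $V=\bigoplus_{\alpha\in D}V^\alpha$ is a $\Z_2\oplus D$-graded simple current extension of $V^{+,0}$, the induced object
\[
W=V\boxtimes_{V^{+,0}}Y=\bigoplus_{\alpha\in D}V^\alpha\boxtimes_{V^{+,0}}Y
\]
inherits a unique structure of an irreducible $g$-twisted $V$-module for some automorphism $g$; the general theory of such extensions forces $g$ to restrict to $\sigma_\xi$ on $V^0$, hence $g=\theta\tau_\eta$ for some $\tau_\eta\in\mathrm{Im}\,\tau$. Invoking the freedom noted above, I would replace $\theta$ by $\theta\tau_\eta$ — this leaves the action on $V^0$ unchanged, and the new automorphism is still an involution since $\tau_\eta$ is central of order dividing $2$ — so that $W$ becomes an irreducible $\theta$-twisted $V$-module, which contains $X$ (sitting inside $V^0\boxtimes_{V^{+,0}}Y\supseteq Y$) as an irreducible $F$-submodule.

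Finally, for the weight estimate: each summand $V^\alpha\boxtimes_{V^{+,0}}Y$ has $1/16$-word $\xi+\alpha$, and any $L(\sfr{1}{2},0)^{\otimes n}$-module with $1/16$-word $\gamma$ has minimal weight at least $\mathrm{wt}(\gamma)/16$, since each $L(\sfr{1}{2},\sfr{1}{16})$-factor contributes weight at least $\sfr{1}{16}$; taking the minimum over $\alpha\in D$ yields the claimed lower bound on the top weight of $W$. I expect the main obstacle to be the induction step — ensuring that the fusion product genuinely carries a well-defined irreducible twisted-module structure and that the twisting automorphism is correctly identified within $\mathrm{Im}\,\tau$. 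This is exactly what the cited results on simple current extensions (\cite{Lam01,Yamauchi04,McRae19}) provide, once the frame bookkeeping for $V^+$, $V^{+,0}$, and the $1/16$-words has been carried out carefully.
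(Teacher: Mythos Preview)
Your proposal is correct and follows essentially the same route as the paper: the paper's argument (presented in the paragraph preceding the theorem statement) constructs $\theta$ from Theorem~\ref{thm:2}(3), passes to the fixed-point subalgebra $V^+$ with structure codes $(C^0,D)$, lifts $X$ to an irreducible $V^{+,0}$-module $Y$, induces via the fusion product $W=V\boxtimes_{V^{+,0}}Y$ to obtain a $\theta\tau_\eta$-twisted module, replaces $\theta$ by $\theta\tau_\eta$, and reads off the weight bound from the $1/16$-words $\xi+\alpha$. Your explicit verification that the adjusted $\theta\tau_\eta$ remains an involution (using centrality of $\tau_\eta$ and $\tau_\eta^2=1$) is a small clarifying addition that the paper leaves implicit.
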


Recall the Reed-Muller codes.
The first order Reed-Muller code $\mathrm{RM}(1,4)$ of length $2^4$ 
is defined by the following generating matrix.
\begin{equation}\label{eq:5}
\begin{bmatrix}
  1111 & 1111 & 1111 & 1111
  \\ 
  1111 & 1111 & 0000 & 0000 
  \\
  1111 & 0000 & 1111 & 0000 
  \\ 
  1100 & 1100 & 1100 & 1100 
  \\ 
  1010 & 1010 & 1010 & 1010
\end{bmatrix}
\end{equation}
The dual code of $\mathrm{RM}(1,4)$ is the second order Reed-Muller code 
$\mathrm{RM}(2,4)=\mathrm{RM}(1,4) \cdot \mathrm{RM}(1,4)$.
The first order Reed-Muller code $\mathrm{RM}(1,6)$ of length $2^6$ is defined by
\begin{equation}\label{eq:6}
  \mathrm{RM}(1,6)
  = \mathrm{Span}_{\Z_2}\{ (\alpha,\alpha,\alpha,\alpha),~(0^{16}1^{16}0^{16}1^{16}),~(0^{32}1^{32})
  \mid \alpha \in \mathrm{RM}(1,4)\} .
\end{equation}
The weight enumerator of $\mathrm{RM}(1,6)$ is $x^{64}+126x^{32}+1$.
The dual code of $\mathrm{RM}(1,6)$ is the 4th order Reed-Muller code $\mathrm{RM}(4,6)$ 
of length $2^6$.
It follows from the MacWilliams identity that the minimum weight of $\mathrm{RM}(4,6)$ is 4.
It is easy to see that
\begin{equation}\label{eq:7}
  \begin{matrix}
  (\alpha,\beta,\gamma,\delta)\in \mathrm{RM}(4,6) 
  \medskip\\
  (\alpha, \beta, \gamma, \delta\in \Z_2^{16})
  \end{matrix}
  ~\iff~
  \begin{matrix}
    \alpha+\beta+\gamma+\delta\in \mathrm{RM}(2,4)=\mathrm{RM}(1,4)^\perp,
    \medskip\\
    \mathrm{wt}(\alpha) \equiv \mathrm{wt}(\beta)\equiv \mathrm{wt}(\gamma) \equiv \mathrm{wt}(\delta) \bmod 2.
  \end{matrix}
\end{equation}

Since $\mathrm{RM}(1,6)$ is triply even, there exists a $c=32$ holomorphic framed VOA with 
structure codes $(\mathrm{RM}(4,6),\mathrm{RM}(1,6))$ by Remark 6 and Theorem 10 of \cite{LamYamauchi08}.
Since the minimum weights of $\mathrm{RM}(1,6)$ and $\mathrm{RM}(4,6)$ are 32 and 4, 
respectively, the weight one subspace of such a VOA is trivial.
It follows from Theorem 3.13 of \cite{LamShimakura2015} that such a framed VOA is uniquely determined 
by its structure codes.
On the other hand, it is shown in \cite{Miyamoto96Automorphism} that the $\Z_2$-orbifold construction 
$\tilde{V}_{\mathrm{BW}_{32}}$ of the lattice VOA associated with the Barnes-Wall lattice 
$\mathrm{BW}_{32}$ of rank 32 is a $c=32$ holomorphic framed VOA with structure codes 
$(\mathrm{RM}(4,6),\mathrm{RM}(1,6))$.
The VOA $\tilde{V}_{\mathrm{BW}_{32}}$ is studied in \cite{Miyamoto96Automorphism,Shimakura14} and it has a finite 
automorphism group of the shape ${2^{27}}.E_6(2)$.

\begin{theorem}\label{thm:4}
  The $\Z_2$-orbifolding $\tilde{V}_{\mathrm{BW}_{32}}$ is the unique $c=32$ holomorphic 
  framed VOA with structure codes $(\mathrm{RM}(4,6),\mathrm{RM}(1,6))$.
\end{theorem}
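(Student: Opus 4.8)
The plan is to assemble the ingredients already laid out in the paragraph preceding the statement, so that Theorem \ref{thm:4} becomes a short corollary. Three facts are in play: (a) a $c=32$ holomorphic framed VOA with structure codes $(\mathrm{RM}(4,6),\mathrm{RM}(1,6))$ exists, because $\mathrm{RM}(1,6)$ is triply even and Remark 6 and Theorem 10 of \cite{LamYamauchi08} apply; (b) any two such VOAs are isomorphic; and (c) the $\Z_2$-orbifold $\tilde{V}_{\mathrm{BW}_{32}}$ of the rank $32$ Barnes-Wall lattice VOA is such a VOA, by \cite{Miyamoto96Automorphism}. Granting (b) and (c) the theorem follows at once, so the substance lies in establishing (b).

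For (b) I would argue as follows. Since the minimum weight of $\mathrm{RM}(4,6)$ is $4$, that code has no codeword of weight $2$, and hence any framed VOA $V$ with these structure codes satisfies $V(1)=0$ (as already observed, using that the minimum weight of $\mathrm{RM}(1,6)$ is $32$). The vanishing of $V(1)$ is precisely the situation in which the classification of \cite{LamShimakura2015} operates, and Theorem 3.13 of that paper then yields that the pair of structure codes determines $V$ up to isomorphism. Combining (a), (b), (c): there is exactly one $c=32$ holomorphic framed VOA with structure codes $(\mathrm{RM}(4,6),\mathrm{RM}(1,6))$, and $\tilde{V}_{\mathrm{BW}_{32}}$ is it.

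I do not anticipate a genuine obstacle here: the theorem is essentially a repackaging of \cite{LamShimakura2015} and \cite{Miyamoto96Automorphism}. The only points that warrant checking are that the hypotheses of \cite[Thm.~3.13]{LamShimakura2015} are actually met by the specific pair $(\mathrm{RM}(4,6),\mathrm{RM}(1,6))$ (the triviality of $V(1)$ being the key one, with no competing condition on $C$ or $D$ violated), and that the structure codes of $\tilde{V}_{\mathrm{BW}_{32}}$ are, as recorded in \cite{Miyamoto96Automorphism, Shimakura14}, exactly $(\mathrm{RM}(4,6),\mathrm{RM}(1,6))$ and not merely a code-equivalent pair. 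Once those two verifications are in hand, the displayed conclusion is immediate.
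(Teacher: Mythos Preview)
Your proposal is correct and mirrors the paper's own argument: the theorem is stated without a separate proof, being the summary of the preceding paragraph which invokes exactly the three ingredients you list (existence from \cite{LamYamauchi08}, uniqueness from \cite[Thm.~3.13]{LamShimakura2015}, and the identification of $\tilde{V}_{\mathrm{BW}_{32}}$ from \cite{Miyamoto96Automorphism}). The only minor point is that the paper does not explicitly tie the $V(1)=0$ observation to the hypotheses of \cite[Thm.~3.13]{LamShimakura2015}; it records that fact separately (it is needed for Theorem~\ref{thm:1}), so you should double-check what that cited theorem actually assumes rather than presuming $V(1)=0$ is the operative condition.
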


We will prove Theorem \ref{thm:1} by using $\tilde{V}_{\mathrm{BW}_{32}}$.
Let $\alpha^1=(1^{16})$, $\alpha^2=(1^80^8)$, $\alpha^3=(1^40^41^40^4)$, 
$\alpha^4=(\{1^20^2\}^4)$, $\alpha^5=(\{10\}^8)$ be the basis of $\mathrm{RM}(1,4)$ in 
\eqref{eq:5}.
Then by \eqref{eq:6} the Reed-Muller code $\mathrm{RM}(1,6)$ has a basis
\begin{equation}\label{eq:8}
  \gamma^i=(\alpha^i,\alpha^i,\alpha^i,\alpha^i),~~~1\leq i\leq 5,~~~
  \gamma^6=(0^{16}1^{16}0^{16}1^{16}),~~~
  \gamma^7=(0^{32}1^{32}).
\end{equation}

\begin{lem}\label{lem:5}
  Let $\nu^1$, $\nu^2$, $\nu^3$, $\nu^4\in \Z_2^{16}$ and let
  $\xi=(\nu^1,\nu^2,\nu^3,\nu^4)\in \Z_2^{64}$.
  Then $\xi\cdot \mathrm{RM}(1,6)$ is a subcode of $\mathrm{RM}(4,6)$ if and only if 
  the following conditions are satisfied.
  \\
  \textup{(i)} $\nu^1+\nu^2+\nu^3+\nu^4\in \mathrm{RM}(1,4)$, 
  \\ 
  \textup{(ii)} $\nu^i+\nu^j\in \mathrm{RM}(1,4)^\perp$ for $1\leq i<j\leq 4$,
  \\
  \textup{(iii)} $\nu^1$, $\nu^2$, $\nu^3$, $\nu^4$ are even. 
  \\
  Moreover, $\xi\cdot \mathrm{RM}(1,6)$ is a doubly even subcode of $\mathrm{RM}(4,6)$ 
  if and only if it further satisfies the following condition.
  \\
  \textup{(iv)} $\xi\cdot \gamma^i$, $1\leq i\leq 5$, are doubly even.
\end{lem}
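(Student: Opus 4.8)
The plan is to use that the componentwise product $\xi\cdot(-)$ is $\Z_2$-linear on codewords, so that $\xi\cdot \mathrm{RM}(1,6)\subseteq \mathrm{RM}(4,6)$ holds if and only if $\xi\cdot\gamma^i\in \mathrm{RM}(4,6)$ for each of the seven generators $\gamma^1,\dots,\gamma^7$ of \eqref{eq:8}. Writing $\xi=(\nu^1,\nu^2,\nu^3,\nu^4)$ in block form, one has $\xi\cdot\gamma^i=(\nu^1\cdot\alpha^i,\nu^2\cdot\alpha^i,\nu^3\cdot\alpha^i,\nu^4\cdot\alpha^i)$ for $1\le i\le 5$, while $\xi\cdot\gamma^6=(0,\nu^2,0,\nu^4)$ and $\xi\cdot\gamma^7=(0,0,\nu^3,\nu^4)$, and I would feed each of these into the explicit description \eqref{eq:7} of $\mathrm{RM}(4,6)$.

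The constraints coming from $\gamma^6$ and $\gamma^7$ are read off immediately: they say exactly that $\nu^2,\nu^3,\nu^4$ are even and that $\nu^2+\nu^4,\nu^3+\nu^4\in\mathrm{RM}(2,4)=\mathrm{RM}(1,4)^\perp$. The generator $\gamma^1=\mathbf{1}_{64}$ adds that $\nu^1$ is even as well (so (iii) holds) and that $\mu:=\nu^1+\nu^2+\nu^3+\nu^4\in\mathrm{RM}(1,4)^\perp$, and for $2\le i\le 5$ the ``$\alpha+\beta+\gamma+\delta$'' clause of \eqref{eq:7} gives $\mu\cdot\alpha^i\in\mathrm{RM}(1,4)^\perp$, the parity clause at these $\gamma^i$ being automatic once (ii) is known since $\mathrm{wt}(\nu^j\cdot\alpha^i)+\mathrm{wt}(\nu^k\cdot\alpha^i)\equiv\langle \nu^j+\nu^k,\alpha^i\rangle\bmod 2$. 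As $\alpha^1,\dots,\alpha^5$ span $\mathrm{RM}(1,4)$, this says $\mu$ is orthogonal to $\mathrm{RM}(1,4)\cdot\mathrm{RM}(1,4)=\mathrm{RM}(2,4)$, i.e.\ $\mu\in\mathrm{RM}(2,4)^\perp=\mathrm{RM}(1,4)$, which is (i); once (i) and (iii) hold, the remaining relations $\nu^1+\nu^j\in\mathrm{RM}(1,4)^\perp$ in (ii) drop out of the identity $\nu^1+\nu^4=\mu+(\nu^2+\nu^3)$ together with $\mathrm{RM}(1,4)\subseteq\mathrm{RM}(1,4)^\perp$. Conversely, (i)--(iii) force every $\xi\cdot\gamma^i$ into $\mathrm{RM}(4,6)$ by the same Reed--Muller identities. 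The one genuine difficulty here is purely organizational: tracking which of the redundant constraints at the seven generators are actually needed while repeatedly invoking the self-duality relations $\mathrm{RM}(1,4)\cdot\mathrm{RM}(1,4)=\mathrm{RM}(2,4)$ and $\mathrm{RM}(2,4)^\perp=\mathrm{RM}(1,4)$.

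For the ``moreover'' clause I would first observe that, once the subcode condition holds, $\xi\cdot\mathrm{RM}(1,6)$ is automatically self-orthogonal: for $\gamma,\gamma'\in\mathrm{RM}(1,6)$ one has $\langle\xi\cdot\gamma,\xi\cdot\gamma'\rangle=\langle\xi,\gamma\cdot\gamma'\rangle$ because $\xi_k^2=\xi_k$, and $\gamma\cdot\gamma'$ lies in $\mathrm{RM}(1,6)\cdot\mathrm{RM}(1,6)=\mathrm{RM}(2,6)$ (the standard fact that products of first-order codewords span the second-order code), while the subcode hypothesis is precisely $\xi\perp\mathrm{RM}(2,6)$ by the analogous bilinear-form manipulation. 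A self-orthogonal binary code is doubly even as soon as it has a generating set of codewords of weight $\equiv 0\bmod 4$, because $\mathrm{wt}(a+b)\equiv\mathrm{wt}(a)+\mathrm{wt}(b)\bmod 4$ when $\langle a,b\rangle=0$. Hence $\xi\cdot\mathrm{RM}(1,6)$ is doubly even if and only if the generators $\xi\cdot\gamma^i$ all have weight divisible by $4$, and I would finish by checking that, in the presence of (i)--(iii), this reduces to condition (iv) on $\gamma^1,\dots,\gamma^5$, spelling out $\mathrm{wt}(\xi\cdot\gamma^6)=\mathrm{wt}(\nu^2)+\mathrm{wt}(\nu^4)$ and $\mathrm{wt}(\xi\cdot\gamma^7)=\mathrm{wt}(\nu^3)+\mathrm{wt}(\nu^4)$ and comparing with the data already extracted from the other generators. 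I expect the self-orthogonality observation to be the conceptual heart of this half; everything after it is a finite, explicit verification against the generator list \eqref{eq:8}.
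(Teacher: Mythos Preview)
Your approach coincides with the paper's: both reduce the subcode condition to the bilinear identities $(\xi\mid\gamma^i\cdot\gamma^j)=0$ for $1\le i,j\le 7$ (you phrase this via the membership criterion \eqref{eq:7}, the paper does it directly, but the computations are the same), and both prove the ``moreover'' by first observing self-orthogonality from $\langle\xi\cdot\gamma,\xi\cdot\gamma'\rangle=\langle\xi,\gamma\cdot\gamma'\rangle$ and then reducing to the seven generators.

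There is, however, a genuine gap in your final step---and the paper's proof shares it. The assertion that, under (i)--(iii), the doubly-even property of $\xi\cdot\gamma^6$ and $\xi\cdot\gamma^7$ can be read off from the data at $\gamma^1,\dots,\gamma^5$ is false. For a counterexample, take the weight-$6$ codeword $c=\alpha^2\cdot\alpha^3+\alpha^4\cdot\alpha^5\in\mathrm{RM}(2,4)$ (support $\{2,3,4,5,9,13\}$), and set $\nu^1=\nu^2$ to be the weight-$2$ vector supported on $\{2,3\}$ and $\nu^3=\nu^4=c+\nu^2$ (weight $4$). Then (i)--(iii) hold, and one checks directly that $\mathrm{wt}(\xi\cdot\gamma^1)=12$ and $\mathrm{wt}(\xi\cdot\gamma^i)=8$ for $2\le i\le 5$, so (iv) holds as well; yet $\mathrm{wt}(\xi\cdot\gamma^6)=\mathrm{wt}(\nu^2)+\mathrm{wt}(\nu^4)=6$, so $\xi\cdot\mathrm{RM}(1,6)$ is \emph{not} doubly even. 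Thus the ``moreover'' clause of the lemma is incorrect as stated: condition (iv) must be extended to $i=6,7$, or equivalently one must additionally require $\mathrm{wt}(\nu^2)\equiv\mathrm{wt}(\nu^3)\equiv\mathrm{wt}(\nu^4)\pmod 4$. This does not affect the application in Lemma~\ref{lem:6}, where $\nu^1=\nu^2=\nu^3=\alpha$ has weight $6$ and $\nu^4=\alpha^c$ has weight $10$, so $\mathrm{wt}(\xi\cdot\gamma^6)=\mathrm{wt}(\xi\cdot\gamma^7)=16$ and the extra condition is satisfied anyway.
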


\proof
First, we prove that $\xi\cdot \mathrm{RM}(1,6)$ is a subcode of 
$\mathrm{RM}(4,6)=\mathrm{RM}(1,6)^\perp$ if and only if $\xi$ satisfies 
the conditions (i), (ii) and (iii).
Let $\alpha$, $\beta\in \mathrm{RM}(1,6)$.
We have $(\xi\cdot \alpha |\beta)=(\xi |\alpha\cdot \beta)$ 
so that $\xi\cdot \mathrm{RM}(1,6)$ is a subcode of $\mathrm{RM}(1,6)^\perp$ if and only if 
$(\xi| \gamma^i\cdot \gamma^j)=0$ for $1\leq i\leq j\leq 7$.
For $1\leq i,j\leq 5$, we have 
\[
\begin{array}{ll}
  (\xi| \gamma^i\cdot \gamma^j)
  &=((\nu^1,\nu^2,\nu^3,\nu^4)|(\alpha^i\cdot \alpha^j,\alpha^i\cdot \alpha^j,
  \alpha^i\cdot \alpha^j, \alpha^i\cdot \alpha^j)
  \medskip\\
  &= (\nu^1+\nu^2+\nu^3+\nu^4| \alpha^i\cdot \alpha^j)
\end{array}
\]
so that $(\xi| \gamma^i\cdot \gamma^j)=0$ for $1\leq i,j\leq 5$ if and only if 
$\nu^1+\nu^2+\nu^3+\nu^4\in (\mathrm{RM}(1,4)\cdot \mathrm{RM}(1,4))^\perp
=\mathrm{RM}(2,4)^\perp=\mathrm{RM}(1,4)$.
Thus we obtain the condition (i).
Similarly, from $(\xi| \gamma^i\cdot \gamma^j)=0$ for $1\leq i\leq 5$ and $j=6$, $7$ 
we obtain $\nu^2+\nu^4$, $\nu^3+\nu^4\in \mathrm{RM}(1,4)^\perp$.
Since $\mathrm{RM}(1,4)\subset \mathrm{RM}(2,4)$, it follows that 
$\nu^i+\nu^j\in \mathrm{RM}(1,4)^\perp$ for $1\leq i<j\leq 4$ and we obtain the condition (ii).
Also, from $(\xi| \gamma^i\cdot \gamma^j)=0$ for $6\leq i,j\leq 7$ we obtain the condition (iii).
Thus $\xi\cdot \mathrm{RM}(1,6)$ is a subcode of $\mathrm{RM}(4,6)$ if and only if 
$\xi$ satisfies the conditions (i)--(iii).

Now suppose $\xi\cdot \mathrm{RM}(1,6)$ is a subcode of $\mathrm{RM}(4,6)$.
As we have discussed, this is equivalent to that 
$(\xi\cdot \gamma^i | \xi\cdot \gamma^j)=(\xi| \gamma^i\cdot \gamma^j)=0$ for $1\leq i,j\leq 7$ 
so that $\xi\cdot \mathrm{RM}(1,6)$ is self-orthogonal.
Since a sum of mutually orthogonal doubly even codewords is again doubly even, 
$\xi\cdot \mathrm{RM}(1,6)$ is doubly even if and only if all seven vectors $\xi\cdot \gamma^i$, 
$1\leq i\leq 7$, are doubly even.
It follows from the condition (iii) that $\xi\cdot \gamma^6$ and $\xi\cdot\gamma^7$ are doubly even.
Therefore, $\xi\cdot \mathrm{RM}(1,6)$ is doubly even if and only if $\xi$ satisfies 
the condition (iv).
\qed

\begin{lem}\label{lem:6}
Let $\alpha$ be a weight 6 codeword of $\mathrm{RM}(2,4)$. 
Then the codeword 
\[
  \xi=(\alpha,\alpha,\alpha,\alpha^c)\in \Z_2^{64}
\]
satisfies the conditions \textup{(i)--(iv)} in Lemma \ref{lem:5}, 
where $\alpha^c=\mathbf{1}_{16}+\alpha$.
The weight enumerator of the coset $\xi+\mathrm{RM}(1,6)$ is 
$64x^{28}+64x^{36}$.
\end{lem}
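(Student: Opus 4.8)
The plan is to verify the four conditions of Lemma~\ref{lem:5} by direct substitution, and then to compute the coset weight enumerator by using the block presentation of $\mathrm{RM}(1,6)$ from \eqref{eq:6} to reduce everything to a weight count inside $\Z_2^{16}$.

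For (i)--(iii) I would just plug in $\nu^1=\nu^2=\nu^3=\alpha$ and $\nu^4=\alpha^c=\mathbf{1}_{16}+\alpha$: the sum $\nu^1+\nu^2+\nu^3+\nu^4=\mathbf{1}_{16}\in\mathrm{RM}(1,4)$ gives (i); every pairwise sum $\nu^i+\nu^j$ is either $0$ or $\mathbf{1}_{16}$, both lying in $\mathrm{RM}(1,4)\subseteq \mathrm{RM}(1,4)^\perp=\mathrm{RM}(2,4)$, which gives (ii); and $\mathrm{wt}(\alpha)=6$, $\mathrm{wt}(\alpha^c)=10$ are even, giving (iii). For (iv), note $\xi\cdot\gamma^i=(\alpha\cdot\alpha^i,\alpha\cdot\alpha^i,\alpha\cdot\alpha^i,\alpha^c\cdot\alpha^i)$, and since $\mathrm{supp}(\alpha\cdot\alpha^i)\subseteq\mathrm{supp}(\alpha^i)$ we get $\mathrm{wt}(\alpha^c\cdot\alpha^i)=\mathrm{wt}(\alpha^i)-\mathrm{wt}(\alpha\cdot\alpha^i)$, hence $\mathrm{wt}(\xi\cdot\gamma^i)=2\,\mathrm{wt}(\alpha\cdot\alpha^i)+\mathrm{wt}(\alpha^i)$. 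Each generator $\alpha^i$ has weight $16$ or $8$, so $\mathrm{wt}(\alpha^i)\equiv 0\pmod 4$; and since $\alpha\in\mathrm{RM}(2,4)=\mathrm{RM}(1,4)^\perp$ the overlap $\mathrm{wt}(\alpha\cdot\alpha^i)=(\alpha|\alpha^i)=0$ is even, so $2\,\mathrm{wt}(\alpha\cdot\alpha^i)\equiv 0\pmod 4$; thus $\xi\cdot\gamma^i$ is doubly even and (iv) holds.

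For the weight enumerator, first record $\mathrm{wt}(\xi)=3\cdot 6+10=28$. By \eqref{eq:6} every $c\in\mathrm{RM}(1,6)$ has the form $c=(\beta,\,\beta+\epsilon_6\mathbf{1}_{16},\,\beta+\epsilon_7\mathbf{1}_{16},\,\beta+(\epsilon_6+\epsilon_7)\mathbf{1}_{16})$ with $\beta\in\mathrm{RM}(1,4)$ and $\epsilon_6,\epsilon_7\in\Z_2$. Setting $\delta=\alpha+\beta$, which ranges over the coset $D:=\alpha+\mathrm{RM}(1,4)$, a short check over the four values of $(\epsilon_6,\epsilon_7)$ (using that $D$ is invariant under adding $\mathbf{1}_{16}$, so $\mathrm{wt}(\delta+\mathbf{1}_{16})=16-\mathrm{wt}(\delta)$) shows that $\mathrm{wt}(\xi+c)=16+2\,\mathrm{wt}(\delta)$ for three of the four choices and $48-2\,\mathrm{wt}(\delta)=16+2\,\mathrm{wt}(\delta+\mathbf{1}_{16})$ for the remaining one. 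Since $\delta\mapsto\delta+\mathbf{1}_{16}$ is a bijection of $D$, the coset weight enumerator equals $4\sum_{\delta\in D}x^{16+2\,\mathrm{wt}(\delta)}$.

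It then remains to show $D=\alpha+\mathrm{RM}(1,4)$ has exactly $16$ words of weight $6$ and $16$ of weight $10$; this is the one genuinely non-mechanical point. Here I would use that $D\subseteq\mathrm{RM}(2,4)$ (as $\alpha\in\mathrm{RM}(2,4)$ and $\mathrm{RM}(1,4)\subseteq\mathrm{RM}(2,4)$), that $\mathrm{RM}(2,4)$ has minimum weight $4$, that every codeword of $\mathrm{RM}(1,4)$ has weight divisible by $8$, and that $\alpha\perp\mathrm{RM}(1,4)$. The last two give $\mathrm{wt}(\alpha+u)=6+\mathrm{wt}(u)-2\,\mathrm{wt}(\alpha\cdot u)\equiv 2\pmod 4$ for $u\in\mathrm{RM}(1,4)$, and since $0\notin D$ and $\mathrm{RM}(2,4)$ has minimum weight $4$, every weight in $D$ lies in $\{6,10,14\}$; a weight-$14$ word $v$ would force $v+\mathbf{1}_{16}\in D$ to have weight $2$, which is impossible, so all weights are $6$ or $10$ and the $\mathbf{1}_{16}$-pairing gives $16$ of each. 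Substituting into $4\sum_{\delta\in D}x^{16+2\,\mathrm{wt}(\delta)}$ yields $64x^{28}+64x^{36}$. The step most prone to slips is the four-case bookkeeping for $\mathrm{wt}(\xi+c)$, but it is routine.
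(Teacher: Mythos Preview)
Your argument is correct and complete; the paper itself states Lemma~\ref{lem:6} without proof, so there is nothing to compare against. One small notational slip in your verification of (iv): you write ``$\mathrm{wt}(\alpha\cdot\alpha^i)=(\alpha|\alpha^i)=0$ is even'', but of course $\mathrm{wt}(\alpha\cdot\alpha^i)$ need not vanish as an integer---what you mean (and use) is that $(\alpha|\alpha^i)\equiv\mathrm{wt}(\alpha\cdot\alpha^i)\equiv 0\pmod 2$ since $\alpha\in\mathrm{RM}(1,4)^\perp$, so $2\,\mathrm{wt}(\alpha\cdot\alpha^i)\equiv 0\pmod 4$.
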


\paragraph{\textbf{Proof of Theorem \ref{thm:1}:}}

Let $V=\oplus_{\alpha\in \mathrm{RM}(1,6)}V^\alpha$ be a holomorphic framed VOA with structure codes 
$(\mathrm{RM}(4,6),\mathrm{RM}(1,6))$.
Then $V\cong \tilde{V}_{\mathrm{BW}_{32}}$ by Theorem \ref{thm:4}.
Let 
$$
\alpha=(0110~1100~1010~0000)\in \mathrm{RM}(2,4)
$$
and set
\begin{equation}\label{eq:3.5}
  \xi=(\alpha,\alpha,\alpha,\alpha^c)\in \Z_2^{64}.
\end{equation}
Let $X$ be an irreducible $L(\sfr{1}{2},0)^{\otimes 64}$-module 
isomorphic to $L(\sfr{1}{2},\sfr{1}{16})^{\otimes 28}\otimes L(\sfr{1}{2},0)^{\otimes 36}$ such that
its $1/16$-word is $\xi$.
By Theorem \ref{thm:3} and Lemmas \ref{lem:5}, \ref{lem:6}, there is an involution 
$\theta\in \mathrm{Aut}(V)$ such that $\theta|_{V^0}=\sigma_\xi$ and 
the irreducible $\theta$-twisted $V$-module $W$ has top weight $28/16=7/4$.
\qed

\baselineskip 14pt

\newpage

\appendix

\section{Data}\label{sec: A data}

As in \cite{TenerWang17} we can compute potential characters vectors for each of the allowed genera of Theorem \ref{thm: main theorem}.
The realizations labelled \cite{GHM} arise as cosets of affine VOAs.
{\small
\begin{tablenofloat}
$$
\renewcommand*{\arraystretch}{1.1}
\begin{array}{|c|c|c|c|}
\hline
\cC & c & \mbox{Realization} & \mbox{Character}\\
\hline
\Semion & 1 & A_{1,1} &
q^{-1/24}\left(
\begin{array}{l}
1 + 3q + 4q^2 + \cdots\\
q^{\frac{1}{4}}(2 + 2q + 6q^2 + \cdots)
\end{array}
\right)\\
\hline
\Semion & 9 & A_{1,1} \otimes E_{8,1} &
q^{-9/24}\left(
\begin{array}{l}
1 + 251q + 4872q^2 + \cdots\\
q^{\frac{1}{4}}(2 + 498q + 8750q^2 + \cdots)
\end{array}
\right)\\
\hline
\Semion & 17 & \mbox{\cite{GHM}} &
q^{-17/24}\left(
\begin{array}{l}
1 + 323q + 60860q^2 + \cdots\\
q^{\frac{5}{4}}(1632 + 162656q + 4681120q^2 + \cdots)
\end{array}
\right)\\
\hline
\Semion & 33 & \S \ref{sec: new VOA} &
q^{-33/24}\left(
\begin{array}{l}
1 + 3q + 86004q^2 + \cdots\\
q^{\frac{9}{4}}(565760 + 192053760q + \cdots)
\end{array}
\right)\\
\hline
\overline{\Semion} & 7 & E_{7,1} &
q^{-7/24}\left(
\begin{array}{l}
1 + 133q + 1673q^2 + \cdots\\
q^{\frac{3}{4}}(56 + 968q + 7504q^2 + \cdots)
\end{array}
\right)\\
\hline
\overline{\Semion} & 15 & E_{7,1} \otimes E_{8,1} &
q^{-15/24}\left(
\begin{array}{l}
1 + 381q + 38781q^2 + \cdots\\
q^{\frac{3}{4}}(56 + 14856q + 478512q^2 + \cdots)
\end{array}
\right)\\
\hline
\overline{\Semion} & 23 & \mbox{\cite{GHM}} &
q^{-23/24}\left(
\begin{array}{l}
1 + 69q + 131905q^2 + \cdots\\
q^{\frac{7}{4}}(32384 + 4418944q + 189846784q^2 + \cdots)
\end{array}
\right)\\
\hline
\Fib & \frac{14}{5} & G_{2,1} &
q^{-7/60}\left(
\begin{array}{l}
1 + 14q + 42q^2 + \cdots\\
q^{\frac{2}{5}}(7 + 34q + 119q^2 + \cdots)
\end{array}
\right)\\
\hline
\Fib & \frac{54}{5} & G_{2,1} \otimes E_{8,1} &
q^{-27/60}\left(
\begin{array}{l}
1 + 262q + 7638q^2 + \cdots\\
q^{\frac{2}{5}}(7 + 1770q + 37419q^2 + \cdots)
\end{array}
\right)\\
\hline
\Fib & \frac{94}{5} & \mbox{\cite{GHM}} &
q^{-47/60}\left(
\begin{array}{l}
1 + 188q + 62087q^2 + \cdots\\
q^{\frac{7}{5}}(4794 + 532134q + 17518686q^2 + \cdots)
\end{array}
\right)\\
\hline
\overline{\Fib} & \frac{26}{5} & F_{4,1} &
q^{-13/60}\left(
\begin{array}{l}
1 + 52q + 377q^2 + \cdots\\
q^{\frac{3}{5}}(26 + 299q + 1702q^2 + \cdots)
\end{array}
\right)\\
\hline
\overline{\Fib} & \frac{66}{5} & F_{4,1} \otimes E_{8,1} &
q^{-33/60}\left(
\begin{array}{l}
1 + 300q + 17397q^2 + \cdots\\
q^{\frac{3}{5}}(26 + 6747q + 183078q^2 + \cdots)
\end{array}
\right)\\
\hline
\overline{\Fib} & \frac{106}{5} & \mbox{\cite{GHM}} &
q^{-33/60}\left(
\begin{array}{l}
1 + 106q + 84429q^2 +\cdots\\
q^{\frac{8}{5}}(15847 + 1991846q + 76895739q^2 + \cdots)
\end{array}
\right)\\
\hline
\LeeYang & -\frac{22}{5} & \LeeYang &
q^{11/60}\left(
\begin{array}{l}
1 + 0q + q^2 +\cdots\\
q^{-\frac{1}{5}}(1 + q + q^2 + \cdots)
\end{array}
\right)\\ 
\hline
\LeeYang & \frac{18}{5} & \operatorname{Y-L} \otimes E_{8,1}&
q^{-3/20}\left(
\begin{array}{l}
1 + 248q + 4125q^2 +\cdots\\
q^{-\frac{1}{5}}(1 + 249q + 4373q^2 + \cdots)
\end{array}
\right)\\ 
\hline
\end{array}
$$
\smallskip
\caption{Characters of strongly rational extremal VOAs with two simple modules}
\label{tab: characters}
\end{tablenofloat}
}

\newpage

The following table (which was used in the proof of Theorem \ref{thm: cmax}) has 24 rows, in groups of 3.
Each group corresponds to a modular tensor category from Table \ref{tab: 8 cases}, and each row within the group selects a representative of an equivalence class of admissible $c$ mod $24$.
The table provides the characteristic matrix (computed as in \cite{TenerWang17}), the extremal conformal dimension $h_{ext}(c)$ computed from the definition \eqref{eqn: extremality}, and a value $n_{max}$ such that $\chi(c+24n)_{00} < 0$ when $n > n_{max}$, as computed using Lemma \ref{lem: positive c recurrence}.

{\small
\renewcommand*{\arraystretch}{1.1}
\begin{longtable}{|c|c|c|c|c|c|}
\hline
$\#$ & $\cC$ & $c$ & $\chi(c)$ & $h_{ext}(c)$ & $n_{max}$ \endhead
\hline
$1$  & $\,\,\Semion \,\,$ & $\,\, 1 \,\,$ & $\begin{pmatrix} 3 & 26752\\ 2 & -247\end{pmatrix}$ & $\frac14$ & $0$\\
\hline
$1$  & $\Semion$  & $9$ & $\begin{pmatrix} 251 & 26752\\ 2 & 1\end{pmatrix}$ & $\frac14$ & $2$ \\
\hline
$1$  & $\Semion$ & $17$ & $\begin{pmatrix} 323 & 88\\ 1632 & -319\end{pmatrix}$ & $\frac54$ & $0$ \\
\hline
$2$ & $\overline{\Semion}$ & $7$ & $\begin{pmatrix} 133 & 1248\\ 56 & -377\end{pmatrix}$ & $\frac34$ & $0$ \\
\hline
$2$ & $\overline{\Semion}$ & $15$ & $\begin{pmatrix} 381 & 1248\\ 56 & -129\end{pmatrix}$ & $\frac34$ & $1$ \\
\hline
$2$ & $\overline{\Semion}$ & $23$ & $\begin{pmatrix} 69 & 10\\ 32384 & -65\end{pmatrix}$ & $\frac74$ & $0$ \\
\hline
$3$ & $\Semion^\dagger$ & $11$ & $\begin{pmatrix} -319 & 1632\\ 88 & 323\end{pmatrix}$ & $\frac34$ & $0$ \\
\hline
$3$ & $\Semion^\dagger$ & $19$ & $\begin{pmatrix} -247 & 2\\ 26752 & 3\end{pmatrix}$ & $\frac74$ & $2$ \\
\hline
$3$ & $\Semion^\dagger$ & $27$ & $\begin{pmatrix} 1 & 2\\ 26752 & 251\end{pmatrix}$ & $\frac74$ & $0$ \\
\hline
$4$ & $\overline{\Semion}^\dagger$ & $5$ & $\begin{pmatrix} -65 & 32384\\ 10 & 69\end{pmatrix}$ & $\frac14$ & $0$ \\
\hline
$4$ & $\overline{\Semion}^\dagger$ & $13$ & $\begin{pmatrix} -377 & 56\\ 1248 & 133\end{pmatrix}$ & $\frac54$ & $1$ \\
\hline
$4$ & $\overline{\Semion}^\dagger$ & $21$ & $\begin{pmatrix} -129 & 56\\ 1248 & 381\end{pmatrix}$ & $\frac54$ & $0$ \\
\hline
$5$ & $\Fib$ & $\frac{14}5$ & $\begin{pmatrix} 14 & 12857\\ 7 & -258\end{pmatrix}$ & $\frac25$ & $0$ \\
\hline
$5$ & $\Fib$ & $\frac{54}5$ & $\begin{pmatrix} 262 & 12857\\ 7 & -10\end{pmatrix}$ & $\frac25$ & $1$ \\
\hline
$5$ & $\Fib$ & $\frac{94}5$ & $\begin{pmatrix} 188 & 46\\ 4794 & -184\end{pmatrix}$ & $\frac75$ & $0$ \\
\hline
$6$ & $\overline{\Fib}$ & $\frac{26}5$ & $\begin{pmatrix} 52 & 3774\\ 26 & -296\end{pmatrix}$ & $\frac35$ & $0$ \\
\hline
$6$ & $\overline{\Fib}$ & $\frac{66}5$ & $\begin{pmatrix} 300 & 3774\\ 26 & -48\end{pmatrix}$ & $\frac35$ & $1$ \\
\hline
$6$ & $\overline{\Fib}$ & $\frac{106}5$ & $\begin{pmatrix} 106 & 17\\ 15847 & -102\end{pmatrix}$ & $\frac85$ & $0$ \\
\hline
$7$ & $\LeeYang$ & $\frac{58}5$ & $\begin{pmatrix} -406 & 902\\ 87 & 410\end{pmatrix}$ & $\frac45$ & $0$ \\
\hline
$7$ & $\LeeYang$ & $\frac{98}5$ & $\begin{pmatrix} -245 & 1\\ 26999 & 1\end{pmatrix}$ & $\frac95$ & $2$ \\
\hline
$7$ & $\LeeYang$ & $\frac{138}5$ & $\begin{pmatrix} 3 & 1\\ 26999 & 249\end{pmatrix}$ & $\frac95$ & $0$ \\
\hline
$8$ & $\overline{\LeeYang}$ & $\frac{22}5$ & $\begin{pmatrix} -55 & 32509\\ 11 & 59\end{pmatrix}$ & $\frac15$ & $1$ \\
\hline
$8$ & $\overline{\LeeYang}$ & $\frac{62}5$ & $\begin{pmatrix} -434 & 57\\ 682 & 190\end{pmatrix}$ & $\frac65$ & $1$ \\
\hline
$8$ & $\overline{\LeeYang}$ & $\frac{102}5$ & $\begin{pmatrix} -186 & 57\\ 682 & 438\end{pmatrix}$ & $\frac65$ & $1$\\
\hline
\caption{Values of $n_{max}$ computed from Lemma \ref{lem: positive c recurrence}}
\label{tab: nmax}
\end{longtable}
}

The following table (which was used in the proof of Theorem \ref{thm: cmin}) again has 24 rows, in the same groups of 3.
Each group corresponds to a modular tensor category from Table \ref{tab: 8 cases}, and each row within the group selects a representative of an equivalence class of admissible $c$ mod $24$.
The table provides the characteristic matrix (computed as in \cite{TenerWang17}), the extremal conformal dimension $h_{ext}(c)$ computed from the definition \eqref{eqn: extremality}, $\beta(c)$ and $\alpha(c)$ computed directly from the characteristic matrix.
Using this data, we apply Lemma \ref{lem: negative c bound} to obtain a value $n_{max}$ such that $\abs{\beta(c-24n)} > 1$ when $n > n_{max}$.
In fact, $n_{max} = 0$ in all cases.

{\small
\renewcommand*{\arraystretch}{1.35}
\begin{longtable}{|c|c|c|c|c|c|c|c|c|}
\hline
$\#$ & $\cC$ & $c$ & $\chi(c)$ & $h_{ext}(c)$ & $\alpha(c)$ & $\beta(c)$ & $n_{max}$ & $\chi_{10}(c)$ \endhead
\hline
$1$ & $\,\,\Semion\,\,$ & $\,\,-7\,\,$ & $\begin{pmatrix} 59 & 13424640\\ \frac{1}{88} & -55 \end{pmatrix}$ & $-\frac34$ & $114$ & $\frac{1678080}{11}$ & $0$ & $\frac1{88}$ \\
\hline
$1$ & $\Semion$  & $-15$ & $\begin{pmatrix} \frac{3441}{11} & \frac{57264144384}{11}\\ \frac{1}{26752} & -\frac{669}{11} \end{pmatrix}$ & $-\frac74$ & $\frac{4110}{11}$ & $\frac{23546112}{121}$ & $0$ & $\frac1{26752}$ \\
\hline
$1$ & $\Semion$ & $-23$ & $\begin{pmatrix} \frac{713}{11} & \frac{57264144384}{11}\\ \frac{1}{26752} & -\frac{3397}{11} \end{pmatrix}$ & $-\frac74$ & $\frac{4110}{11}$ & $\frac{23546112}{121}$ & $0$ & $\frac1{26752}$ \\
\hline
$2$ & $\overline{\Semion}$ & $-1$ & $\begin{pmatrix} \frac{49}5 & \frac{3281408}5\\ \frac1{10} & \frac{-29}5\end{pmatrix}$ & $\frac{-1}4$ & $\frac{78}5$ & $\frac{1640704}{25}$ & $0$ & $\frac1{10}$ \\
\hline
$2$ & $\overline{\Semion}$ & $-9$ & $\begin{pmatrix} \frac{863}3 & \frac{747151360}3\\ \frac1{1248} & \frac{-107}3\end{pmatrix}$ & $\frac{-5}4$ & $\frac{970}3$ & $\frac{23348480}{117}$ & $0$ & $\frac1{1248}$ \\
\hline
$2$ & $\overline{\Semion}$ & $-17$ & $\begin{pmatrix} \frac{119}3 & \frac{747151360}3\\ \frac1{1248} & \frac{-851}3\end{pmatrix}$ & $\frac{-5}4$ & $\frac{970}3$ & $\frac{23348480}{117}$ & $0$ & $\frac1{1248}$ \\
\hline
$3$ & $\Semion^\dagger$ & $3$ & $\begin{pmatrix} 249 & 565760\\ \frac12 & 3\end{pmatrix}$ & $\frac{-1}4$ & $246$ & $282880$ & $0$ & $\frac12$ \\
\hline
$3$ & $\Semion^\dagger$ & $-5$ & $\begin{pmatrix} 1 & 565760\\ \frac12 & -245\end{pmatrix}$ & $\frac{-1}4$ & $246$ & $282880$ & $0$ & $\frac12$ \\
\hline
$3$ & $\Semion^\dagger$ & $-13$ & $\begin{pmatrix} \frac{299}3 & \frac{827924480}3\\ \frac1{1632} & \frac{-287}3\end{pmatrix}$ & $\frac{-5}4$ & $\frac{586}3$ & $\frac{1521920}9$ & $0$ & $\frac1{1632}$ \\
\hline
$4$ & $\overline{\Semion}^\dagger$ & $-3$ & $\begin{pmatrix} \frac{1857}7 & \frac{83232768}7\\ \frac1{56} & \frac{-93}7\end{pmatrix}$ & $\frac{-3}4$ & $\frac{1950}7$ & $\frac{10404096}{49}$ & $0$ & $\frac1{56}$ \\
\hline
$4$ & $\overline{\Semion}^\dagger$ & $-11$ & $\begin{pmatrix} \frac{121}7 & \frac{827924480}3\\ \frac1{1632} & \frac{-1829}7\end{pmatrix}$ & $\frac{-3}4$ & $\frac{1950}7$ & $\frac{10404096}{49}$ & $0$ & $\frac1{56}$ \\
\hline
$4$ & $\overline{\Semion}^\dagger$ & $-19$ & $\begin{pmatrix} \frac{1501}{11} & \frac{62591041536}{11}\\ \frac1{32384} & \frac{-1457}{11}\end{pmatrix}$ & $\frac{-7}4$ & $\frac{2958}{11}$ & $\frac{21260544}{121}$ & $0$ & $\frac1{32384}$ \\
\hline
$5$ & $\Fib$ & $\frac{-26}5$ & $\begin{pmatrix} \frac{91}2 & \frac{13051833}2\\ \frac1{46} & \frac{-83}2\end{pmatrix}$ & $\frac{-8}5$ & $87$ & $\frac{567471}4$ & $0$ & $\frac1{46}$ \\
\hline
$5$ & $\Fib$ & $\frac{-66}5$ & $\begin{pmatrix} \frac{3966}{13} & \frac{32712244109}{13}\\ \frac1{12857} & \frac{-690}{13}\end{pmatrix}$ & $\frac{-8}5$ & $\frac{4656}{13}$ & $\frac{33076081}{169}$ & $0$ & $\frac1{12857}$ \\
\hline
$5$ & $\Fib$ & $\frac{-106}5$ & $\begin{pmatrix} \frac{742}{13} & \frac{32712244109}{13}\\ \frac1{12857} & \frac{-3914}{13}\end{pmatrix}$ & $\frac{-8}5$ & $\frac{4656}{13}$ & $\frac{33076081}{169}$ & $0$ & $\frac1{12857}$ \\
\hline
$6$ & $\overline{\Fib}$ & $\frac{-14}5$ & $\begin{pmatrix} 26 & 1951158\\ \frac1{17} & -22\end{pmatrix}$ & $\frac{-2}5$ & $48$ & $114774$ & $0$ & $\frac1{17}$ \\
\hline
$6$ & $\overline{\Fib}$ & $\frac{-54}5$ & $\begin{pmatrix} 295 & 745916226\\ \frac1{3774} & -43\end{pmatrix}$ & $\frac{-7}5$ & $338$ & $\frac{3359983}{17}$ & $0$ & $\frac1{3774}$ \\
\hline
$6$ & $\overline{\Fib}$ & $\frac{-94}5$ & $\begin{pmatrix} 47 & 745916226\\ \frac1{3774} & -291\end{pmatrix}$ & $\frac{-7}5$ & $338$ & $\frac{3359983}{17}$ & $0$ & $\frac1{3774}$ \\
\hline
$7$ & $\LeeYang$ & $\frac{18}5$ & $\begin{pmatrix} 248 & 310124\\ 1 & 4\end{pmatrix}$ & $\frac{-1}5$ & $244$ & $310124$ & $0$ & $1$ \\
\hline
$7$ & $\LeeYang$ & $\frac{-22}5$ & $\begin{pmatrix} 0 & 310124\\ 1 & -244\end{pmatrix}$ & $\frac{-1}5$ & $244$ & $310124$ & $0$ & $1$ \\
\hline
$7$ & $\LeeYang$ & $\frac{-62}5$ & $\begin{pmatrix} \frac{1054}{11} & \frac{1667924403}{11}\\ \frac1{902} & \frac{-1010}{11}\end{pmatrix}$ & $\frac{-6}5$ & $\frac{2064}{11}$ & $\frac{40681083}{242}$ & $0$ & $\frac1{902}$ \\
\hline
$8$ & $\overline{\LeeYang}$ & $\frac{-18}5$ & $\begin{pmatrix} \frac{802}3 & \frac{35954954}3\\ \frac1{57} & \frac{-46}3\end{pmatrix}$ & $\frac{-4}5$ & $\frac{848}3$ & $\frac{1892366}9$ & $0$ & $\frac1{57}$ \\
\hline
$8$ & $\overline{\LeeYang}$ & $\frac{-58}5$ & $\begin{pmatrix} \frac{58}3 & \frac{35954954}3\\ \frac1{57} & \frac{-790}3\end{pmatrix}$ & $\frac{-4}5$ & $\frac{848}3$ & $\frac{1892366}9$ & $0$ & $\frac1{57}$ \\
\hline
$8$ & $\overline{\LeeYang}$ & $\frac{-98}5$ & $\begin{pmatrix} 140 & 5726299516\\ \frac1{323509} & -136\end{pmatrix}$ & $\frac{-9}5$ & $276$ & $\frac{3346756}{19}$ & $0$ & $\frac1{323509}$ \\
\hline

\caption{Values of $n_{max}$ computed from Lemma \ref{lem: negative c bound}}
\label{tab: nmax negative}
\end{longtable}
\renewcommand*{\arraystretch}{1}
}

\newpage


\end{document}